\documentclass{article}

    \PassOptionsToPackage{numbers, compress}{natbib}
 \usepackage[preprint]{neurips_2026}

\usepackage[utf8]{inputenc} 
\usepackage[T1]{fontenc}    
\usepackage{hyperref}       
\usepackage{url}            
\usepackage{booktabs}       
\usepackage{amsfonts}       
\usepackage{nicefrac}       
\usepackage{microtype}      
\usepackage{xcolor}         

\usepackage{bm}
\usepackage[font=small,labelfont=bf]{caption}
\usepackage{thm-restate}
\usepackage{amsmath}
\usepackage{mathrsfs}
\usepackage{xspace}
\usepackage{enumitem}
\usepackage{amsthm}

\usepackage{nicefrac}       
\usepackage{microtype}      
\usepackage{xcolor}         
\usepackage{amssymb}
\usepackage{esvect}
\usepackage{amsmath}
\usepackage{mathtools}
\usepackage{cleveref}
\usepackage{tikz}
\usepackage{pgfplots}
\usepackage{comment}
\usepackage{dsfont}
\usepackage{subcaption}
\usetikzlibrary{decorations.pathreplacing}

\usepackage{algorithm}
\usepackage[noend]{algpseudocode}

\definecolor{mygreen}{rgb}{0.0, 0.5, 0.0}
\definecolor{myorange}{rgb}{0.55, 0.62, 1}

\definecolor{niceRed}{RGB}{190,38,38}
\definecolor{Red2}{RGB}{219, 50, 54}
\definecolor{mgreen}{RGB}{160, 200, 140}
\definecolor{blueGrotto}{RGB}{5,157,192}
\definecolor{limeGreen}{HTML}{81B622}
\definecolor{myellow}{rgb}{0.88,0.61,0.14}
\definecolor{darkGreen}{HTML}{2E8B57}
\definecolor{navyBlueP}{HTML}{03468F}
\definecolor{Sepia}{HTML}{7F462C}
\definecolor{red2}{HTML}{1F462C}
\definecolor{orange2}{HTML}{FF8000}
\definecolor{mgray}{HTML}{ABB3B8}
\definecolor{lgray}{HTML}{E5E8E9}
\definecolor{myPurple}{RGB}{175,0,124}
\definecolor{mypurple2}{rgb}{0.8,0.62,1}
\definecolor{royalBlue}{HTML}{057DCD}
\definecolor{mpink}{HTML}{FC6C85}
\definecolor{lblue}{RGB}{74,144,226}
\definecolor{peagreen}{RGB}{152,193,39}
\definecolor{typ_navy}{HTML}{001f3f}
\definecolor{typ_blue}{HTML}{0074d9}
\definecolor{typ_aqua}{HTML}{7fdbff}
\definecolor{typ_teal}{HTML}{39cccc}
\definecolor{typ_eastern}{HTML}{239dad}
\definecolor{typ_purple}{HTML}{b10dc9}
\definecolor{typ_fuchsia}{HTML}{f012be}
\definecolor{typ_maroon}{HTML}{85144b}
\definecolor{typ_red}{HTML}{ff4136}
\definecolor{typ_orange}{HTML}{ff851b}
\definecolor{typ_yellow}{HTML}{ffdc00}
\definecolor{typ_olive}{HTML}{3d9970}
\definecolor{typ_green}{HTML}{2ecc40}
\definecolor{typ_lime}{HTML}{01ff70}
\definecolor{newgreen}{HTML}{83c702}
\definecolor{newpurp}{RGB}{97,96,121}
\definecolor{Andrea}{RGB}{204, 0, 153}

\hypersetup{
	colorlinks = true,
	linkcolor = typ_blue,
	citecolor = darkGreen,
	linktocpage = true,
	urlcolor = darkGreen
}

\newtheorem{lemma}{Lemma}
\newtheorem{remark}{Remark}
\newtheorem{claim}{Claim}

\newtheorem{theorem}{Theorem}

\newcommand{\E}{\mathbb{E}}

\newcommand{\Reals}{\mathbb{R}}
\newcommand{\Naturals}{\mathbb{N}}

\newcommand{\argmin}{\operatornamewithlimits{argmin}}
\newcommand{\argmax}{\operatornamewithlimits{argmax}}

\newcommand{\Prob}{\mathbb{P}}

\newcommand{\xhdr}[1]{\vspace{1mm} \noindent{\bf #1.}\hspace{1mm}}

\makeatletter
\newcommand{\mapstoto}{\mathpalette\@mapstoto\relax}
\newcommand*{\@mapstoto}[2]{%
    \mathrel{%
      \vcenter{%
         \vbox{%
            \baselineskip\z@skip
            \lineskip\z@
            \ialign{##\cr$#1\mapstochar\varrightarrow$\cr
            $#1\mapstochar\varrightarrow$\cr}%
         }%
      }%
   }%
}
\makeatother

\newcommand{\opt}{\textup{\textsc{Opt}}}
\newcommand{\I}{\mathfrak{I}}

\newcommand{\ie}{\emph{i.e.},\xspace}
\newcommand{\eg}{\emph{e.g.},\xspace}

\newcommand{\DeltaF}{\|F_{a_1} - F_{a_2} \|_1}
\newcommand{\lazy}{\theta_{\textup{lazy}}}

\crefname{line}{line}{lines}
\Crefname{line}{Line}{Lines}

\title{Optimal Rates for Single-Dimensional Online Contract-Design with Binary Actions}
\title{Regret Minimization in Single-Dimensional Contract-Design with Binary Actions}

\author{%
  Riccardo Poiani \\
  Bocconi University \\
  \texttt{riccardo.poiani@unibocconi.it}\\
  \And
  Martino Bernasconi \\
  Bocconi University \\
  \texttt{martino.bernasconi@unibocconi.it} \\
  \AND
  Andrea Celli \\
  Bocconi University \\
  \texttt{andrea.celli2@unibocconi.it} \\
}

\begin{document}

\maketitle

\begin{abstract}

We study principal-agent problems in which a principal commits to an outcome-dependent payment scheme (\ie a \emph{contract}) in order to induce an agent to take a costly action leading to a favorable outcome. We consider the online extension of the classical (one-shot) principal-agent problem, in which the principal repeatedly interacts with agents by proposing contracts over multiple rounds. The principal has no information about the agents and, crucially, does not observe their actions. As a result, the principal must learn an optimal contract using only the realized outcomes observed at each round.
%
%
We focus on the setting with binary actions and single-dimensional agent types, where the agent's private type represents their cost per unit-of-effort.
For adversarial-type sequences, we provide tight $\Theta(T^{2/3})$ regret guarantees. Remarkably, this rate is completely independent of the number of outcomes $m$. The upper bound is based on two key components: 1) a reduction to a one-dimensional threshold optimization problem and 2) a non-uniform discretization to handle the non-Lipschitz nature of the problem. 
%
%
Moreover, in the case of a single (fixed) hidden type, we show that it is possible to improve the rates and provide a tight $\widetilde\Theta(\sqrt{T})$ regret bound. Our algorithm is based on an explore-then-commit strategy where we first approximately learn the hidden type via a stochastic binary search, and then we commit to a ``robustified'' near-optimal contract.
\end{abstract}

\section{Introduction}

Principal-agent problems model scenarios in which a principal seeks to influence the behavior of a self-interested agent capable of performing costly actions. The principal commits to an outcome-dependent payment rule (\ie a \emph{contract}), after which the agent selects one of the available actions. In the standard hidden-action model, the principal does not observe the agent's action directly, but only a stochastic outcome whose distribution depends on the chosen action and determines the principal's reward. The principal's objective is then to design a contract that incentivizes the agent to take actions leading to favorable outcomes. 
%
This model is central to modern economic theory and has recently given rise to a thriving body of research studying it through a computational lens. Such interest is motivated by applications like online labor platforms \cite{kaynar2023estimating}, strategic data provisioning \cite{cai2015optimum}, pay-for-performance healthcare \cite{bastani2016analysis,bastani2019evidence}, blockchain \cite{cong2019blockchain} and delegation to AI agents \cite{hadfield2019incomplete,saig2023delegated,saig2024incentivizing,velasco2025your,saig2026adaptive}. 

We study the generalization of the classical (single-round) model, in which the principal faces uncertainty about the agents' types and must learn an optimal contract from data. In particular, the principal repeatedly interacts with agents over multiple rounds. At each round, the principal commits to a new contract and observes the outcome produced by the agent's response. The goal of the principal is minimizing the regret with respect to the best contract in hindsight (\ie given knowledge of the sequence of agent types). 
The best known regret guarantees for general online contract design problems are nearly linear \citep{zhu2022online} or exponential in the instance size \citep{bacchiocchi2023learning}. Stronger sublinear-regret guarantees are known under additional structural assumptions, such as regularity conditions relating action costs and outcome distributions, bounded-density assumptions, or restrictions on the class of admissible contracts (see, e.g., \cite{zhu2022online,chen2024bounded,bernasconi2025single} and the discussion in \Cref{sec:related}). In this work, we ask whether there are natural settings in which one can obtain tight sublinear regret for general classes of contracts, \emph{without} imposing further assumptions on the type or outcome distributions.

Motivated by known hardness results for the Bayesian setting with multi-dimensional types \cite{castiglioni2022bayesian,guruganesh2021contracts,castiglioni2023designing}, we study the single-dimensional type setting introduced by \citet{alon2021contracts}, in which the agent's type is a scalar representing a cost per unit of effort (\ie a parameter that scales the cost of the chosen action). 
We consider the case in which agents have binary actions, so they can either exert effort or not. This is a canonical model in economics, capturing the standard ``work / shirk'' formulation of moral hazard (see, \eg \cite{ho2016adaptive,babaioff2012combinatorial,dutting2023multi,feldman2025combinatorial}).
A useful way to interpret this model is through a simple delegated-learning example in the spirit of \citet{saig2023delegated}, where a principal outsources the training of a classifier to a provider. The provider can either invest \emph{high effort} ($a_1$), for example by performing extensive data annotation and hyperparameter tuning, or training on a more expensive pipeline, or choose \emph{low effort} ($a_2$), corresponding to a cheaper baseline procedure. The provider's private type $\theta$ represents its cost per unit-of-effort, so that the cost of action $a$ is $\theta \cdot C_a$. After training, the principal does not observe which pipeline was actually used, but only a stochastic performance outcome (\eg validation accuracy), whose distribution depends on the chosen action. 
A second example is a crowdsourcing or data-labeling platform. In each round, a new worker arrives to complete a task, and the platform offers an outcome-contingent payment rule, for example, a base payment together with a bonus tied to an observable quality signal  (\eg agreement with other workers, expert review, or the performance of a downstream model trained on the collected data). The worker can either exert high effort, corresponding to careful completion of the task, or low effort, corresponding to a cheap but lower-quality submission. Again, the platform does not observe the worker's effort directly, but only the realized quality outcome, whose distribution depends on the chosen action. 
In both settings, a contract specifies a payment as a function of the realized performance outcome. In this binary-action setting, every contract trades off incentives and payments by determining which types find it worthwhile to choose the high-effort action, thereby inducing a threshold on the hidden type space. Learning a good contract then amounts to learning the right threshold from outcome-only feedback.


\subsection{Our Results and Techniques}

\xhdr{Main Results} We establish a tight $\Theta(T^{2/3})$ regret bound when the sequence of agent types is chosen adversarially (\Cref{sec:adversarial}). We then turn to the setting in which the agent type is fixed throughout the interaction (though it is hidden) and show that the optimal rate improves to $\widetilde{\Theta}(\sqrt{T})$ (\Cref{sec:single}). 

\xhdr{Key Features of Our Bounds} A notable feature of these results is that, unlike previous guarantees, they are completely independent of the number of outcomes $m$. This is particularly desirable in environments with large outcome spaces, where existing bounds may deteriorate substantially with $m$ (see, \eg the $\widetilde{O}(T^{1-1/(2m+1)})$ bound by \citet{zhu2022online}). Our guarantee for adversarial-type sequences also improves over the $\widetilde{O}(m^2 T^{4/5})$ bound of \citet{bacchiocchi2023learning} (specialized to the two-action setting), not only in its dependence on the horizon $T$, but also in its dependence on the instance parameters, since our rate does not scale with $m$ at all.
Moreover, our results are surprising if we view online contract design as a special case of repeated Stackelberg games with myopic agents \cite{letchford2009learning}. Indeed, in general, known hardness results show that there exist instances for which learning an optimal commitment requires a number of samples exponential in the number of actions of the leader (\ie the principal in our terminology) \cite{peng2019learning}. In contrast, we can exploit the specific structure of our problem to obtain sublinear regret rates, despite the principal having a continuous action space and observing only outcome feedback.
Finally, our bounds provide the first evidence that no-regret learning is possible in the single-dimensional-type setting without bounded-density assumptions, addressing a question recently posed by \citet{bernasconi2025single} in the special case of binary actions.

\xhdr{Techniques} We briefly highlight here the main technical components of our analysis, and defer a broader discussion to the later sections.
\begin{itemize}[leftmargin=1.5em]
\item The first key idea to obtain the $O(T^{2/3})$ upper bound is showing that, although contracts live in an $m$-dimensional space, the search for the best fixed contract can be reduced to a one-dimensional optimization problem over the expected payment gap between the two actions. The resulting one-dimensional objective is not $O(1)$-Lipschitz, so standard uniform discretization arguments fail. To overcome this, we develop a non-uniform discretization tailored to the geometry of the linear program that defines the cheapest contract implementing a given incentive level. This yields a discretization with only $O(\varepsilon^{-1})$ relevant points while preserving enough control on the discretization error to obtain the desired regret bound.

\item The lower bound shows that every algorithm suffers regret $\Omega(T^{2/3})$ under stochastic type generation with an absolutely continuous type distribution. The main challenges to obtain this result are the learner acts in a continuous contract space and the observed feedback does not coincide with the principal's utility. To overcome this, we construct a family of carefully perturbed continuous instances in which the utility improves only on a small interval of contracts while the induced outcome distributions remain hard to distinguish, and then apply a change-of-measure argument directly in the continuous action space.

\item Finally, in order to obtain the $\widetilde{O}(\sqrt{T})$ upper bound in the case of a single fixed type, we design an algorithm exploiting the binary-action structure to perform a stochastic binary search over the unknown threshold induced by the hidden type. The main challenge is to design exploratory contracts that are informative enough to localize the threshold while remaining close enough to optimal to keep regret under control. We address this by combining threshold-testing contracts with a robustification step in the exploitation phase.
\end{itemize}

\subsection{Related Work}\label{sec:related}

In the following, we review studies most directly related to our work. For a comprehensive overview of algorithmic contract theory, readers are referred to the recent survey by \citet{dutting2024algorithmic}.

\xhdr{Typed Agents} Contracts for typed agents have been studied in both multi- and single-dimensional models. In the multi-dimensional setting, where the costs and outcome distribution can depend arbitrarily on the type, a series of works showed that the problem of approximating optimal (deterministic) contracts is APX-hard \cite{castiglioni2022bayesian,guruganesh2021contracts,castiglioni2023designing}. The single-dimensional model admits stronger positive results. In particular, \citet{alon2021contracts,alon2023bayesian} show tractability for a constant number of actions, and also obtain positive approximation guarantees for simple contracts. Moreover, \citet{bernasconi2025single} showed that the problem admits an additive PTAS. However, for multiplicative approximations, the single-dimensional problem remains as hard as the general multi-dimensional case \cite{castiglioni2025reduction}. 

\xhdr{Online Contract Design} The problem of learning optimal contracts was introduced by \citet{ho2016adaptive} and later extended in \citet{cohen2022learning}.
%
%
The work most closely related to ours on learning optimal contracts in general principal-agent settings is \citet{zhu2022online}. They provide an algorithm guaranteeing cumulative regret of $\widetilde{O}(\sqrt{m}T^{1-1/(2m+1)})$. Moreover, they show that it is possible to learn linear contracts (which are a weaker class of contracts in which the payment is proportional to the principal's reward) incurring regret of order $\widetilde \Theta(T^{2/3})$. They also provide an almost-matching lower bound of $\Omega(T^{1-1/(2m+1)})$ for the case of general bounded contracts. Their lower-bound construction employs instances with a number of actions exponential in $m$. The question of whether it is possible to do better in settings with small action spaces was answered by \citet{bacchiocchi2023learning}, who showed that for bounded contracts it is possible to achieve regret $\widetilde{O}(m^n\,T^{4/5})$, which depends polynomially on the instance size if the number of actions $n$ is constant. 
Moreover, for linear contracts, \citet{bacchiocchi2025regret} showed that $\widetilde{O}(\sqrt{ndT})$ regret is attainable in settings with a finite number of types $d$ and a number of actions $n$ that is small relative to $T$ (\ie $n\le T^{1/3}$).

\xhdr{Learning under Additional Assumptions} In order to circumvent the negative results of \citet{zhu2022online}, subsequent works have explored learning optimal contracts under additional assumptions. \citet{chen2024bounded} show that if the problem satisfies First-Order Stochastic Dominance (FOSD) and the Concavity of Distribution Function Property (CDFP), then it is possible to achieve regret of order $\widetilde{O}(m^{11/21}\cdot T^{20/21}\cdot \log(1/\delta))$.
%
%
Recently, \citet{bernasconi2025single} studied a single-dimensional-type setting similar to ours but requiring types to be generated according to a distribution admitting density bounded by some constant $\beta$. Under this assumption, they  provide a learning algorithm with regret of order $\widetilde{O}(\beta\, \text{poly}(n,m)\, T^{3/4})$. Here, the bounded-density assumption plays a crucial role, as it provides a way to control the discretization error and thereby enables a direct reduction from the continuous problem to a discrete one. 

\xhdr{Other Feedback Models} \citet{dutting2023optimal} study a stronger feedback model where the principal can observe their expected utility instead of a sampled outcome. In this setting, linear contracts can be learned, incurring regret at most $O(\log\log T)$. 
%
Finally, we mention the recent work of \citet{dutting2025pseudo} which uses the notion of pseudo-dimension to bound the sample complexity of approximately optimal contracts. While providing fundamental barriers to learnability in the case of unbounded contracts, they provide polynomial sample complexity results for bounded contracts, which translate to a regret of order $\widetilde{O}(\sqrt{mT})$ under a full feedback model (i.e., the learner observes the type after committing to a contract). 

Our problem also shares connections with the more general problem of learning an optimal commitment, in repeated Stackelberg games \cite{letchford2009learning,peng2019learning,blum2014learning,balcan2015commitment,balcan2025nearly} and online Bayesian persuasion \cite{castiglioni2020online,castiglioni2021multi,bernasconi2023optimal}.
\section{Preliminaries}\label{sec:setting}


The single-dimensional Bayesian principal-agent problem is defined as follows.
There is a set of outcomes $\Omega$ of size $|\Omega|=m$. Each possible outcome $\omega$ provides to the principal a reward $r_\omega\in[0,1]$. There are two actions $a_1$ and $a_2$ which are the high- and low-effort actions, respectively. Each action is associated with a cost $C_a\in[0,1]$, and we can assume without loss of generality that $C_{a_1} \ge C_{a_2}$.\footnote{Without loss of generality, we assume that $C_{a_1} > 0$, otherwise the problem admits a trivial solution.} Each action $a\in A:=\{a_1,a_2\}$ is associated to a distribution over outcomes $F_a\in\Delta(\Omega)$. We denote with $F_{a,\omega}$ the probability that action $a$ leads to outcome $\omega$.
We consider the single-dimensional-type problem introduced by \citet{alon2021contracts} where the agent is associated with a hidden type $\theta\in[0,1]$, and the agent's cost of performing action $a$ is $\theta\cdot C_a$. In this setting, the uncertainty is entirely encapsulated in the private types, the outcome distributions and costs are part of the publicly known environment.


The principal's goal is to choose a contract $p \in [0,1]^m$, where each component $p_\omega$ specifies the payment made in outcome $\omega$. Based on the proposed contract $p$, the agent takes an action $a\in A$ and an outcome $\omega\sim F_{a}$ is reached. When an agent of type $\theta$ takes action $a$, the expected utility of the agent is $U^A_\theta(p,a) = \sum_{\omega \in \Omega} F_{a,\omega} p_w - \theta \cdot C_a$, while the principal has expected utility $U^P(p,a) = \sum_{\omega \in \Omega} F_{a, \omega}(r_\omega - p_{\omega})$. After committing to a contract, the principal observes only the realized outcome $\omega$, and observes neither the action $a\in A$ taken nor the type $\theta$.


When an agent of type $\theta$ observes a contract $p$, we assume that it will play an action $a \in A$ that is both (1) \emph{incentive compatible} (IC), \ie it maximizes the expected utility over actions, and (2) \emph{individually rational} (IR), \ie $U^A_\theta(p,a) \ge 0$.

To conveniently incorporate IR into IC constraints, it is standard to assume that there exists an action $a$ with cost $C_a = 0$. When multiple actions are IC, we assume that ties are broken in favor of the principal. More formally, let $\mathcal{B}(\theta,p) = \argmax_{a \in A} U^A_\theta(p,a)$ be the set of actions that are IC for type $\theta$ and contract $p$. Then, we define $b(\theta,p)$ as the action within $\mathcal{B}(\theta,p)$ that maximizes the principal's expected utility under $p$.\footnote{Whenever the utilities of $a_1$ and $a_2$ are both optimal for the principal and the agent, we assume the agent plays $a_1$.} Finally, we conclude by introducing the following notation. For $\epsilon > 0$, we denote by $\mathcal{B}_\epsilon(\theta, p)$ as the sets of actions that are $\epsilon$-IC for the agent, \ie 
\[
\mathcal{B}_\epsilon(\theta, p) = \Big\{a \in A : U^A(p,a) \ge \max_{b \in A} U^A(p,b) - \epsilon \Big\}.
\]

\textbf{Online Setting.} In this work, we focus on the online adversarial setting where the principal interacts with an arbitrary sequence of $T \in \Naturals$ agents. During each round $t \in [T]$, a type $\theta_t$ arrives, and the principal selects a contract $p_t \in [0,1]^m$. The agent $\theta_t$ best responds, \ie it plays an action $a_t \in b(\theta_t, p_t)$ and the principal observes $\omega_t \sim F_{a_t}$ and receives $r_{\omega_t} - p_{w_t}$. 
The principal's goal is to minimize the regret against the best contract in hindsight for any fixed sequence of types, \ie
\[
R_T = \sup_{\{\theta_t\}_{t\in[T]}}\sup_{p \in [0,1]^m}\left\{ \sum_{t=1}^T U^P(p, b(\theta_t,p)) - \E \left[ \sum_{t=1}^T U^P(p_t, b(\theta_t, p_t)) \right]\right\}, 
\]
where the expectation is taken with respect to the stochasticity of the algorithm. 
In the following, whenever it is clear from context, we write $U^P(p,\theta) = U^P(p,b(\theta,p))$. Moreover, for a fixed sequence of types, we define $\opt \coloneqq \sup_{p \in [0,1]^m} \sum_{t=1}^T U^P(p,\theta_t)$, omitting the dependence on the type sequence when it is clear from context. Throughout this document, we assume that $F_{a_1}^\top r > F_{a_2}^\top r$. If this condition does not hold, then there exists a trivial algorithm with zero regret.\footnote{In this case, the null contract is optimal for every agent $\theta \in [0,1]$. We prove this formally in \Cref{lemma:easy-case}.}
\section{Adversarial Types: $\Theta(T^{2/3})$ is Tight}\label{sec:adversarial}

In this section, we consider the general adversarial setting in which an arbitrary sequence of types $\{\theta_t\}_{t=1}^T$ arrives. In \Cref{sec:ub-any-seq}, we propose an algorithm that attains ${O}(T^{2/3})$ regret. Then, in \Cref{sec:lb-any-seq}, we show that this rate is tight in the worst-case sense by proving a matching lower bound. 

\subsection{The Adversarial $O(T^{2/3})$ Upper Bound}\label{sec:ub-any-seq}

In this section, we prove the following upper bound on adversarial types and provide a high-level overview of the algorithm. The formal proof is deferred to \Cref{app:proof-thm-1}.
 
\begin{theorem}\label{thm:ub1}
    There exists an algorithm for the online contract-design problem with binary actions and single-dimensional adversarial types, such that $R_T \le O(T^{2/3})$.
\end{theorem}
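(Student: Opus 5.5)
Let $\Delta(p) := (F_{a_1}-F_{a_2})^\top p$ denote the expected payment gap. The plan is to reduce the $m$-dimensional contract space to this one-dimensional parameter and then run a bandit algorithm over a carefully chosen finite set of contracts.

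\textbf{Step 1: structure of the best response.} A type $\theta$ plays $a_1$ under $p$ iff $\Delta(p) \ge \theta (C_{a_1}-C_{a_2})$, with ties resolved by the tie-breaking rule. Since $C_{a_2}$ can be absorbed, IR is handled by the standard zero-cost convention. Hence each contract induces a threshold $\tau(p)=\Delta(p)/(C_{a_1}-C_{a_2})$, and all types below the threshold work. The principal's utility is then
\[
U^P(p,\theta)=\mathbb{1}[\theta\le \tau(p)]\,F_{a_1}^\top(r-p)+\mathbb{1}[\theta>\tau(p)]\,F_{a_2}^\top(r-p).
\]

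For a target gap $x$, I would consider the contract $p^\star(x)$ solving the LP
\[
\min\; F_{a_1}^\top p \quad \text{s.t.} \quad (F_{a_1}-F_{a_2})^\top p \ge x,\; p\in[0,1]^m.
\]
This LP minimizes payment in the working case. I would then check that $F_{a_2}^\top p$ is also minimized, or at least show that the best contract in hindsight can w.l.o.g.\ be taken of the form $p^\star(x)$ for some $x$. This needs care, since both payment terms matter. I expect a fractional-knapsack/greedy structure: fill outcomes in decreasing order of likelihood ratio $F_{a_1,\omega}/F_{a_2,\omega}$, paying $1$ on the top outcomes and a fractional amount on one. Along this path both $F_{a_1}^\top p$ and $F_{a_2}^\top p$ are piecewise linear and nondecreasing in $x$. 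Any optimal $p$ with gap $x$ can be replaced by $p^\star(x)$ without decreasing utility for every $\theta$, provided $p^\star(x)$ weakly reduces both expected payments. I would verify this by a dominance argument: the greedy solution also minimizes $F_{a_2}^\top p$ subject to the same gap constraint, since $F_{a_1}^\top p - F_{a_2}^\top p = x$ is fixed. This gives a one-dimensional problem $\max_x \sum_t g_t(x)$.

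\textbf{Step 2: discretization (the main obstacle).} The map $x\mapsto g_t(x)$ has a jump at the type threshold, which is harmless because it is upper semicontinuous in the right direction: slightly increasing $x$ keeps all workers working. The difficulty is the payment cost. The slope of $F_{a_1}^\top p^\star(x)$ in $x$ equals $F_{a_1,\omega}/(F_{a_1,\omega}-F_{a_2,\omega})$ on the piece where outcome $\omega$ is being filled, and this can be unbounded, so the objective is not $O(1)$-Lipschitz.

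I would therefore discretize non-uniformly, using grid points that are uniform in the payment level $F_{a_1}^\top p^\star(x)$ rather than in $x$. In that parametrization, rounding $x$ up to the next grid point changes payments by at most $\varepsilon$ and only adds workers. Each added worker raises the principal's utility whenever $F_{a_1}^\top(r-p)\ge F_{a_2}^\top(r-p)$, and I would handle the opposite case separately, where not inducing work is optimal. Since the payment lies in $[0,1]$, this yields $O(1/\varepsilon)$ points with per-round discretization loss $O(\varepsilon)$. I expect the careful accounting to need both payment coordinates; if necessary, I would grid both and take the union, which is still $O(1/\varepsilon)$ points because the greedy path is monotone in each.

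\textbf{Step 3: learning.} The principal observes only outcomes, but realized utility $r_{\omega_t}-p_{\omega_t}$ is an unbiased bandit reward for the chosen contract. I would run EXP3 over the $K=O(1/\varepsilon)$ grid contracts, obtaining regret
\[
O\bigl(\sqrt{TK\log K}+\varepsilon T\bigr).
\]
Optimizing $\varepsilon$ gives $\widetilde O(T^{2/3})$. To remove the logarithm, I would use a minimax-optimal adversarial bandit algorithm such as INF/Tsallis-INF with $O(\sqrt{TK})$ regret, which yields exactly $O(T^{2/3})$ with $\varepsilon=T^{-1/3}$.
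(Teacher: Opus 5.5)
Your proposal is correct and follows essentially the same route as the paper. It restricts to gaps $\lambda\in[0,(F_{a_1}-F_{a_2})^\top r]$ and replaces each contract by the cheapest one with the same gap (the LP defining $f(\lambda)$). It then grids uniformly in the value of $f$ and rounds up; the paper's bound $\widetilde U^P(\lambda_1,\theta)-\widetilde U^P(\lambda_2,\theta)\le f(\lambda_2)-f(\lambda_1)$ is exactly your ``pay at most $\varepsilon$ more, only add workers'' argument. Finally it runs the INF algorithm of Audibert and Bubeck on the resulting $O(1/\varepsilon)$ contracts.

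The differences are minor. Your fractional-knapsack description of $p^\star(x)$ stands in for the paper's LP-based proof that $f$ is continuous and strictly increasing. Gridding $F_{a_2}^\top p$ as well is unnecessary, since $F_{a_2}^\top p^\star(x)=f(x)-x$ increases by no more than $f$ when $x$ is rounded up.
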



A notable feature of the upper bound is that it is independent of the number of outcomes $m$. This is surprising since the action space is continuous and standard discretization arguments typically incur an exponential dependence on $m$.  Our approach avoids this through two main ideas.
First, we show that the search for an optimal contract can be reduced from an $m$-dimensional space to a one-dimensional one. The key parameter is the threshold type, \ie the highest type that can still be induced to choose action $a_1$, together with the cheapest contract that implements it (see \Cref{sec:singledimprob}). This reduces the problem to optimization over the interval $[0,1]$.
Second, once we obtained the one-dimensional problem, the standard way of handling it would be to discretize the $[0,1]$ interval uniformly with step $\epsilon=T^{-1/3}$. If the underlying reward function was $O(1)$-Lipschitz, this would give a $O(\sqrt{T/\epsilon}+\epsilon T)=O(T^{2/3})$ upper bound (this also holds for one-sided Lipschitz functions \citep{dutting2023optimal}). However, when reformulated as a one-dimensional problem, the reward is not $O(1)$-Lipschitz. As a result, a clever discretization strategy is needed to avoid dependence on the number of outcomes $m$. This motivates the second key idea: a non-uniform discretization that exploits additional structure of the problem, which we discuss in more detail in \Cref{sec:discret}.

%

\subsubsection{The One-Dimensional Problem}\label{sec:singledimprob} To show that the problem can be reduced to a one-dimensional one, we start by analyzing the utility of the principal when it plays a contract $p$. To this end, we will only focus on the subset of contracts for which $F_{a_1}^\top (r-p) \ge F_{a_2}^\top (r-p)$ holds, \ie the subset where ties in $\mathcal{B}(\theta,p)$ are broken in favor of $a_1$, which yields larger utility for the principal. We denote this subset by $\mathcal{P}_1$. Indeed, we show in \Cref{lemma:adv-opt-p1} that $\opt$ can be rewritten as an optimization over $\mathcal{P}_1$ only.\footnote{From an intuitive perspective, the reason is that for contracts that do not belong to $\mathcal{P}_1$, $F_{a_2}^\top (r-p) \ge F_{a_1}^\top (r-p)$ holds. Hence, for any $\theta$, their utility is at most $F_{a_2}^\top (r-p)\le F_{a_2}^\top r$, which is always smaller or equal than the utility of the contract $\bar p = (0, \dots, 0)$, which belongs to $\mathcal{P}_1$.} The next observation is that, for a fixed $\theta \in [0,1]$, the utility of any contract in $\mathcal{P}_1$ can be rewritten as follows:
\begin{align}
    U^P(p, \theta) & = \mathds{1}\{ \theta \le \tilde \theta_{\max}(\lambda_p) \} F_{a_1}^\top (r-p) + \mathds{1}\{ \theta > \tilde \theta_{\max}(\lambda_p) \} F_{a_2}^\top (r-p)  \label{eq:main-utility-eq1} \\
    & = \mathds{1}\{ \theta \le \tilde \theta_{\max}(\lambda_p) \} ( (F_{a_1} - F_{a_2})^\top r - \lambda_p ) + F_{a_2}^\top r + \lambda_p - F_{a_1}^\top p, \label{eq:main-utility-eq2} 
\end{align}
where $\lambda_p$ is a shorthand for $(F_{a_1} - F_{a_2})^\top p$ and $\tilde \theta_{\max}(\lambda) = \lambda C_{a_1}^{-1}$. The rationale behind this rewriting is that, when the contract is $p$, all types $\theta \le \tilde \theta_{\max}(\lambda_p)$ will best respond by playing $a_1$, while types larger than $\tilde \theta_{\max}(\lambda_p)$ will play $a_2$. 
We can already see that, to induce the agent to choose action $a_1$, the principal will need to increase the expected payment gap $\lambda_p$ between the two actions.
\Cref{eq:main-utility-eq2} explicitly shows the dependence of the utility on $\lambda_p$. This is important as it allows us to reduce the problem to one dimension. Indeed, if we fix the expected difference in payment $\lambda$, \Cref{eq:main-utility-eq2} highlights how the optimal contract for that $\lambda$ is simply the one that minimizes the expected payment to $a_1$, \ie the solution of the following linear program parametrized in $\lambda$:
\begin{align}\label{eq:opt-problem-main}
f(\lambda):=\min_{p \in \mathcal{P}_1} F_{a_1}^\top p \quad \text{s.t.}\quad  (F_{a_1} - F_{a_2})^\top p = \lambda.    
\end{align}
Now we can rewrite the principal utility entirely as a function of $\lambda$ and the value of the LP $f(\lambda)$, \ie
\[
\widetilde{U}^P(\lambda, \theta) = \mathds{1}\{ \theta \le \tilde \theta_{\max}(\lambda) \} ( (F_{a_1} - F_{a_2})^\top r - \lambda ) + F_{a_2}^\top r + \lambda - f(\lambda).
\]
In other words, we restricted the principal utility to a one-dimensional function where the $\lambda$s of interests are those for which there exists $p \in \mathcal{P}_1$ that yields $\lambda$ as the expected payment gap. We show in \Cref{lemma:opt-rewrite} that the only relevant range of $\lambda$ is $[0, \bar\lambda]$ with $\bar\lambda:=(F_{a_1} - F_{a_2})^\top r$. The upper endpoint is obtained  from the definition of $\mathcal{P}_1$, while the lower endpoint is clipped to $0$. The intuitive reason is that it is never convenient to incentivize $a_2$ over $a_1$, since $a_1$ is the rewarding action.


\subsubsection{The Non-Uniform Discretization}\label{sec:discret} 

\begin{figure}
    \centering
    \scalebox{0.6}{\input{imgs/discretimgs}}
    \caption{Visual representation of the instance given by $4$ outcomes, $F_{a_1}=(\tfrac{1}{5},\tfrac{1}{10},\tfrac1{5},\tfrac{1}{2})$, $F_{a_2}=(\tfrac{1}{2},0,\tfrac{1}{10},\tfrac{2}{5})$, $r=(0,\tfrac12,\tfrac{4}{5},1)$ and $C_{a_1}=1$. In the plot we can see represented $f(\lambda)$ in \textbf{\textcolor[HTML]{0064E0}{blue}}, and $\lambda\mapsto U^P(\lambda,\theta)$ in \textbf{\textcolor[HTML]{A4322A}{red}} for $\theta=0.12$ and $\epsilon=0.08$. In the plot, we can also observe the type threshold $\theta$ where the agent switches from playing $a_1$ to $a_2$. Square markers indicate switches in the active constraints in the LP solution defining $f(\lambda)$, and therefore changes in its slope. The round marker indicates the result of the discretization \Cref{eq:discretization-main}.
}
    \label{fig:placeholder}
\end{figure}
At this point, ideally we would like to discretize $[0, \bar\lambda]$ with $\epsilon^{-1}$ points, and apply a no-regret adversarial bandit algorithm using as the action set the contracts yielding the discretized set of $\lambda$s. To this end, we start by analyzing the utility difference when going from $\lambda_1$ to $\lambda_2 > \lambda_1$. Precisely, given type $\theta$, one can show (\Cref{lemma:obj-diff}) that
\begin{align}\label{eq:utility-difference-main}
\widetilde{U}^P(\lambda_1, \theta) - \widetilde{U}^P(\lambda_2, \theta) \le f(\lambda_2) - f(\lambda_1).    
\end{align}
Observe that $f(\lambda)$ is the value of a linear program where $\lambda$ is the right-hand side of the equality constraint. As a result $f(\lambda_2) - f(\lambda_1)$ cannot be directly upper bounded by $c|\lambda_2 - \lambda_1|$ for some universal constant $c \in \Reals$.\footnote{Note that this is what is needed by uniform discretization to guarantee that $\epsilon^{-1}$ points approximates $\opt$ with precision $\epsilon$.} However, it is not hard to see that $f(\lambda)$ is a strictly increasing continuous function of $\lambda$, and is bounded in $[0,1]$. As a consequence, we propose the following discretization of $[0, \bar\lambda]$ that meets our requirements. 
Let $\epsilon > 0$ and consider the following sequence. Set $\lambda^{(0)} = 0$ and for all $n \ge 1$ define $\lambda^{(n+1)}$ as follows:
\begin{align}\label{eq:discretization-main}
    \lambda^{(n+1)} = \begin{cases}
        \inf\{\lambda: \lambda \ge \lambda^{(n)}, f(\lambda) = f(\lambda^{(n)}) + \epsilon \} & \textup{if~} f(\lambda^{(n)}) + \epsilon < f(\bar\lambda) \\
        \bar\lambda & \textup{otherwise} 
    \end{cases} 
    .
\end{align}
We stop the sequence as soon as $\lambda^{(n)} = (F_{a_1} - F_{a_2})^\top r$ and denote by $\Lambda^{(\epsilon)}$ the collection of points in this sequence. By taking a closer look, we can see that \Cref{eq:discretization-main} generates points in $[0, \bar\lambda]$ in a way that the distance in $f(\cdot)$ between two subsequent points is always exactly $\epsilon$.\footnote{This is true except for the last point, where the distance can actually be smaller than $\epsilon$.} Thus, by \Cref{eq:discretization-main} and \Cref{eq:utility-difference-main}, we have that $\Lambda^{(\epsilon)}$ approximates the utility of any $\lambda$ with accuracy $\epsilon$. Furthermore, since $f$ is bounded by $[0,1]$ for $\lambda\in[0,\bar\lambda]$, we have $|\Lambda^{(\epsilon)}| =O( \epsilon^{-1})$. In \Cref{fig:placeholder} we can see a visual representation of the non-uniform discretization described in this section.

\xhdr{Remark: computing $\Lambda^{(\epsilon)}$ is efficient} We observe that to compute the $(n+1)$-term of $\Lambda^{(\epsilon)}$ efficiently, we only need to solve the following linear feasibility problem in $p$ and $\lambda$: 
\begin{align}\label{eq:main-feas}
p\in \mathcal{P}_1, \lambda \in [\lambda^{(n)}, (F_{a_1} - F_{a_2})^\top r] ~\land~ (F_{a_1} - F_{a_2})^\top p = \lambda ~\land~ F_{a_1}^\top p = f(\lambda^{(n)}) + \epsilon.     
\end{align}
When a solution exists, it is equivalent to the first case of \Cref{eq:discretization-main}, and the point $\lambda$ that satisfies \Cref{eq:main-feas} is exactly $\lambda^{(n+1)}$. When it does not exist, it means $\lambda^{(n+1)} = (F_{a_1} - F_{a_2})^\top r$, and we stop the discretization procedure. Observe that the procedure that we described also yields a set of contracts whose expected payment gaps are exactly those in $\Lambda^{(\epsilon)}$. Indeed, these are the contracts satisfying \Cref{eq:main-feas}. In the following, we will denote these contracts with $\mathcal{P}^{(\epsilon)}$, and we recall that $|\mathcal{P}^{(\epsilon)}| = |\Lambda^{(\epsilon)}| \approx \epsilon^{-1}$.

\subsubsection{Putting Everything Together} At this point, we just need to combine all the pieces together. First, we compute $\Lambda^{(\epsilon)}$ and the corresponding sets of contracts $\mathcal{P}^{(\epsilon)}$, and then we run the adversarial bandit algorithm of \citet*{audibert2010regret} using as arms the contracts in $\mathcal{P}^{(\epsilon)}$. This procedure, for any $\epsilon > 0$, guarantees that
\(
R_T \le \mathcal{O}(\sqrt{T |\mathcal{P}^{(\epsilon)}|} + \epsilon T),
\)
where the first term has to be interpreted as the regret of the principal against the best contract in $\mathcal{P}^{(\epsilon)}$ while the second one, instead, comes from the fact that contracts in $\mathcal{P}^{(\epsilon)}$ approximate the optimal constant fixed in hindsight up to accuracy $\epsilon T$ (\Cref{lemma:discretization}). Indeed, this is the key property which we obtained with the discretization described above. Then, since $ |\mathcal{P}^{(\epsilon)}| \approx \epsilon^{-1}$, picking $\epsilon = T^{-1/3}$ yields the desired result.

\subsection{A Matching $\Omega(T^{2/3})$ Lower Bound}\label{sec:lb-any-seq}

In this section, we will prove that \Cref{thm:ub1} is tight up to constants. In particular, we provide a lower bound which holds even for stochastic type generation. Importantly, we show that the lower bound holds even when the type distribution is absolutely continuous (with respect to Lebesgue measure). 

The main difference from standard Multi-Armed Bandit lower bounds is that the feedback received by the learner and the reward suffered by the learner do not coincide, and that the action space of the learner is continuous.
This is a characteristic of lower bounds for similar economic models, such as dynamic pricing \citep{kleinberg2003value, babaioff2015dynamic}, bilateral trade \citep{cesa2024regret, bernasconi2024no}, and market making \citep{cesa2024market}. A principled way to deal with such problems is to reduce them to an instance with a finite set of ``arms'' and apply standard KL chain rule techniques (see \citep{slivkins2019introduction,lattimore2020learning}). This is done by carefully identifying arms that are optimal with respect to regret or information acquired. However, this would require introducing point masses into the distribution. Instead, here, to produce an absolutely continuous distribution, we do not reduce to the finite case and use the KL chain rule with continuous actions.

More formally, we create $K=\Theta(1/\epsilon)$ instances $\{\I_1,\ldots, \I_K\}$ such that
\begin{itemize}[leftmargin=0.5cm]
    \item each instance $\I_k$ is identical to a base instance $\I_0$, apart from an interval $J_k$ of size $O(\epsilon)$;
    \item the principal utility on the interval $J_k$ is $\Omega(\epsilon)$ larger than the utility outside $J_k$;
    \item the Kullback-Leibler divergence between the instance $\I_0$ and instance $\I_k$ is less than $O(\epsilon^2)$ times the number of times we played in expectation (under the base instance) in the interval $J_k$.
\end{itemize}

Then, by standard change of measure arguments, we can prove that for each algorithm, there is an instance in which the regret is at least $\Omega(\epsilon T - \epsilon^2\sqrt{T/K})$, and by choosing  $K=\Theta(1/\epsilon)$ and calibrating $\epsilon=\Theta(T^{-1/3})$ gives the desired $\Omega(T^{2/3})$ lower-bound (the full proof is in \Cref{sec:lb}).

\begin{theorem}\label{thm:lb1}
    For any algorithm, there exists an instance with stochastic types such that the expected regret is lower bounded by $\Omega(T^{2/3})$.
\end{theorem}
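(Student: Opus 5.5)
We construct a family of instances with stochastic types and carry out a change-of-measure argument directly on the continuous contract space, as outlined above. Fix $m=2$ outcomes with $F_{a_1}=(1/2+\delta_0, 1/2-\delta_0)$ and $F_{a_2}=(1/2,1/2)$, where $\delta_0$ is a constant, together with $r=(1,0)$ and $C_{a_1}=1$, $C_{a_2}=0$. The outcome distributions are then fixed and the only thing that varies across instances is the type distribution. By \Cref{lemma:opt-rewrite}, the principal's expected utility against a type distribution $\mathcal{D}$ depends on the contract only through $\lambda_p$, up to the term $f(\lambda)-\lambda$, which is nonnegative. Concretely it equals $G(\lambda)((F_{a_1}-F_{a_2})^\top r-\lambda)+F_{a_2}^\top r+\lambda-f(\lambda)$, where $G$ is the CDF of $\mathcal{D}$ evaluated at $\tilde\theta_{\max}(\lambda)$. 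The observation is then $\omega\sim F_{a_2}+G(\lambda_p)(F_{a_1}-F_{a_2})$, a Bernoulli whose bias is affine in $G(\lambda_p)$.

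Choose the base instance $\I_0$ to have an absolutely continuous density such that the revenue curve $\lambda\mapsto G_0(\lambda)(\bar\lambda-\lambda)-(f(\lambda)-\lambda)$ is \emph{constant}, equal to some value $c$, on an interval $I\subset[0,\bar\lambda]$ of constant length. This is the equal-revenue trick: solve for $G_0$ on $I$, check that it is increasing with density bounded away from $0$ and $\infty$, and extend it smoothly outside $I$ so that values there stay below $c$. Partition $I$ into $K=\Theta(1/\epsilon)$ intervals $J_k$ of length $\epsilon$. The instance $\I_k$ adds to the density a bump of height $\Theta(\epsilon)$ on the left half of $J_k$ and subtracts a matching bump on the right half. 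The perturbed density stays nonnegative and absolutely continuous. The perturbation raises $G_k-G_0$ to order $\epsilon\cdot\epsilon=\epsilon^2$? That would be too small. So instead take the bump height to be $\Theta(1)$ in density, which gives $G_k-G_0=\Theta(\epsilon)$ at the midpoint of $J_k$, and the perturbation vanishes outside $J_k$. Then in $\I_k$ the contracts with $\lambda$ near the middle of $J_k$ earn $c+\Omega(\epsilon)$ per round, contracts outside $J_k$ earn at most $c$, and contracts inside $J_k$ earn at most $c+O(\epsilon)$. Since the feedback bias differs from the base one by $O(\epsilon)$ only when $\lambda_p\in J_k$, and the base bias is bounded away from $0$ and $1$, the per-round KL divergence is $O(\epsilon^2)\mathds{1}\{\lambda_{p_t}\in J_k\}$.

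Let $N_k$ be the number of rounds with $\lambda_{p_t}\in J_k$. The KL chain rule, applied to the algorithm's continuous randomization (the divergence between the laws of the histories decomposes round by round because the contract kernel is shared), gives $\mathrm{KL}(\Prob_0,\Prob_k)\le O(\epsilon^2)\E_0[N_k]$. Pinsker's inequality then yields $\E_k[N_k]\le\E_0[N_k]+T\,O(\epsilon\sqrt{\E_0[N_k]})$. The regret in $\I_k$ is at least $\Omega(\epsilon)(T-\E_k[N_k])$, because every round outside $J_k$ loses $\Omega(\epsilon)$ against the best contract. Averaging over $k$ and using $\sum_k\E_0[N_k]\le T$ gives $\frac1K\sum_k\E_k[N_k]\le T/K+T\epsilon\sqrt{T/K}$. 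With $K=\Theta(1/\epsilon)$ and $\epsilon=cT^{-1/3}$ for a small constant $c$, the right-hand side is at most $T/2$, so some instance has regret $\Omega(\epsilon T)=\Omega(T^{2/3})$.

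The main obstacle is the construction of $\I_0$. It requires a utility plateau on a constant-length interval, achieved by an absolutely continuous type distribution on $[0,1]$, while accounting for the non-linear term $f(\lambda)-\lambda$. With two outcomes $f$ is linear in $\lambda$, since $p_2=0$ is optimal and $f(\lambda)=\lambda(1/2+\delta_0)/\delta_0$, which makes the equal-revenue equation explicitly solvable. It also requires checking that the bump sizes keep the densities valid and the feedback biases bounded away from $0$ and $1$, so that KL is at most a constant times the squared difference.
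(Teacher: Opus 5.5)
Your proposal is correct and follows essentially the paper's argument. It uses an equal-revenue base type distribution with a utility plateau, $K=\Theta(1/\epsilon)$ local mass shifts on disjoint intervals that raise utility by $\Theta(\epsilon)$ while moving the Bernoulli feedback by only $O(\epsilon)$, the continuous-action KL chain rule with Pinsker, and $\epsilon=\Theta(T^{-1/3})$.

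The differences are cosmetic. The paper uses deterministic outcomes, so $f(\lambda)=\lambda$ and the plateau is simply $\mathbb{P}(\theta\le x)=t/(1-x)$; it keeps $J_k\subseteq[t,1-\tfrac32t]$ to hold the feedback bias away from $0$ and $1$, which your noisy $F_{a_1}$ gives for free. It also selects a single $k^*$ with $\E_{\nu_0}[N_{k^*}(T)]\le T/K$ and applies Markov, rather than averaging over $k$.
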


\section{Improved Rates Against a Single Hidden Type}\label{sec:single}

\begin{algorithm}[t]
\caption{Regret minimization against a single type}
\label{alg:e2c}
\begin{algorithmic}[1]

\Require{Threshold $\beta$, robustification $\alpha$, number of phases $L \in \Naturals$, samples per phase $n \in \Naturals$}

\If{$\| F_{a_1}- F_{a_2} \|_1 \le \beta$} 
    \State{Play $p_t = (0, \dots, 0)$ for all $t \in [T]$}\label{line:indistinguishable}
\Else
    \If{$\lazy < 1$} \Comment{Test if $\bar \theta > \lazy$}
    \State{Play any $p^\star_{\lazy}$ for $n$ rounds; observe $\{ \omega_i^{(0)} \}_{i=1}^n$; compute $Z_0 \coloneqq \frac{1}{n} \sum_{i=1}^n \mathds{1}\{ \omega_i^{(0)} \in \Omega_1 \}$}
    \If{$ Z_0 \le \tau $}
    \State{Play $\bar p = (0, \dots, 0)$ for all the remaining rounds}\label{line:null}
    \EndIf
    \EndIf
    
    \State{Set $s_1 = 0$, $e_1 = \lazy$, $\hat \theta_1 = (e_1 + s_1) / 2$} \Comment{Binary search in $[0, \lazy]$}
    \For{$l \in \{1, \dots, L-1 \}$}
    \State{Play $p^\star_{\hat \theta_l}$ for $n$ rounds; observe $\{ \omega_i^{(l)} \}_{i=1}^n$; compute $Z_l \coloneqq \frac{1}{n} \sum_{i=1}^n \mathds{1}\{ \omega_i^{(l)} \in \Omega_1 \}$ }\label{line:binarys}
    \If{$Z_l \le \tau$}
    \State{Set $s_{l+1} = \hat \theta_l$, $e_{l+1} = e_l$, $\hat{\theta}_{l+1} = (e_{l+1} + s_{l+1}) / 2$}
    \Else
    \State{Set $s_{l+1} = s_l $, $e_{l+1} = \hat \theta_{l}$, $\hat{\theta}_{l+1} = (e_{l+1} + s_{l+1}) / 2$}
    \EndIf
    \EndFor

    \State{Compute $p^{\star}_{\hat \theta_L}$ and play $\tilde p  = p^{\star}_{\hat \theta_L} + \alpha(r- p^{\star}_{\hat \theta_L})$ until the end of the horizon} \Comment{Exploit}
\EndIf

\end{algorithmic}
\end{algorithm}

In this section, we discuss how we can improve the theoretical guarantees when the principal is repeatedly playing against a fixed, but unknown, type $\bar \theta$. In \Cref{sec:ub-fixed-type} we propose an algorithm that attains $\widetilde{O}(\sqrt{T})$ regret and in \Cref{sec:lb-fixed-type} we prove a $\Omega(\sqrt{T})$ lower bound. 

\subsection{An Explore-and-Commit Strategy for $\widetilde{O}(\sqrt{T})$ Regret}\label{sec:ub-fixed-type}

We first state the main result of this section and then we explain how it is obtained. 

\begin{theorem}\label{thm:ub2}
    There exists an algorithm such that $R_T \le \widetilde{O}(\sqrt{T})$ for any sequence $\{ \bar \theta \}_{t=1}^T$ where $\bar \theta$ is any unknown type in $[0,1]$.
\end{theorem}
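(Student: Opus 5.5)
We analyze \Cref{alg:e2c} with $\beta \approx T^{-1/2}$ (up to logs), $n \approx \sqrt{T}\log T$ samples per phase, $L \approx \log T$ phases, and robustification $\alpha \approx T^{-1/2}$. We first fix the objects the algorithm uses, which the paper has not defined at this point, so the definitions below are our reading. For a target type $\theta$, $p^\star_\theta$ is the cheapest contract in $\mathcal{P}_1$ that implements $a_1$ for $\theta$. By the one-dimensional reduction of \Cref{sec:singledimprob}, this is the LP solution of \Cref{eq:opt-problem-main} at $\lambda = \theta C_{a_1}$. The type $\lazy$ is the largest type for which inducing $a_1$ is still worth it, i.e.\ $\lambda=\lazy C_{a_1}\le\bar\lambda$, and beyond it the null contract is optimal. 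The set $\Omega_1 = \{\omega : F_{a_1,\omega} > F_{a_2,\omega}\}$ satisfies $F_{a_1}(\Omega_1) - F_{a_2}(\Omega_1) = \tfrac12\|F_{a_1}-F_{a_2}\|_1$. We take $\tau$ to be the midpoint between $F_{a_1}(\Omega_1)$ and $F_{a_2}(\Omega_1)$.

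\textbf{Step 1 (trivial cases).} Suppose $\|F_{a_1}-F_{a_2}\|_1\le\beta$. The utility of any contract in $\mathcal{P}_1$ is at most $F_{a_1}^\top r - \lambda_p \le F_{a_2}^\top r + \beta$, using \Cref{eq:main-utility-eq2} and $f(\lambda)\ge\lambda$ (from $F_{a_2}^\top p\ge 0$). Combined with \Cref{lemma:adv-opt-p1}, the null contract therefore loses at most $\beta$ per round, giving $\beta T=\widetilde O(\sqrt T)$ regret. Otherwise $\tfrac12\|F_{a_1}-F_{a_2}\|_1 \ge \beta/2$. In that case Hoeffding with $n\gtrsim \beta^{-2}\log T$ shows that each test $Z_l\le\tau$ correctly reveals, with probability $1-T^{-2}$, whether the agent played $a_2$ or $a_1$ under $p^\star_{\hat\theta_l}$. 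The agent plays $a_1$ iff $\bar\theta\le\hat\theta_l$. A union bound over the $L$ phases gives a good event with failure probability $O(L/T^2)$. On this event the test at $\lazy$ correctly sends the algorithm to \Cref{line:null} when $\bar\theta>\lazy$, where the null contract is optimal, costing only $n$ rounds. Otherwise, after the binary search, $\bar\theta\in[s_L,e_L]$ with $e_L-s_L\le 2^{-L}\le 1/T$. Exploration costs at most $Ln=\widetilde O(\sqrt T)$.

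\textbf{Step 2 (exploitation and robustification).} The algorithm exploits with $\tilde p = p^\star_{\hat\theta_L}+\alpha(r-p^\star_{\hat\theta_L})$. Its payment gap is $(1-\alpha)\lambda_{p^\star}+\alpha\bar\lambda$, and this exceeds $\bar\theta C_{a_1}$ once $\alpha(\bar\lambda-\lambda_{p^\star})\gtrsim C_{a_1}/T$. The contract also stays in $\mathcal{P}_1$, since $F_{a_1}^\top(r-\tilde p)-F_{a_2}^\top(r-\tilde p)=(1-\alpha)(\bar\lambda-\lambda_{p^\star})\ge0$. The principal's utility under $\tilde p$ is $(1-\alpha)F_{a_1}^\top(r-p^\star)$, which loses only $\alpha$ relative to $p^\star_{\hat\theta_L}$. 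The comparator is the optimal contract $p^\star_{\bar\theta}$, which pays $f(\bar\theta C_{a_1})$. Our contract is based on $\hat\theta_L$, which may be smaller or larger than $\bar\theta$ by up to $1/T$.

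\textbf{Main obstacle.} The hard part is controlling the loss from mislocalization. The bound $f(\lambda_2)-f(\lambda_1)$ of \Cref{eq:utility-difference-main} is not Lipschitz in $\lambda$, so the $1/T$ localization error does not directly give $O(1/T)$ utility loss. I would handle this by exploiting convexity of $f$, as the value of a parametric LP in its right-hand side, on $[0,\bar\lambda]$. The bad region is near $\bar\lambda$, where the slope of $f$ blows up. There I would compare against a slightly smaller target via \Cref{eq:utility-difference-main} and use that $f\le 1$ with $f$ convex. The resulting tradeoff is either a slope bound of order $1/(\bar\lambda-\lambda)$ times $1/T$, or a switch to the null contract, whose loss is bounded by the same gap. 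A similar issue arises when $\bar\lambda-\lambda_{p^\star}$ is tiny and robustification cannot push the payment gap past $\bar\theta C_{a_1}$. In that regime I would show that inducing $a_1$ versus $a_2$ changes utility by at most $O(\bar\lambda-\lambda)$, so playing $a_2$ costs little. Setting $\alpha=T^{-1/2}$ then bounds the per-round exploitation loss by $\widetilde O(T^{-1/2})$, for total regret $\widetilde O(\sqrt T)$.
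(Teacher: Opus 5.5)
\textbf{Verdict: there is a genuine gap, in how Step 1 pays for exploration.}

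Your parameters are inconsistent. With $\beta\approx T^{-1/2}$, the requirement $n\gtrsim\beta^{-2}\log T$ reads $n\gtrsim T\log T$, not $\sqrt T\log T$. With $n\approx\sqrt T\log T$ fixed, the Hoeffding error $\exp(-n\DeltaF^2/8)$ is below $T^{-2}$ only when $\DeltaF\gtrsim T^{-1/4}$. So for $T^{-1/2}<\DeltaF\ll T^{-1/4}$ your good event is not guaranteed, and neither of your two arguments covers the instance. Retuning does not help while each exploration round is charged $O(1)$ (your bound $Ln$): balancing $\beta T$ against $L\beta^{-2}\log T$ gives only $\widetilde O(T^{2/3})$.

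The missing idea is that every contract the algorithm plays costs only $O(\DeltaF)$ per round, against any type.
\begin{itemize}
\item For a fixed $p$, two types yield different utilities only if they choose different actions, and then the difference is $|(F_{a_1}-F_{a_2})^\top(r-p)|\le\DeltaF$.
\item Hence the optimal value is $\DeltaF$-stable across types. A contract optimal for some type is $2\DeltaF$-optimal for $\bar\theta$, the null contract is $\DeltaF$-optimal, and a robustified optimal contract is $(2\DeltaF+\alpha)$-optimal (\Cref{lemma:cost-of-exploration}).
\item Since $F$ is public, you can take $n\approx 8\DeltaF^{-2}\log(T\DeltaF^2)$, capped at $T/L$. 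Exploration then costs $O(nL\DeltaF)=\widetilde O(1/\DeltaF)=\widetilde O(\sqrt T)$.
\item A failed search has probability $O(L/(T\DeltaF^2))$ and costs $O(T(\DeltaF+\alpha))$ when it happens. That is again $\widetilde O(1/\DeltaF)$ once $\alpha\le\DeltaF$.
\item If the cap binds, then $\DeltaF=\widetilde O(T^{-1/2})$ and the per-round bound alone suffices.
\end{itemize}
This is exactly why the tests use contracts $p^\star_\theta$ that are optimal for some type.

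\textbf{Your formalization of $\lazy$ is wrong.} The stability bound needs $p^\star_\theta$ to be optimal for $\theta$. That holds only if $\lazy$ is the largest type for which some contract inducing $a_1$ beats the null contract, i.e.\ $f(\lazy C_{a_1})\le\bar\lambda$ (\Cref{eq:theta-lazy}). Your version, $\lazy C_{a_1}\le\bar\lambda$, is strictly weaker because $f(\lambda)\ge\lambda$. Take $F_{a_1}=(0.6,0.4)$, $F_{a_2}=(0.5,0.5)$, $r=(1,0)$, $C_{a_1}=1$. Then $f(\lambda)=6\lambda$, so the correct $\lazy$ is $1/60$ while yours is $0.1$. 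For $\bar\theta=0.09$ your algorithm searches and commits to roughly $(0.9,0)$. It loses more than $0.4$ per round against the optimal null contract.

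\textbf{Your ``main obstacle'' does not need convexity of $f$.} With the correct $\lazy$, $p^\star_{\hat\theta_L}$ is optimal for $\hat\theta_L$. Set $\eta=2^{-L}C_{a_1}/\alpha+\alpha$ and apply \Cref{lemma:robust-contract} twice through the intermediate type $\hat\theta_L$:
\[
U^P(\tilde p,\bar\theta)\ge U^P(p^\star_{\hat\theta_L},\hat\theta_L)-\eta\ge U^P\bigl(p^\star_{\bar\theta}+\alpha(r-p^\star_{\bar\theta}),\hat\theta_L\bigr)-\eta\ge U^P(p^\star_{\bar\theta},\bar\theta)-2\eta .
\]
This is the paper's argument, and it needs no Lipschitz control of $f$.

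\textbf{Your case split has a further gap.} When $\tilde p$ fails to induce $a_1$, the principal still pays $F_{a_2}^\top\tilde p$, which you do not account for. This payment is small only because $F_{a_2}^\top p^\star_{\hat\theta_L}=f(\lambda)-\lambda\le\bar\lambda-\lambda$, which again relies on the correct $\lazy$.
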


\xhdr{Outline} We start by observing that, if the principal knew the hidden type $\bar\theta$, then it could simply play an optimal contract for that type. Suppose, for the moment, that an oracle provided an $\epsilon$-accurate estimate $\hat\theta$ of $\bar\theta$. However, simply playing the optimal contract $p^\star_{\hat\theta}$ for $\hat\theta$ might be greatly suboptimal; indeed, even infinitesimal errors in the type $\hat\theta$ might change the action played by the agent and lead to very suboptimal outcomes. However, we can instead compute the optimal contract $p^\star_{\hat\theta}$ for type $\hat\theta$ and play its ``robustified'' version $\tilde p_{\hat\theta}= p^\star_{\hat\theta}+\alpha(r-p^\star_{\hat\theta})$ for some $\alpha>0$.
The role of robustification is to make the contract insensitive to small errors in the estimated type at the cost of only a small loss in utility. By the standard robustification argument of \citet{dutting2019simple} (see also \citet{bernasconi2024regret}), choosing $\alpha=\sqrt{\epsilon}$ guarantees $U^P(\tilde p_{\hat\theta},\theta) \ge U^P(p^\star_{\hat\theta},\theta)-2\sqrt{\epsilon}$
for nearby types (see \Cref{lemma:robust-contract}).
Therefore, an $\epsilon$-accurate estimate of $\bar\theta$ translates into $O(\sqrt{\epsilon})$ per-round exploitation regret. The goal of the algorithm is then to obtain such an estimate using a short exploration phase. To do this, we show how to build suitable threshold tests that we will use inside a stochastic binary search procedure. Crucially, the exploration contracts themselves are chosen so that each test incurs only small regret.

\xhdr{Testing a Threshold}
Fix a contract $p_\theta$ such that all types $\theta' \le \theta$ choose $a_1$ and all types $\theta' > \theta$ choose $a_2$. Playing $p_\theta$ for $n$ rounds yields samples from $F_{a_1}$ if $\bar\theta \le \theta$ and from $F_{a_2}$ otherwise. We distinguish these two cases using the statistic
\[
Z = \frac{1}{n} \sum_{i=[n]} \mathds{1}\{ \omega_i \in \Omega_1 \},
\qquad
\Omega_1 = \{ \omega \in \Omega : F_{a_1,\omega} > F_{a_2,\omega} \}.
\]
We compare $Z$ with $\tau = \sum_{\omega \in \Omega_1} (F_{a_1,\omega}+F_{a_2,\omega})/2$.
If $\bar\theta > \theta$, then $Z$ concentrates below $\tau$, and otherwise above $\tau$. We show the error probability of this test decays as $O(\exp(-n \DeltaF^2))$.\footnote{Note that the rate is independent of $m$, unlike a test based on estimating the full outcome distribution.}

\xhdr{Exploring with Low-Regret Contracts}
The contracts used for threshold tests must satisfy two properties: 1) they must not incur large regret, and 2) they must reveal whether $\bar\theta$ lies above or below the tested threshold. Consider any contract $p$, changing the type from $\theta$ to $\theta'$ can only change the principal's utility if the induced
action switches between $a_1$ and $a_2$. In that case, the loss is controlled by the distance between the two outcome distributions, \ie $|U^P(p,\theta)-U^P(p,\theta')| \le  O(\DeltaF)$. Consequently, the optimal value function is also $O(\DeltaF)$-stable across types. Indeed,
if $p^\star_{\theta'}$ is optimal for type $\theta'$, then
\[
\begin{aligned}
    \max_{p\in[0,1]^m} U^P(p,\bar\theta)
    - U^P(p^\star_{\theta'},\bar\theta)
    &\le O(\DeltaF).
\end{aligned}
\]
Hence, even if $p^\star_{\theta'}$ is not optimal for the true type
$\bar\theta$, its per-round regret against $\bar\theta$ is at most $O(\DeltaF)$. Thus, property (1) is satisfied. 
Now, let $\lazy$ be the largest type for which it is still optimal for the principal to incentivize the high-effort action $a_1$ rather than induce the low-effort action $a_2$ with the null contract.\footnote{We defer the exact expression of $\lazy$ to \Cref{eq:theta-lazy} in \Cref{app:proof-ub-fixed}. Here, we only mention that it is the solution of a linear program, and thus it can be computed efficiently.} For every $\theta>\lazy$, it is optimal to induce $a_2$, and the null contract $\bar p=(0,\dots,0)$ is optimal. 
This case is therefore easy: it is sufficient to detect that $\bar\theta>\lazy$ and commit to $\bar p$ without needing to estimate $\bar\theta$ more accurately.
For every $\theta \le \lazy$, instead, the optimal contract $p^\star_\theta$ can be used for exploration because it implements a sharp threshold at $\theta$. Indeed, $p^\star_\theta$ is the cheapest contract that induces type $\theta$ to play $a_1$, and hence satisfies $(F_{a_1}-F_{a_2})^\top p^\star_\theta = \theta C_{a_1}$.
Then, any for type $\theta' \le \theta$ action $a_1$ is IC, and it is not IC for $\theta' > \theta$. Thus, when $p^\star_\theta$ is played, the observed outcomes are drawn from $F_{a_1}$ if $\bar\theta \le \theta$ and from $F_{a_2}$ if $\bar\theta > \theta$. This makes $p^\star_\theta$ an informative threshold test for binary search, and thus, property (2) is satisfied.

\xhdr{Algorithm}
\Cref{alg:e2c} first handles the ``nearly indistinguishable'' case $\DeltaF \le \beta$ (Line \ref{line:indistinguishable}). Since the null contract is then suboptimal by at most $O(\DeltaF)$ per round, playing it throughout gives regret $O(\beta T)$, and by choosing $\beta = T^{-1/2}$ we have the desired $\sqrt{T}$ bound. 
When $\DeltaF > \beta$, the algorithm first tests whether $\bar\theta > \lazy$ by playing $p^\star_{\lazy}$ for $n$ rounds. If the test indicates $\bar\theta > \lazy$, it commits to the null contract, which is optimal for all such types (Line \ref{line:null}). Otherwise, it performs $L-1$ phases of stochastic binary search on $[0,\lazy]$. In phase $l$, the midpoint $\hat\theta_l$ is tested by playing $p^\star_{\hat\theta_l}$ for $n$ rounds and comparing the statistic $Z_l$ with $\tau$ (Line \ref{line:binarys}). After $L$ phases, the algorithm obtains an estimate $\hat\theta_L$ with error at most $2^{-L}$ on the event that all tests are correct. It then commits to the robustified contract $\tilde p=p^\star_{\hat\theta_L}+\alpha(r-p^\star_{\hat\theta_L})$, this ensures that we are playing a contract that actually yields good utility for types close to $\hat \theta_L$.

\xhdr{Proof Sketch} 
We sketch the regret bound for the main case $\bar\theta \le \lazy$ and $\DeltaF \ge T^{-1/2}$, the other cases are simpler (the complete proofs can be found in \Cref{app:proof-ub-fixed}). During the exploration phase, the algorithm plays either the null contract or contracts $p^\star_\theta$ that are optimal for some type. By the suboptimality bound above, the regret accumulated during exploration is $O(nL\DeltaF)$. For the exploitation phase, let
$\mathcal{E} \coloneqq \{\bar\theta \in [s_L,e_L]\}$ be the event that the binary search has localized the true type. On $\mathcal{E}$, we have $|\bar\theta-\hat\theta_L|\le 2^{-L}$, and robustification yields per-round regret of
$O(2^{-L}\alpha^{-1}+\alpha)$. On the complement $\mathcal{E}^\complement$, a union bound over the $L$ threshold tests gives $\mathbb{P}(\mathcal{E}^\complement)\le O(L\exp(- n\DeltaF^2))$. Even on $\mathcal{E}^\complement$, the committed contract is close to an optimal contract for some type, so the per-round regret is bounded by
$O\left((\DeltaF+\alpha) L\exp(- n\DeltaF^2)\right)$.
Combining the two terms we obtain the following regret upper bound
\[
R_T={O}\left(nL\DeltaF+ T\left[2^{-L}\alpha^{-1}+\alpha+(\DeltaF+\alpha)L\exp(-n\DeltaF^2)\right]\right).
\]
Choosing $\alpha \approx \sqrt{2^{-L}}$, $L \approx \log T$, $n \approx \DeltaF^{-2}\log(T\DeltaF^2)$ proves \Cref{thm:ub2}.

\subsection{$\Omega(\sqrt{T})$ is Unavoidable Against a Fixed Type}\label{sec:lb-fixed-type}

We complement the upper bound with a matching lower bound. We construct instances that allow us to reduce the problem to a multi-armed bandits with two arms. Precisely, we construct two instances such that, in each instance, the optimal arm has $\Omega(\epsilon)$ higher utility than it does in the other, while the KL divergence between the feedback distributions of the two instances is at most $O(\epsilon^2)$. A change-of-measure argument then yields the result. The complete proof is reported in \Cref{app:lbsingle}.

\begin{restatable}{theorem}{LBsingle}\label{th:LB_single}
    Any (possibly randomized) algorithm suffers at least regret $R_T=\Omega(\sqrt{T})$.
\end{restatable}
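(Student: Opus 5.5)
We will reduce the fixed-type problem to a two-armed bandit with a gap of order $\epsilon$, where the two instances cost only $O(\epsilon^2)$ per round to tell apart.

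\textbf{Construction.} We fix two outcomes, $m=2$, and a single hidden type $\bar\theta$. We build two instances $\I_+$ and $\I_-$ that differ only in the outcome distribution $F_{a_1}$ of the high-effort action. Concretely, we take $F_{a_1}^{\pm}=(\tfrac12\pm\epsilon,\tfrac12\mp\epsilon)$ and a fixed $F_{a_2}$, for instance $F_{a_2}=(1,0)$ or a distribution bounded away from the boundary. The rewards $r$, the cost $C_{a_1}$ and the type $\bar\theta$ are chosen so that the optimal decision differs between the two instances. In $\I_+$, inducing $a_1$ with the cheapest implementing contract beats the null contract by $\Omega(\epsilon)$. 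In $\I_-$, the null contract, which induces $a_2$, is $\Omega(\epsilon)$ better than any contract inducing $a_1$.

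To get this we tune the parameters so that, at $\epsilon=0$, the value of incentivizing $a_1$ equals $F_{a_2}^\top r$ exactly. The $\pm\epsilon$ perturbation then moves both the reward $(F_{a_1}-F_{a_2})^\top r$ and the cheapest implementation cost $f(\bar\theta C_{a_1})$ linearly in $\epsilon$, and we choose the parameters so that the net effect has opposite signs in the two instances. Since $F_{a_1}$ is perturbed, the LP defining $f$ is perturbed too, and a short calculation is needed to confirm the net $\Theta(\epsilon)$ shift.

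\textbf{Reduction to two arms.} With only two actions and a fixed type, every contract $p$ induces either $a_1$ or $a_2$. If $p$ induces $a_2$, the feedback is $F_{a_2}$ in both instances, so it carries no information. If $p$ induces $a_1$, the feedback is $F_{a_1}^\pm$, and $\mathrm{KL}(F_{a_1}^+\|F_{a_1}^-)=O(\epsilon^2)$.

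We need the partition of contracts into "induces $a_1$" and "induces $a_2$" to be the same in both instances. This is automatic, because the agent's best response depends on $\lambda_p=(F_{a_1}-F_{a_2})^\top p$, which does change with $\epsilon$. To fix this we will choose $p$-dependence so that the regret accounting works anyway. The simplest route is to classify rounds by the action induced under $\I_-$, say. We then note two facts:
\begin{itemize}
\item every round in which $a_1$ is not induced costs $\Omega(\epsilon)$ regret in $\I_+$;
\item every round in which $a_1$ is induced costs $\Omega(\epsilon)$ regret in $\I_-$, since there any $a_1$-inducing contract is worse than the null contract by $\Omega(\epsilon)$.
\end{itemize}
The second fact holds uniformly over the contract space because $f(\lambda)$ is increasing, so the best $a_1$-inducing contract is the cheapest one.

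\textbf{Change of measure.} Let $N_1$ be the number of rounds in which $a_1$ is played. The chain rule gives $\mathrm{KL}(\Prob_-\|\Prob_+)\le O(\epsilon^2)\,\E_-[N_1]$. By Pinsker (or Bretagnolle–Huber) applied to the event $\{N_1\ge T/2\}$, we get
\[
R_T(\I_+)+R_T(\I_-)\ \gtrsim\ \epsilon T\exp\!\bigl(-O(\epsilon^2)\E_-[N_1]\bigr)+\epsilon\,\E_-[N_1].
\]
If $\E_-[N_1]\ge \epsilon^{-2}$, the second term already gives regret $\ge\epsilon^{-1}$. Otherwise the first term gives $\Omega(\epsilon T)$. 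Choosing $\epsilon=T^{-1/2}$ yields $\Omega(\sqrt T)$ in both cases.

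\textbf{Main obstacle.} The hard step is the explicit choice of $r$, $C_{a_1}$, $F_{a_2}$ and $\bar\theta$ such that:
\begin{itemize}
\item the sign of the optimal decision flips between $\I_+$ and $\I_-$ with an $\Theta(\epsilon)$ gap;
\item the contracts inducing $a_1$ and $a_2$ are handled consistently across the two instances;
\item all payments stay in $[0,1]$.
\end{itemize}
I would first try $F_{a_2}=(1,0)$ with $r=(0,1)$. Then the cheapest contract pays only on outcome $2$, with $p_2=\bar\theta C_{a_1}/F^{\pm}_{a_1,2}$, and the utility of inducing $a_1$ is $F^\pm_{a_1,2}-\bar\theta C_{a_1}$. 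Setting $\bar\theta C_{a_1}=1/2$ makes this utility $\pm\epsilon$ against $0$ for the null contract, as desired.
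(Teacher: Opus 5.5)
\textbf{The construction has a genuine gap: it varies public information, so the two instances can be told apart before any interaction.}

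Your two instances differ in $F_{a_1}$. In this paper's model, however, the outcome distributions, costs and rewards are public and only the type is hidden. The algorithm is allowed to depend on $F$; Algorithm~\ref{alg:e2c} itself computes $\Omega_1$, $\tau$, $\lazy$ and $p^\star_\theta$ from it.

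Hence your pair $\I_+,\I_-$ is distinguishable before the first round. Consider the algorithm that reads $F_{a_1}$, plays the null contract in the null-optimal instance, and plays the cheapest $a_1$-implementing contract for your fixed $\bar\theta$ in the other. It has zero regret on both instances. So the pair cannot certify a lower bound. The change-of-measure step is also unavailable, since it needs one fixed map from histories to contracts under both measures.

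A valid hard pair must share $(F,C,r)$ and differ only in the hidden type. With $F$ fixed, the feedback then differs across instances only on contracts that induce $a_1$ for one type and $a_2$ for the other. On those contracts one compares $F_{a_1}$ with $F_{a_2}$. To pay only $O(\epsilon^2)$ of KL per such round you need $\|F_{a_1}-F_{a_2}\|_1=O(\epsilon)$, and this ingredient is missing from your design.

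This is exactly the paper's construction:
\begin{itemize}
\item $F_{a_1}=(\tfrac12+4\epsilon,\tfrac12-4\epsilon)$, $F_{a_2}=(\tfrac12,\tfrac12)$, $r=(1,0)$, $C_{a_1}=1$;
\item types $\frac{40\epsilon}{1/\epsilon+8}$ versus $\frac{24\epsilon}{1/\epsilon+8}$, with implementation thresholds on $p_{\omega_1}-p_{\omega_2}$ at $\frac{10}{1/\epsilon+8}$ and $\frac{6}{1/\epsilon+8}$.
\end{itemize}
Each instance's optimal contract is $\Omega(\epsilon)$-suboptimal in the other. Only contracts whose gap lies between the two thresholds are informative, with feedback $\mathrm{Bern}(\tfrac12)$ versus $\mathrm{Bern}(\tfrac12+4\epsilon)$.

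\textbf{The construction also fails if $F_{a_1}$ were unknown.} The step you flag as the main obstacle really breaks. Your text first says the common $a_1$/$a_2$ partition is automatic, then notes that $\lambda_p$ changes with $\epsilon$; these contradict each other, and no fix is given.

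Take your concrete example: $F_{a_2}=(1,0)$, $r=(0,1)$, $C_{a_2}=0$, $\bar\theta C_{a_1}=\tfrac12$.
\begin{itemize}
\item With $F_{a_1}=(\tfrac12+\epsilon,\tfrac12-\epsilon)$, one has $\lambda_p=(\tfrac12-\epsilon)(p_2-p_1)<\tfrac12$ for every $p\in[0,1]^2$. So $a_1$ is never induced, and any contract with $p_1=0$ is optimal, with utility $0$.
\item With $F_{a_1}=(\tfrac12-\epsilon,\tfrac12+\epsilon)$, the contract $p=(0,\tfrac{1}{1+2\epsilon})$ induces $a_1$ and attains the optimum $\epsilon$.
\end{itemize}
This single contract therefore has zero regret in both instances.

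If you classify rounds by the action induced in the null-optimal instance, then $N_1\equiv 0$ there, so your KL bound would give $0$. Yet each round at this contract compares $(1,0)$ with $(\tfrac12-\epsilon,\tfrac12+\epsilon)$, a $\Theta(1)$ divergence. Your first bullet, that rounds not inducing $a_1$ cost $\Omega(\epsilon)$ in the other instance, is also false here.

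Rounds in which the two instances induce different actions are exactly where the information/regret trade-off lives. They are controlled only when $F_{a_1}$ and $F_{a_2}$ are themselves $O(\epsilon)$-close, as in the paper.
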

\section{Future Work}\label{sec:conclusion}

%
Several questions on the learnability of contracts remain open. A first direction is to strengthen lower bounds for more general online contract-design settings, especially by constructing hard instances whose dependence on the instance size is only polynomial. Another important direction is to move beyond the binary-action case: with multiple actions, the type space induces a richer arrangement of incentive regions, and it is unclear whether similar outcome-independent regret guarantees are possible. More broadly, it would be interesting to study intermediate regimes between adversarial and fixed types, such as slowly drifting types. 

\begin{ack}
This work was partially funded by the European Union.
Views and opinions expressed are however those of the author(s) only and do not necessarily reflect those of the European Union or the European Research Council Executive Agency. Neither the European Union nor the granting authority can be held responsible for them. 

This work is supported by an ERC grant (Project 101165466 — PLA-STEER)
\end{ack}

\newpage
\bibliographystyle{plainnat}
\bibliography{bibliography}

\newpage
\appendix
\section{Proof of \Cref{thm:ub1}}\label{app:proof-thm-1}

In this section we prove \Cref{thm:ub1}. This section is structured as follows. First, in \Cref{app:sec-one-dim}, we show that, to reach optimality, we can focus on a one-dimensional problem. Then, in \Cref{app:sec-disc}, we exploit this result using an ad-hoc non uniform discretization. Finally, in \Cref{app:regret-anal-thm1}, we combine these results and prove \Cref{thm:ub1}.

\subsection{A one-dimensional problem}\label{app:sec-one-dim}

For convenience, we define $\mathcal{P}_1, \mathcal{P}_2$ as follows:
\begin{align*}
& \mathcal{P}_1 = \{ p \in [0,1]^m : (F_{a_1} - F_{a_2})^\top p \le (F_{a_1} - F_{a_2})^\top r \} \\
& \mathcal{P}_2 = \{ p \in [0,1]^m : (F_{a_1} - F_{a_2})^\top p \ge (F_{a_1} - F_{a_2})^\top r \}.
\end{align*}

Then, in the following Lemma, we first show that $\opt$ can be reduced to an optimization problem over $\mathcal{P}_1$.

\begin{lemma}[Optimize only over $\mathcal{P}_1$]\label{lemma:adv-opt-p1}
    Suppose that $F_{a_1}^\top r > F_{a_2}^\top r$, then, for any sequence $\{ \theta_t \}_{t=1}^T$ it holds that:
    \[
    \opt = \sup_{p \in \mathcal{P}_1} \sum_{t=1}^T U^P(p, b(\theta_t, p)).
    \]
\end{lemma}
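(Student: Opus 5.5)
Since $\mathcal{P}_1 \subseteq [0,1]^m$, the inequality $\opt \ge \sup_{p\in\mathcal{P}_1}\sum_t U^P(p,b(\theta_t,p))$ holds trivially. For the reverse inequality, the plan is to show that every contract outside $\mathcal{P}_1$ is dominated, round by round, by the null contract $\bar p=(0,\dots,0)$. Note first that $\bar p\in\mathcal{P}_1$: the defining inequality reads $0\le (F_{a_1}-F_{a_2})^\top r$, which follows from the standing assumption $F_{a_1}^\top r > F_{a_2}^\top r$.

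Fix $p\in[0,1]^m\setminus\mathcal{P}_1$, so $(F_{a_1}-F_{a_2})^\top p > (F_{a_1}-F_{a_2})^\top r$. Rearranging gives $F_{a_1}^\top(r-p) < F_{a_2}^\top(r-p)$. Hence for every type $\theta$, whichever action $b(\theta,p)\in\{a_1,a_2\}$ the agent plays,
\[
U^P(p,b(\theta,p)) \le \max\{F_{a_1}^\top(r-p),F_{a_2}^\top(r-p)\} = F_{a_2}^\top(r-p) \le F_{a_2}^\top r,
\]
where the last inequality uses $p\ge 0$.

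Next, I check that $\bar p$ earns at least $F_{a_2}^\top r$ against every type. Under $\bar p$, the agent's utilities are $U^A_\theta(\bar p,a)=-\theta C_a$. If $a_2\in\mathcal{B}(\theta,\bar p)$ is the only IC action, the principal gets exactly $F_{a_2}^\top r$. If $a_1$ is also IC (e.g.\ $\theta=0$ or $C_{a_1}=C_{a_2}$), the tie-breaking rule in favor of the principal selects the action with the larger value of $F_a^\top r$, which is $a_1$ by assumption, giving $F_{a_1}^\top r > F_{a_2}^\top r$. In either case $U^P(\bar p,\theta)\ge F_{a_2}^\top r$.

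Summing over $t$ then gives $\sum_t U^P(p,\theta_t)\le \sum_t U^P(\bar p,\theta_t)\le \sup_{q\in\mathcal{P}_1}\sum_t U^P(q,\theta_t)$ for every $p\notin\mathcal{P}_1$, which establishes the claim.

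The only step that needs care is the tie-breaking case for $\bar p$; the rest is direct algebra. The bound $U^P(\bar p,\theta)\ge F_{a_2}^\top r$ relies only on the principal-favoring tie-breaking definition of $b$ together with $F_{a_1}^\top r > F_{a_2}^\top r$.
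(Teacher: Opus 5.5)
Your proof is correct and follows essentially the same route as the paper: every contract outside $\mathcal{P}_1$ yields at most $F_{a_2}^\top r$ per round, while the null contract $\bar p \in \mathcal{P}_1$ yields at least $F_{a_2}^\top r$ per round, so the supremum over $[0,1]^m$ is achieved within $\mathcal{P}_1$. Your treatment of the null contract is slightly more careful than the paper's: the paper asserts a strict inequality $U^P(\bar p, b(\theta,\bar p)) > F_{a_2}^\top r$, which fails when $b(\theta,\bar p)=a_2$, whereas your non-strict bound, summed over the $T$ rounds, is exactly what the argument needs.
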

\begin{proof}
    Recall that $\mathcal{P}_2 = \{ p \in [0,1]^m : F_{a_2}^\top (r-p) \ge F_{a_1}^\top (r-p) \}$. Thus, take any $p \in \mathcal{P}_2$ and any $\theta \in [0,1]$ it holds that:
    \begin{align}\label{eq:opt-p1-eq-1}
        U^P(p, b(\theta,p)) & = \mathds{1} \{ a_1 \in b(\theta, p)   \} F_{a_1}^\top (r-p) + \mathds{1} \{ b(\theta, p) = a_2  \} F_{a_2}^\top (r-p) \le  F_{a_2}^\top r.
    \end{align}
    Next, observe that $\bar p = (0, \dots, 0)$ belongs to $\mathcal{P}_1$ since $F_{a_1}^\top (r - \bar p ) = F_{a_1}^\top r > F_{a_2}^\top r = F_{a_2}^\top (r- \bar p)$ and furthermore:
    \begin{align}\label{eq:opt-p1-eq-2}
        U^P(\bar p, b(\theta, \bar p)) & = \mathds{1} \{ a_1 \in b(\theta,\bar p)   \} F_{a_1}^\top r + \mathds{1} \{ b(\theta, \bar p) = a_2  \} F_{a_2}^\top r > F_{a_2}^\top r.
    \end{align}
    Chaining these results, we have that:
    \begin{align*}
        \opt & = \sup_{p \in [0,1]^m} \sum_{t=1}^T U^P(p, b(\theta_t, p)) \\
        & = \max \left\{ \sup_{p \in \mathcal{P}_1}  \sum_{t=1}^T U^P(p, b(\theta_t, p)) , \sup_{p \in \mathcal{P}_2}  \sum_{t=1}^T U^P(p, b(\theta_t, p)) \right\}  \\
        & \le  \max \left\{ \sup_{p \in \mathcal{P}_1}  \sum_{t=1}^T U^P(p, b(\theta_t, p)) , F_{a_2}^\top r \right\} \tag{\Cref{eq:opt-p1-eq-1}} \\
        & = \sup_{p \in \mathcal{P}_1}  \sum_{t=1}^T U^P(p, b(\theta_t, p)) \tag{\Cref{eq:opt-p1-eq-2}} \\
        & \le \opt .
    \end{align*}
    Hence, it must hold that $\opt = \sup_{p \in \mathcal{P}_1}  \sum_{t=1}^T U^P(p, b(\theta_t, p))$, which concludes the proof.
\end{proof}


Next, we start by rewriting the utility of contracts that belong to $\mathcal{P}_1$. 

\begin{lemma}[Utility rewriting]\label{lemma:utility}
    Consider $p \in \mathcal{P}_1$ and let $\lambda_p = (F_{a_1} - F_{a_2})^\top p$. Then, for any $\theta \in [0,1]$ it holds that:
    \[
    U^P(p, b(\theta, p)) = \mathds{1}\{ \theta \le \tilde \theta_{\max}( \lambda_p )\} \left( (F_{a_1} - F_{a_2})^\top r - \lambda_p  \right) + F_{a_2}^\top r + \lambda_p - F_{a_1}^\top p. 
    \]
\end{lemma}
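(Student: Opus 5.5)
The plan is to determine $b(\theta,p)$ explicitly for $p \in \mathcal{P}_1$, and then rewrite the resulting utility algebraically.

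\emph{Step 1: the agent's choice.} The agent prefers $a_1$ (weakly) exactly when $U^A_\theta(p,a_1) \ge U^A_\theta(p,a_2)$, that is, when
\[
F_{a_1}^\top p - \theta C_{a_1} \ge F_{a_2}^\top p - \theta C_{a_2}.
\]
As in the paper, I will use the normalization that the low-effort action carries no cost, $C_{a_2}=0$; this is the standard device for folding IR into IC. Under it the condition becomes $\lambda_p \ge \theta C_{a_1}$. Since $C_{a_1}>0$, this is equivalent to $\theta \le \lambda_p C_{a_1}^{-1} = \tilde\theta_{\max}(\lambda_p)$.

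\emph{Step 2: tie-breaking.} When $\theta \le \tilde\theta_{\max}(\lambda_p)$, we have $a_1 \in \mathcal{B}(\theta,p)$. Because $p \in \mathcal{P}_1$, the definition gives $F_{a_1}^\top(r-p) \ge F_{a_2}^\top(r-p)$. The principal therefore weakly prefers $a_1$, and the tie-breaking rule (including the footnote convention for exact ties) gives $b(\theta,p)=a_1$. When $\theta > \tilde\theta_{\max}(\lambda_p)$, the action $a_2$ is the unique best response, so $b(\theta,p)=a_2$. This establishes the first form of the utility, \Cref{eq:main-utility-eq1}:
\[
U^P = \mathds{1}\{\theta\le\tilde\theta_{\max}\}F_{a_1}^\top(r-p) + \mathds{1}\{\theta>\tilde\theta_{\max}\}F_{a_2}^\top(r-p).
\]

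\emph{Step 3: algebra.} Write the second indicator as $1-\mathds{1}\{\theta\le\tilde\theta_{\max}\}$. The utility becomes
\[
F_{a_2}^\top(r-p) + \mathds{1}\{\theta\le\tilde\theta_{\max}\}\,(F_{a_1}-F_{a_2})^\top(r-p).
\]
Expand $(F_{a_1}-F_{a_2})^\top(r-p) = (F_{a_1}-F_{a_2})^\top r - \lambda_p$. Also use $F_{a_2}^\top p = F_{a_1}^\top p - \lambda_p$, so that $F_{a_2}^\top(r-p) = F_{a_2}^\top r + \lambda_p - F_{a_1}^\top p$. Substituting both identities yields the stated formula.

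\emph{Main obstacle.} The only delicate point is the cost normalization $C_{a_2}=0$. Without it the threshold would be $\lambda_p/(C_{a_1}-C_{a_2})$ rather than $\lambda_p C_{a_1}^{-1}$, so the statement relies on the convention that the outside option is the zero-cost action $a_2$. I would state this convention explicitly at the start of the proof. The remaining care is in the equality cases: at $\theta = \tilde\theta_{\max}(\lambda_p)$, and when the principal's utilities under $a_1$ and $a_2$ coincide. In both cases the tie-breaking conventions select $a_1$, which matches the indicator $\mathds{1}\{\theta \le \tilde\theta_{\max}(\lambda_p)\}$ being $1$ there.
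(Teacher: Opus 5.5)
Your proof is correct and matches the paper's argument essentially step for step. Like the paper, you characterize $a_1 \in \mathcal{B}(\theta,p)$ as $\theta \le \tilde\theta_{\max}(\lambda_p)$, using the normalization $C_{a_2}=0$; the paper leaves this implicit, and you are right to state it explicitly. You then resolve the boundary tie via $p \in \mathcal{P}_1$ and finish with the same algebraic rewriting.
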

\begin{proof}
    Observe that $a_1 \in \mathcal{B}(\theta, p)$ holds if $F_{a_1}^\top p - \theta C_{a_1} \ge F_{a_2}^\top p$, \ie if $\theta \le \tilde \theta_{\max}( \lambda_p )$. At equality, ties break in favor of the principal and therefore, since $p \in \mathcal{P}_1$, at $\theta = \tilde \theta_{\max} ( \lambda_p )$ the agent will play action $a_1$. Thus, we have that:
    \begin{align*}
        U^P(p, b(\theta,p)) & =  \mathds{1} \{ a_1 \in b(\theta, p)   \} F_{a_1}^\top (r-p) + \mathds{1} \{ a_2=b(\theta, p)   \} F_{a_2}^\top (r-p)  \\
        & = \mathds{1} \{ \theta \le \tilde \theta_{\max}( \lambda_p )  \} F_{a_1}^\top (r-p) + \mathds{1} \{ \theta > \tilde \theta_{\max}( \lambda_p ) \} F_{a_2}^\top (r-p) \\
        & =  \mathds{1} \{ \theta \le \tilde \theta_{\max}( \lambda_p )  \} \left( F_{a_1}^\top (r-p) - F_{a_2}^\top (r-p) \right) + F_{a_2}^\top (r-p) \\
        & =  \mathds{1} \{ \theta \le \tilde \theta_{\max}( \lambda_p )  \} \{ (F_{a_1} - F_{a_2})^\top r - \lambda_p \} + F_{a_2}^\top r + \lambda_p - F_{a_1}^\top p,
    \end{align*}
    which concludes the proof.
\end{proof}

We now make an important remark that is a consequence of \Cref{lemma:utility}. Before that, we need to introduce some additional quantities. First, given $\lambda\in\mathbb{R}$, we recall the definition of $f(\lambda)$ as the expected payment to $a_1$ of cheapest contract having an expected payment gap between the two actions of $\lambda$ as a solution to the following LP. More precisely, we define $f(\lambda)$ as follows
\begin{equation}
\label{eq:f-lambda}
f(\lambda)\coloneqq
\left\{
\begin{array}{ll}
\displaystyle \min_{p \in \mathcal{P}_1} & F_{a_1}^\top p \\[0.5em]
\text{s.t.} & (F_{a_1}-F_{a_2})^\top p = \lambda
\end{array}
\right.
\end{equation}
Then, for $\lambda \in \mathbb{R}$ and $\theta \in [0,1]$, we define the expected utility of the principal as a function of $\lambda$ as 
\begin{align*}
    & \widetilde U^P(\lambda, \theta) \coloneqq \mathds{1} \{ \theta_t \le \tilde \theta_{\max}(\lambda) \}[(F_{a_1} - F_{a_2})^\top r - \lambda] + F_{a_2}^\top r + \lambda - f(\lambda), 
\end{align*}
and we let 
\begin{align*}
    & \Lambda_1 \coloneqq \{ \lambda \in \Reals: \exists p \in \mathcal{P}_1 \land (F_{a_1} - F_{a_2})^\top p = \lambda  \}
\end{align*}
be the set of expected payment gaps that can be induced by contracts in $\mathcal{P}_1$.
Then, \Cref{lemma:utility} directly implies that, for any $\lambda \in \Lambda_1$:
\begin{align*}
    \min_{p \in \mathcal{P}_1: (F_{a_1}- F_{a_2})^\top p = \lambda} U^P(p, b(\theta,p)) = \widetilde U^P(\lambda, \theta).
\end{align*}

This simple observation is important to obtain the following key lemma. This lemma allows us to turn the original optimization problem in the $m$-dimensional space of contracts into a one-dimensional problem over $\lambda$.

\begin{lemma}[From $p$'s to $\lambda$'s]\label{lemma:opt-rewrite}
    Suppose that $F_{a_1}^\top r > F_{a_2}^\top r$. For any sequence of types $\{\theta_t \}_{t=1}^T$, we have that:
    \begin{align}\label{eq:rewrite}
    \opt = \max_{\lambda \in [0, (F_{a_1} - F_{a_2})^\top r]} \sum_{t=1}^T \widetilde U^P(\lambda, \theta_t).
    \end{align} 
\end{lemma}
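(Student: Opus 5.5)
We prove the claimed equality by establishing two inequalities, starting from \Cref{lemma:adv-opt-p1}, which already restricts $\opt$ to a supremum over $\mathcal{P}_1$. We then pass from contracts to payment gaps and show that the relevant gaps lie in $[0,\bar\lambda]$ with $\bar\lambda=(F_{a_1}-F_{a_2})^\top r$.

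\emph{Upper bound on $\opt$.} Fix $p\in\mathcal{P}_1$ and set $\lambda_p=(F_{a_1}-F_{a_2})^\top p$. By definition of $\mathcal{P}_1$, $\lambda_p\le\bar\lambda$. By \Cref{lemma:utility} and $F_{a_1}^\top p\ge f(\lambda_p)$, which holds because $p$ is feasible for the LP \eqref{eq:f-lambda}, we get $U^P(p,\theta)\le \widetilde U^P(\lambda_p,\theta)$ for every $\theta$. Summing over $t$ gives $\sum_t U^P(p,\theta_t)\le\sum_t\widetilde U^P(\lambda_p,\theta_t)$.

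If $\lambda_p\ge0$, the right-hand side is at most the maximum over $[0,\bar\lambda]$. If $\lambda_p<0$, then $\tilde\theta_{\max}(\lambda_p)<0$, so every type plays $a_2$ and $U^P(p,\theta_t)=F_{a_2}^\top(r-p)\le F_{a_2}^\top r$. On the other side, $\widetilde U^P(0,\theta)\ge F_{a_2}^\top r$, since $f(0)=0$ (attained by the null contract), the indicator term is nonnegative because $\bar\lambda>0$, and the remaining term equals $F_{a_2}^\top r$. So this case is also dominated by $\lambda=0$.

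\emph{Lower bound on $\opt$.} For each $\lambda\in[0,\bar\lambda]$, the LP is feasible. Indeed, $\lambda\in\Lambda_1$ because the gap map is linear and continuous on the convex set $\mathcal{P}_1$; the null contract gives $0$, and $p=r\in[0,1]^m$ lies in $\mathcal{P}_1$ and gives $\bar\lambda$. By compactness the minimum is attained at some $p^\lambda\in\mathcal{P}_1$. \Cref{lemma:utility} then gives $U^P(p^\lambda,\theta)=\widetilde U^P(\lambda,\theta)$ exactly. Summing over $t$ and using \Cref{lemma:adv-opt-p1} yields $\opt\ge\sum_t\widetilde U^P(\lambda,\theta_t)$ for every $\lambda\in[0,\bar\lambda]$.

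\emph{Attainment of the maximum.} The two inequalities give $\opt=\sup_{\lambda\in[0,\bar\lambda]}\sum_t\widetilde U^P(\lambda,\theta_t)$, so it remains to show this supremum is a maximum; this is the only delicate point. The map $f$ is continuous on $[0,\bar\lambda]$, being the value of a parametric LP with a bounded feasible region. The indicator $\mathds{1}\{\theta\le\lambda/C_{a_1}\}$ is upper semicontinuous in $\lambda$, since the set of $\lambda$ where it equals $1$ is closed, and its coefficient $\bar\lambda-\lambda$ is nonnegative. Hence each summand is upper semicontinuous on the compact interval, and the supremum is attained. Even if attainment failed, writing $\max$ versus $\sup$ would not affect the subsequent use of the lemma.
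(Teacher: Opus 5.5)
Your proof is correct and follows essentially the same route as the paper: restrict to $\mathcal{P}_1$ via \Cref{lemma:adv-opt-p1}, identify $U^P$ with $\widetilde U^P$ through \Cref{lemma:utility} and the LP defining $f$, dominate negative gaps by $\lambda=0$, and get attainment from upper semicontinuity on the compact interval. The differences are presentational. You prove two inequalities, using $F_{a_1}^\top p \ge f(\lambda_p)$ for the upper bound and an LP minimizer $p^\lambda$ for the lower bound, and you explicitly invoke continuity of $f$ in the semicontinuity step; both make precise what the paper's chain of equalities leaves implicit.
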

\begin{proof}
    First, we observe that
    \begin{align*}
        \Lambda_1 & = \{ \lambda \in \Reals: \exists p \in \mathcal{P}_1 \land (F_{a_1} - F_{a_2})^\top p = \lambda  \} \\
        & = \{ \lambda \in \Reals: \exists p \in [0,1]^m \land F_{a_1}^\top (r-p)  \ge  F_{a_2}^\top (r-p) \land (F_{a_1} - F_{a_2})^\top p = \lambda  \} \\
        & = \{ \lambda \in \Reals: \exists p \in [0,1]^m \land (F_{a_1} - F_{a_2})^\top r  \ge  \lambda \land (F_{a_1} - F_{a_2})^\top p = \lambda  \} \\
        & = \left[ \sum_{\omega: F_{a_2,\omega} > F_{a_1,\omega}} F_{a_1,\omega} - F_{a_2,\omega} , (F_{a_1} - F_{a_2})^\top r\right].
    \end{align*}
    Then, we can rewrite the optimal value that can be extracted by the principal as follows:
    \begin{align*}
        \opt & = \sup_{p \in \mathcal{P}_1} \sum_{t=1}^T U^P(p, b(\theta_t, p)) \tag{\Cref{lemma:adv-opt-p1}} \\
        & = \sup_{p \in \mathcal{P}_1} \sum_{t=1}^T \widetilde U^P( (F_{a_1} - F_{a_2})^\top p, \theta_t) \\
        & = \sup_{\lambda \in \Lambda_1} \sum_{t=1}^T \widetilde U^P( \lambda, \theta_t),
    \end{align*}
    where the second equality holds by the remark above.
    At this point, consider $\lambda \in \Lambda_1$ such that $\lambda < 0$. Then, for any $\theta \in [0,1]$, it holds that:
    \begin{align*}
        \widetilde U ^P(\lambda, \theta) & = \mathds{1} \{ \theta \le \tilde \theta_{\max}(\lambda) \}[(F_{a_1} - F_{a_2})^\top r - \lambda] + F_{a_2}^\top r + \lambda - f(\lambda) \\
        & = F_{a_2}^\top r + \lambda - f(\lambda) \tag{$\tilde \theta_{\max}(\lambda) < 0$} \\
        & \le F_{a_2}^\top r \tag{$\lambda < 0$ and $f(\lambda) \ge 0$}.
    \end{align*}
    Furthermore, at $\bar \lambda = 0$, we have that:
    \begin{align*}
        \widetilde U^P(\bar \lambda, \theta) & = \mathds{1}\{\theta =  0)[(F_{a_1} - F_{a_2})^\top r - \bar \lambda \} + F_{a_2}^\top r + \bar \lambda - f(\bar \lambda) \\
        & = \mathds{1}\{\theta =  0)[(F_{a_1} - F_{a_2})^\top r  \} + F_{a_2}^\top r \tag{$f(\bar \lambda) = 0$ and $\bar \lambda = 0$} \\
        & \ge F_{a_2}^\top r.
    \end{align*}
    Thus, for all $\lambda < 0$ and any $\theta$, $\widetilde U^P(\lambda, \theta) \le \widetilde U^P(\bar \lambda, \theta)$.
    Therefore, we obtained that:
    \[
    \opt = \sup_{\lambda \in \Lambda_1} \sum_{t=1}^T \widetilde U^P( \lambda, \theta_t) = \sup_{\lambda \in [0, (F_{a_1} - F_{a_2})^\top r]} \sum_{t=1}^T \widetilde U^P(\lambda, \theta_t).
    \]
    Finally, we prove that the sup is attained.
    To this end, fix $t \in [T]$ and observe that $\lambda \to \mathds{1} \{ \theta \le \tilde \theta_{\max}(\lambda)\}$ is upper semicontinuous for any $\theta$. \footnote{See \Cref{app:others} for the definition of upper semicontinuity.} Furthermore, the product of two positive upper semicontinuous functions is upper semicontinuous, and hence $\mathds{1}\{\theta \le \tilde \theta_{\max}(\lambda)\} \left[(F_{a_1}-F_{a_2})^\top r - \lambda \right]$ is upper semicontinuous since $F_{a_1}^\top r > F_{a_2}^\top r$ and $\lambda \in [0, (F_{a_1} - F_{a_2})^\top r]$. Moreover, the sum of two upper semicontinuous functions is upper semicontinuous and thus $\sum_{t=1}^T \widetilde U^P(\lambda, \theta_t)$ is upper semicontinuous as well. Any upper semicontinuous function on a compact domain attains a maximum on that domain (\Cref{thm:opt-usc}), and therefore the sup is attained. This concludes the proof.
\end{proof}

Finally, we prove two important properties of the utility as a function of $\lambda$.
First, the next lemma shows that the variation of the one-dimensional objective $\widetilde U^P(\lambda,\theta)$ as $\lambda$ changes is controlled by the variation of the function $f(\lambda)$. As we discussed in \Cref{sec:ub-any-seq}, this property is crucial in the discretization argument.

\begin{lemma}[Objective Function Differences]\label{lemma:obj-diff}
    Consider $\lambda_2 \ge \lambda_1 \ge 0$ and any $\theta \in [0,1]$. It holds that:
    \begin{align*}
        \widetilde U^P(\lambda_1, \theta) - \widetilde U^P(\lambda_2,  \theta) \le f(\lambda_2) - f(\lambda_1). 
    \end{align*}
\end{lemma}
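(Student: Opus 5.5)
We prove the inequality by writing out the difference from the definition of $\widetilde U^P$ and splitting on whether the indicators switch. Write $\bar\lambda = (F_{a_1}-F_{a_2})^\top r$ and $I_j = \mathds{1}\{\theta \le \tilde\theta_{\max}(\lambda_j)\}$ for $j=1,2$. Expanding the definition gives
\[
\widetilde U^P(\lambda_1,\theta) - \widetilde U^P(\lambda_2,\theta) = I_1(\bar\lambda - \lambda_1) - I_2(\bar\lambda - \lambda_2) + (\lambda_1 - \lambda_2) + f(\lambda_2) - f(\lambda_1).
\]
It therefore suffices to show that $D := I_1(\bar\lambda-\lambda_1) - I_2(\bar\lambda-\lambda_2) + \lambda_1 - \lambda_2 \le 0$.

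Since $C_{a_1}>0$, the map $\lambda\mapsto\tilde\theta_{\max}(\lambda)=\lambda/C_{a_1}$ is nondecreasing. Hence $\lambda_1\le\lambda_2$ gives $I_1\le I_2$. This leaves three cases.

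\begin{enumerate}
\item If $I_1=I_2=0$, then $D=\lambda_1-\lambda_2\le 0$.
\item If $I_1=I_2=1$, the $\lambda$ terms cancel: $D = (\bar\lambda-\lambda_1)-(\bar\lambda-\lambda_2)+\lambda_1-\lambda_2 = 0$.
\item If $I_1=0$ and $I_2=1$, then $D = -(\bar\lambda-\lambda_2)+\lambda_1-\lambda_2 = \lambda_1-\bar\lambda$.
\end{enumerate}

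The last case is the only delicate step. We need $\lambda_1\le\bar\lambda$, which holds when $\lambda_1$ lies in the relevant range $[0,\bar\lambda]$ identified in \Cref{lemma:opt-rewrite}. If the statement is meant for $\lambda_1,\lambda_2$ in the feasible set $\Lambda_1$ (equivalently, where $f$ is finite), then $\lambda_1\le\lambda_2\le\bar\lambda$ by the characterization $\Lambda_1\subseteq(-\infty,\bar\lambda]$ from that proof. If $\lambda_2$ is infeasible, then $f(\lambda_2)=+\infty$ under the usual convention and the claim is trivial. In all cases $D\le 0$, which gives the lemma.

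Monotonicity of $f$ is not needed. The only ingredients are $C_{a_1}>0$, the monotonicity of the indicator in $\lambda$, and the upper bound $\lambda\le\bar\lambda$ inherited from $\mathcal{P}_1$.
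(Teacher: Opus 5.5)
Your proof is correct and matches the paper's argument: the paper also expands the definition of $\widetilde U^P$ and uses $\mathds{1}\{\theta \le \tilde\theta_{\max}(\lambda_1)\} \le \mathds{1}\{\theta \le \tilde\theta_{\max}(\lambda_2)\}$, written as a single chain of inequalities where you split into three cases. Your remark that the mixed case needs $\lambda_1 \le \bar\lambda$ is well taken: the paper's step that swaps the first indicator for the second in front of $[\bar\lambda - \lambda_1]$ silently uses the same nonnegativity, which holds on $[0,\bar\lambda]$, the only range where the lemma is applied.
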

\begin{proof}
    The proof follows by using the definition of $\widetilde U^P(\lambda,\theta)$. Specifically:
\begin{align*}
    \widetilde U^P(\lambda_1, \theta) - \widetilde U^P(\lambda_2, \theta)
    = \lambda_1  &- \lambda_2 + f(\lambda_2) - f(\lambda_1) \\
    &+ \mathds{1}\{\theta \le \tilde \theta_{\max}(\lambda_1)\}
    \left[(F_{a_1}-F_{a_2})^\top r - \lambda_1 \right] \\
    &- \mathds{1}\{\theta \le \tilde \theta_{\max}(\lambda_2)\}
    \left[(F_{a_1}-F_{a_2})^\top r - \lambda_2 \right].
\end{align*}
Now, since $\lambda_1 \le \lambda_2$, we have that $\mathds{1}\{\theta \le \tilde \theta_{\max}(\lambda_1)\} \le \mathds{1}\{\theta \le \tilde \theta_{\max}(\lambda_2)\}$, thus:
\begin{align*}
    \widetilde U^P(\lambda_1, \theta) - \widetilde U^P(\lambda_2, \theta)
    &\le \lambda_1 - \lambda_2 + f(\lambda_2) - f(\lambda_1) \\
    &\quad + \mathds{1}\{\theta \le \tilde \theta_{\max}(\lambda_2)\}\left[\lambda_2 - \lambda_1 \right] \\
    &\le f(\lambda_2) - f(\lambda_1).
\end{align*}
which concludes the proof.
\end{proof}

Finally, we conclude by highlighting some properties of $f(\lambda)$ that we use in our discretization argument in the next section.

\begin{lemma}[Characterization of $f(\lambda)$]\label{lemma:f-lambda-properties}  
Consider $f(\lambda)$ and suppose $F_{a_1}^\top r > F_{a_2}^\top r$. Then, it holds that $f(0) = 0$. Moreover $f(\lambda) \le 1$ for all $\lambda$'s for which the problem is feasible. Finally, $f(\lambda)$ is well-defined, continuous and strictly increasing in the interval $[0, (F_{a_1} - F_{a_2})^\top r]$.
\end{lemma}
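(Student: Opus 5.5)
We treat $f$ as the optimal value of a parametric LP in its right-hand side $\lambda$ and check each property directly from the constraint structure. Write $\Delta := F_{a_1}-F_{a_2}$ and $\bar\lambda := \Delta^\top r > 0$. For $\lambda \in [0,\bar\lambda]$ the constraint $p\in\mathcal{P}_1$, i.e.\ $\Delta^\top p \le \bar\lambda$, is implied by $\Delta^\top p = \lambda$. So on this interval $f(\lambda) = \min\{F_{a_1}^\top p : p\in[0,1]^m,\ \Delta^\top p = \lambda\}$.

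\textbf{Well-definedness and the boundary values.} The point $p=r$ lies in $[0,1]^m$ and satisfies $\Delta^\top r = \bar\lambda$. The point $\bar p = 0$ satisfies $\Delta^\top \bar p = 0$. For $\lambda\in[0,\bar\lambda]$ the convex combination $p = (\lambda/\bar\lambda) r$ is therefore feasible. The feasible set is a nonempty compact polytope and the objective is linear, so the minimum is attained and $f$ is well defined. Since $F_{a_1}^\top p \ge 0$ for every $p\ge 0$ and $\bar p=0$ achieves value $0$, we get $f(0)=0$. For any feasible $p$ we have $F_{a_1}^\top p \le \sum_\omega F_{a_1,\omega} = 1$, hence $f\le 1$ wherever the LP is feasible.

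\textbf{Convexity and continuity.} Take optimal solutions $p_1, p_2$ at $\lambda_1,\lambda_2$. Their convex combination is feasible for the convex combination of the right-hand sides. So $f$ is convex on $[0,\bar\lambda]$, and a convex function is continuous on the interior of its domain. At the endpoints we argue directly. Right-continuity at $0$ follows from $0\le f(\lambda)\le F_{a_1}^\top(\lambda r/\bar\lambda)\to 0$. Left-continuity at $\bar\lambda$ follows from convexity (which gives $\limsup_{\lambda\uparrow\bar\lambda} f(\lambda)\le f(\bar\lambda)$) together with lower semicontinuity of the LP value under continuous right-hand-side perturbations with a compact feasible region. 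For the latter we take a sequence of optimal solutions $p_k$, extract a convergent subsequence by compactness, and note that its limit is feasible at $\bar\lambda$.

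\textbf{Strict monotonicity.} This is the step needing care. We first show $f(\lambda)>0$ for every $\lambda>0$. If $F_{a_1}^\top p = 0$ then $p_\omega=0$ whenever $F_{a_1,\omega}>0$, so $\Delta^\top p = -\sum_\omega F_{a_2,\omega}p_\omega \le 0 < \lambda$, which is infeasible. Now take $0\le\lambda_1<\lambda_2\le\bar\lambda$ and write $\lambda_1 = (1-s)\cdot 0 + s\lambda_2$ with $s=\lambda_1/\lambda_2<1$. Convexity gives
\[
f(\lambda_1)\le (1-s)f(0)+s f(\lambda_2) = s f(\lambda_2) < f(\lambda_2),
\]
where the last inequality uses $f(\lambda_2)>0$. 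This yields strict increase.

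The main obstacle is mostly bookkeeping at the endpoint $\bar\lambda$, where the $\mathcal{P}_1$ constraint becomes tight. The positivity-plus-convexity trick handles monotonicity cleanly, so no case analysis on active constraints is needed.
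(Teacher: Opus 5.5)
Your proof is correct and follows essentially the same route as the paper. Convexity comes from convex combinations of optimal solutions and gives continuity, and strict monotonicity comes from $f(\lambda_1)\le(\lambda_1/\lambda_2)f(\lambda_2)$ together with $f(\lambda_2)>0$; the paper gets this inequality by scaling $p_2$ directly, which is exactly your convex combination with $f(0)=0$, and you are more careful than the paper on the feasibility witness $(\lambda/\bar\lambda)r$, attainment by compactness, continuity at the endpoints of $[0,\bar\lambda]$, and an actual proof that $f(\lambda)>0$ for $\lambda>0$.
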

\begin{proof}
    At $0$, $f(0)$ reduces to the following optimization problem:
    \begin{align}\label{eq:char-proof-eq1}
        \min_{p \in \mathcal{P}_1} F_{a_1}^\top p \quad \text{~s.t.~} (F_{a_1}-F_{a_2})^\top p = 0. 
    \end{align}
    Observe that, since $p \ge 0$ and $F_{a_1}\in\Delta(\Omega)$, $f(\lambda)$ is always greater than $0$. Furthermore, since $F_{a_1}^\top r > F_{a_2}^\top r$, the null contract $\bar p = (0, \dots, 0)$ belongs to $\mathcal{P}_1$. Indeed:
    \[
    (F_{a_1} - F_{a_2})^\top \bar p = 0 < (F_{a_1} - F_{a_2})^\top r.
    \]
    Since the null contract belongs to the feasible region of \eqref{eq:char-proof-eq1} and since it yields $0$ as objective function, it is optimal.

    Next, suppose that the optimization problem that defines $f(\lambda)$ admits a non-empty feasible region. Then, for any $p \in \mathcal{P}_1$, $F_{a_1}^\top p \le 1$. Thus, $f(\lambda)$ is always bounded by $1$.

    Finally, let $\lambda \in [0, (F_{a_1} - F_{a_2})^\top r]$ and observe that the feasible region of \eqref{eq:f-lambda} is non-empty. This is guaranteed by the fact that $p \in \mathcal{P}_1$ if $p \in [0,1]^m$ and $(F_{a_1} - F_{a_2})^\top p \le (F_{a_1} - F_{a_2})^\top r$. Thus, for any $\lambda \in [0, (F_{a_1} - F_{a_2})^\top r]$, we can always find $p$ such that $(F_{a_1} - F_{a_2})^\top p = \lambda$. Next, we show that $f$ is continuous. To this end, we show that $f$ is convex. Consider $\lambda_1, \lambda_2 \in [0, (F_{a_1} - F_{a_2})^\top r]$ and let $p_1, p_2$ be optimal contracts for $\lambda_1$ and $\lambda_2$ respectively, \ie $f(\lambda_1) = F_{a_1}^\top p_1$ and $f(\lambda_2) = F_{a_1}^\top p_2$. Let $p_{\alpha} = \alpha p_1 + (1-\alpha)p_2$ for $\alpha \in [0,1]$. Then, by linearity, $p_\alpha$ is feasible point of \eqref{eq:f-lambda} with $\lambda = \alpha \lambda_1 + (1-\alpha)\lambda_2$. Therefore, it holds that:
    \[
    f(\lambda_\alpha) \le F_{a_1}^\top p_\alpha = \alpha F_{a_1}^\top p_1 + (1-\alpha)F_{a_1}^\top p_2 = \alpha f(\lambda_1) + (1-\alpha) f(\lambda_2),
    \]
    and, thus $f$ is convex. Concerning the strict monotonicity, instead, consider $\lambda_1 < \lambda_2$, and let $p_2$ be an optimal solution for $\lambda_2$, \ie $f(\lambda_2) = F_{a_1}^\top p_2$. Let $\alpha = \tfrac{\lambda_1}{\lambda_2}$ and let $p = \alpha \cdot p_2$. Then, observe that 
    \[
    (F_{a_1} - F_{a_2})^\top p = \frac{\lambda_1}{\lambda_2}(F_{a_1} - F_{a_2})^\top p_2 = \lambda_1.
    \]
    Furthermore, since $p_2 \in \mathcal{P}_1$ also $p \in \mathcal{P}_1$. Hence, $p$ is a feasible point of \eqref{eq:f-lambda} with $\lambda$ set to $\lambda_1$. Therefore, it holds that:
    \[
    f(\lambda_1) \le F_{a_1}^\top p = \frac{\lambda_1}{\lambda_2} F_{a_1}^\top p_2 = \frac{\lambda_1}{\lambda_2} f(\lambda_2) < f(\lambda_2),
    \]
    where in the last point, we have used that for $\lambda_2 > 0$, $f(\lambda_2)$ is strictly positive due to the constraint $(F_{a_1} - F_{a_2})^\top p = \lambda_2$. This concludes the proof.
\end{proof}

\subsection{Properties of the non-uniform discretization}\label{app:sec-disc}

We now prove some key properties of the non-uniform discretization that we introduced in \Cref{sec:ub-any-seq}.

\begin{lemma}[Discretization Error]\label{lemma:discretization}
    Let $\epsilon > 0$ and suppose that $F_{a_1}^\top r > F_{a_2}^\top r$. Then, for any sequence $\{ \theta_t \}_{t=1}^T$, the following holds:  
    \begin{align}
        & |\Lambda^{(\epsilon)}| \le \frac{1}{\epsilon} + 1 \label{eq:disc-count} \\
        & \opt - \max_{\lambda \in \Lambda^{(\epsilon)}} \sum_{t=1}^T \widetilde U^P(\lambda,\theta_t) \le T\epsilon. \label{eq:disc-err}
    \end{align}
\end{lemma}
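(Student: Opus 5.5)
The plan is to use the properties of $f$ from \Cref{lemma:f-lambda-properties} together with the utility-difference bound of \Cref{lemma:obj-diff}. We treat the counting bound \eqref{eq:disc-count} and the approximation bound \eqref{eq:disc-err} separately.

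\emph{Well-definedness and the count \eqref{eq:disc-count}.} First I check that the sequence \eqref{eq:discretization-main} is well defined. By \Cref{lemma:f-lambda-properties}, $f$ is continuous and strictly increasing on $[0,\bar\lambda]$, with $f(0)=0$ and $f(\bar\lambda)\le 1$. Suppose $f(\lambda^{(n)})+\epsilon<f(\bar\lambda)$. Then the intermediate value theorem gives a point in $(\lambda^{(n)},\bar\lambda)$ with $f=f(\lambda^{(n)})+\epsilon$, and strict monotonicity makes it unique, so the infimum is attained.

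By induction, each non-terminal point satisfies $f(\lambda^{(n)})=n\epsilon$. Each such point also requires $n\epsilon<f(\bar\lambda)\le 1$, so there are at most $\lceil 1/\epsilon\rceil$ such points, including $\lambda^{(0)}$. Adding the terminal point $\bar\lambda$ gives $|\Lambda^{(\epsilon)}|\le 1/\epsilon+1$; I will handle the ceiling carefully and, if needed, note that the strict inequality $n\epsilon<1$ gives $n<1/\epsilon$. The sequence visibly terminates after finitely many steps.

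\emph{Approximation \eqref{eq:disc-err}.} By \Cref{lemma:opt-rewrite}, $\opt=\sum_t\widetilde U^P(\lambda^\star,\theta_t)$ for some maximizer $\lambda^\star\in[0,\bar\lambda]$. The consecutive points of $\Lambda^{(\epsilon)}$ cover $[0,\bar\lambda]$, starting at $0$ and ending at $\bar\lambda$. Hence either $\lambda^\star\in\Lambda^{(\epsilon)}$, which is trivial, or $\lambda^{(n)}<\lambda^\star<\lambda^{(n+1)}$ for some $n$.

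Take $\lambda'=\lambda^{(n+1)}\ge\lambda^\star$, the right neighbor. Rounding up is the right direction, because \Cref{lemma:obj-diff} bounds $\widetilde U^P(\lambda_1,\theta)-\widetilde U^P(\lambda_2,\theta)$ only for $\lambda_1\le\lambda_2$. The lemma then gives, for every $t$,
\[
\widetilde U^P(\lambda^\star,\theta_t)-\widetilde U^P(\lambda',\theta_t)\le f(\lambda')-f(\lambda^\star).
\]
Since $f$ is increasing, $f(\lambda^\star)>f(\lambda^{(n)})$. Also $f(\lambda^{(n+1)})\le f(\lambda^{(n)})+\epsilon$: this holds with equality in the first case of \eqref{eq:discretization-main}, and in the second case $\lambda^{(n+1)}=\bar\lambda$ with $f(\bar\lambda)\le f(\lambda^{(n)})+\epsilon$. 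Therefore $f(\lambda')-f(\lambda^\star)<\epsilon$.

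Summing over $t$ gives $\opt-\sum_t\widetilde U^P(\lambda',\theta_t)\le T\epsilon$, and the maximum over $\Lambda^{(\epsilon)}$ is at least the value at $\lambda'$.

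\emph{Main obstacle.} The delicate part is the boundary behavior. The last step of \eqref{eq:discretization-main} must still satisfy the $\epsilon$-gap in $f$, which the argument above handles. The monotone direction of \Cref{lemma:obj-diff} also forces rounding up rather than to the nearest point. Both are resolved by the case analysis above, so beyond these checks the argument is straightforward.
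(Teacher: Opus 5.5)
Your argument for \eqref{eq:disc-err} is the same as the paper's and is correct. You take a maximizer $\lambda^\star$ from \Cref{lemma:opt-rewrite}, round it up to its right neighbour in $\Lambda^{(\epsilon)}$, apply \Cref{lemma:obj-diff}, and use that consecutive grid points differ in $f$ by at most $\epsilon$. You are also more explicit than the paper about why the final, clipped step still has $f$-gap at most $\epsilon$.

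The counting part has a step that fails. You correctly show that every non-terminal point satisfies $f(\lambda^{(n)})=n\epsilon<f(\bar\lambda)\le 1$. There are therefore at most $\lceil 1/\epsilon\rceil$ non-terminal points, which gives $|\Lambda^{(\epsilon)}|\le\lceil 1/\epsilon\rceil+1$. Passing from this to $1/\epsilon+1$ is valid only when $1/\epsilon$ is an integer. No careful handling of the ceiling can rescue it, because \eqref{eq:disc-count} is false as stated. Take $m=2$, $F_{a_1}=(1,0)$, $F_{a_2}=(0,1)$, $r=(1,0)$, $C_{a_1}=1$, $C_{a_2}=0$. Then $\bar\lambda=1$, $\mathcal{P}_1=[0,1]^2$, and $f(\lambda)=\lambda$ on $[0,1]$. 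With $\epsilon=2/5$ the sequence is $0,\tfrac25,\tfrac45,1$, which is four points, while $1/\epsilon+1=7/2$.

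What your argument actually proves is $|\Lambda^{(\epsilon)}|\le\lceil 1/\epsilon\rceil+1<1/\epsilon+2$. That is the correct statement, and it is all \Cref{thm:ub1} needs, since only $|\mathcal{P}^{(\epsilon)}|=O(1/\epsilon)$ is used. The paper's own telescoping step makes the same slip. It bounds $(|\Lambda^{(\epsilon)}|-1)\epsilon$ by the total increase of $f$ along the grid, which treats every consecutive $f$-gap as at least $\epsilon$. The last gap can be strictly smaller, as the paper's own footnote acknowledges.
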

\begin{proof}
    We first prove that $|\Lambda^{(\epsilon)}| \le \frac{1}{\epsilon} + 1$.
    By definition, it holds that:
    \[
        (|\Lambda^{(\epsilon)}|-1) \epsilon \le  \sum_{i = 1}^{|\Lambda^{(\epsilon)}|} \left( f(\lambda^{(i)}) - f(\lambda^{(i-1)}) \right) = f(\lambda^{(|\Lambda^{(\epsilon)}|)}) - f(\lambda^{(0)}) \le 1, 
    \]
    where in the last step we used $f(\lambda^{(0)}) = 0$ and $f(\cdot) \le 1$ (\Cref{lemma:f-lambda-properties}). Thus, \Cref{eq:disc-count} holds.

    Next, we continue by proving \Cref{eq:disc-err}. By applying  \Cref{lemma:opt-rewrite}, we have that there exists $\lambda^{\star} \in [0, (F_{a_1} - F_{a_2})^\top r]$ such that $\opt = \sum_{t=1}^T \widetilde U^P(\lambda^{\star}, \theta_t)$. Let $\tilde \lambda = \min \{ \lambda \in \Lambda^{(\epsilon)}: \lambda \ge \lambda^{\star} \}$. Observe that, by construction, $\tilde \lambda$ always exists as $\max_{\lambda \in \Lambda^{(\epsilon)}} \lambda = (F_{a_1} - F_{a_2})^\top r$ by definition. Now, suppose that $\tilde \lambda \ne \lambda^{\star}$ (otherwise \Cref{eq:disc-err} is direct) and observe that the set $\overline \Lambda^{(\epsilon)} = \{\lambda \in \Lambda^{(\epsilon)}: \lambda \le \lambda^{\star} \}$ is always non-empty, since $0 \in \Lambda^{(\epsilon)}$ for all $\epsilon > 0$ and $\lambda^{\star} \ge 0$. Denote by $\overline \lambda$ the following quantity $\overline \lambda = \max_{\lambda \in \overline \Lambda^{(\epsilon)}} \lambda$ and observe that:
    \begin{align}\label{eq:disc-err-helper-1}
        f(\tilde \lambda) - f(\overline \lambda) \le \epsilon. 
    \end{align}
    The proof of \Cref{eq:disc-err-helper-1} is direct once we realize that, by definition, there exists $n \in \Naturals$ such that $\overline \lambda = \lambda^{(n)}$ and $\tilde \lambda = \lambda^{(n+1)}$ (\ie $\overline \lambda $ and $\tilde \lambda$ are adjacent elements of the sequence we constructed).
    
    Then, we have that:
    \begin{align*}
        \opt - \max_{\lambda \in \Lambda^{(\epsilon)}} \sum_{t=1}^T \widetilde U^P(\lambda, \theta_t) & \le \sum_{t=1}^T \widetilde U^P(\lambda^{\star}, \theta_t) - \widetilde U^{P}(\tilde \lambda, \theta_t)  \\
        & \le T\left( f(\tilde \lambda) - f(\lambda^{\star}) \right) \tag{\Cref{lemma:obj-diff} and $\tilde \lambda \ge \lambda^{\star}$} \\
        & \le T\left( f(\tilde \lambda) - f(\overline \lambda) \right) \tag{$\overline \lambda \le \lambda^{\star}$ and $\lambda \to f(\lambda)$ is increasing}\\
        & \le T\epsilon, \tag{\Cref{eq:disc-err-helper-1}}
    \end{align*}
    This concludes the proof.
\end{proof}

\subsection{Combining the results}\label{app:regret-anal-thm1}
Finally, we can combine these results and prove \Cref{thm:ub1}.

\begin{proof}[Proof of \Cref{thm:ub1}]
    If $F_{a_2}^\top r \ge F_{a_1}^\top r$, the result is direct from \Cref{lemma:easy-case}.

    Next, consider $F_{a_1}^\top r > F_{a_2}^\top r$. Let $\epsilon > 0$ and consider the following set of contracts:
    \begin{align*}
        \mathcal{P}^{(\epsilon)} = \{p \in [0,1]^m : \exists \lambda  \in \Lambda^{(\epsilon)} \text{~and~} F_{a_1}^\top p = f(\lambda)  \}.
    \end{align*}
    Consider any minimax optimal algorithm for finite-armed adversarial bandits and run it using as arm space $\mathcal{P}^{(\epsilon)}$, \ie \citet{audibert2010regret}. Recall that the algorithm in \citet{audibert2010regret} achieves $\mathcal{O}(\sqrt{TK})$ regret on adversarial oblivious bandits with $K$ arms. 

    Then, it holds that:
    \begin{align*}
        R_T & =  \opt - \mathbb{E}\left[\sum_{t=1}^T U^P(p_t, b(\theta_t, p_t) \right] \\
        & =  \opt -  \max_{p \in \mathcal{P}^{(\epsilon)}} \sum_{t=1}^T U^P(p, b(\theta_t, p)) + \max_{p \in \mathcal{P}^{(\epsilon)}} \sum_{t=1}^T U^P(p, b(\theta_t, p)) - \mathbb{E}\left[\sum_{t=1}^T U^P(p_t, b(\theta_t, p_t) \right] \\
        & = \opt -  \max_{\lambda \in {\Lambda}^{(\epsilon)}} \sum_{t=1}^T \widetilde U^P(\lambda, \theta_t) + \max_{p \in \mathcal{P}^{(\epsilon)}} \sum_{t=1}^T U^P(p, b(\theta_t, p)) - \mathbb{E}\left[\sum_{t=1}^T U^P(p_t, b(\theta_t, p_t) \right] \\ 
        & \le T \epsilon + \max_{p \in \mathcal{P}^{(\epsilon)}} \sum_{t=1}^T U^P(p, b(\theta_t, p)) - \mathbb{E}\left[\sum_{t=1}^T U^P(p_t, b(\theta_t, p_t) \right] \tag{\Cref{lemma:discretization}} \\
        & \in \mathcal{O}\left( T \epsilon + \sqrt{|\mathcal{P}^{(\epsilon)}|T} \right) \tag{Minimax optimality} \\
        & = \mathcal{O}\left( T \epsilon + \sqrt{\frac{T}{\epsilon}} \right) \tag{\Cref{lemma:discretization}}.
    \end{align*}
    where the second inequality follows from \Cref{lemma:utility} together with the definition of $f(\lambda)$ and the fact that $p \in \mathcal{P}^{(\epsilon)}$ if and only if there exists $\lambda \in \Lambda^{(\epsilon)}$ such that $F_{a_1}^\top p = f(\lambda)$.
    
    Now, choosing $\epsilon = T^{-1/3}$ yields the result.
\end{proof}

\newpage
\section{Proof of \Cref{thm:lb1}}\label{sec:lb}

In this section, we give a formal proof of the $\Omega(T^{2/3})$ lower bound, which holds even when the type sequence is stochastic and comes from an absolutely continuous distribution. 

\subsection{Construction}
\begin{figure}[!tp]
    \centering
    \scalebox{0.7}{\pgfplotsset{compat=1.18}
\definecolor{plotblue}{HTML}{0064E0}

\pgfmathsetmacro{\tval}{0.15}   
\pgfmathsetmacro{\cval}{0.48}  
\pgfmathsetmacro{\dval}{0.1}  

\pgfmathsetmacro{\cd}{\cval - \dval}       
\pgfmathsetmacro{\cpd}{\cval + \dval}      
\pgfmathsetmacro{\omt}{1 - \tval}          

\pgfmathsetmacro{\yFlat}{1 / (1 - \tval)}
\pgfmathsetmacro{\ySpike}{\tval * 2 /
    ((1 - \cval - \dval) * (1 - \cval + \dval) )}
\pgfmathsetmacro{\yRight}{1 / \tval}          
\pgfmathsetmacro{\ymx}{max(\yFlat, max(\ySpike, \yRight)) * 1.14}

\pgfmathsetmacro{\jumpT}{\tval / (1 - \tval)^2}     
\pgfmathsetmacro{\jumpCD}{\tval / (1 - \cd)^2}      
\pgfmathsetmacro{\jumpCPD}{\tval / (1 - \cpd)^2}    

\begin{tikzpicture}
  \begin{axis}[
    width  = 13cm,
    height = 5.0cm,
    xmin   = -0.00,  xmax = 1.03,
    ymin   = -0.04,  ymax = \ymx,
    axis x line = bottom,
    axis y line = left,
    xlabel = {$x$},
    ylabel = {$\gamma_c(x)$},
    xlabel style     = {below, font=\small},
    ylabel style     = {above, font=\small},
    tick label style = {font=\small},
    xtick       = {\tval, \cd, \cval, \cpd, \omt, 1},
    xticklabels = {$t$, $c{-}\delta$, $\phantom{\delta}c\phantom{\delta}$, $c{+}\delta$, $1{-}t$, $1$},
    extra x ticks = {0},
    extra x tick labels = {$0$},
    ytick = \empty,
    clip  = true,
    clip mode = individual,
  ]



  \addplot[fill=plotblue, fill opacity=0.0, draw=none]
    coordinates {(0,0) (0,\yFlat) (\tval,\yFlat) (\tval,0)};

  \addplot[fill=plotblue, fill opacity=0.0, draw=none,
           domain=\tval:\cd, samples=80]
    {\tval / (1-x)^2} \closedcycle;

  \addplot[fill=plotblue, fill opacity=0.0, draw=none]
    coordinates {(\cd,0) (\cd,\ySpike) (\cval,\ySpike) (\cval,0)};

  \addplot[fill=plotblue, fill opacity=0.0, draw=none,
           domain=\cpd:\omt, samples=100]
    {\tval / (1-x)^2} \closedcycle;


  \addplot[color=plotblue, line width=2.6pt]
    coordinates {(0, \yFlat) (\tval, \yFlat)};

  \addplot[color=plotblue, line width=2.6pt,
           domain=\tval:\cd, samples=80]
    {\tval / (1-x)^2};

  \addplot[color=plotblue, line width=2.6pt]
    coordinates {(\cd, \ySpike) (\cval, \ySpike)};

  \addplot[color=plotblue, line width=2.6pt]
    coordinates {(\cval, 0) (\cpd, 0)};

  \addplot[color=plotblue, line width=2.6pt,
           domain=\cpd:\omt, samples=100]
    {\tval / (1-x)^2};

  \addplot[color=plotblue, line width=2.6pt]
    coordinates {(\omt, 0) (1, 0)};


  \draw[dashed, color=plotblue, line width=0.5pt]
    (axis cs:\tval, 0) -- (axis cs:\tval, \yFlat);
     
%
  \draw[dashed, color=plotblue, line width=0.5pt]
    (axis cs:\cd, 0) -- (axis cs:\cd, \ySpike);
%
  \draw[dashed, color=plotblue, line width=0.5pt]
    (axis cs:\cval, 0) -- (axis cs:\cval, \ySpike);
%
  \draw[dashed, color=plotblue, line width=0.5pt]
    (axis cs:\cpd, 0) -- (axis cs:\cpd, \jumpCPD);
%
  \draw[dashed, color=plotblue, line width=0.5pt]
   (axis cs:\omt, 0) -- (axis cs:\omt, \yRight);

  \end{axis}
\end{tikzpicture}}
    \caption{Type distribution density of the lower bound instance.}
    \label{fig:pdf}
\end{figure}
We consider the following $K+1$ instances $\{\I_0,\I_1,\ldots, \I_K\}$, where $\I_0$ will be the \emph{base} instance and $\{\I_1,\ldots, \I_K\}$ will be the \emph{perturbed} instances. Let $\{\nu_0,\nu_1,\ldots, \nu_K\}$ be the joint probability distributions over the $T$ turns of the interaction between the instances and the (possibly randomized) algorithm.
Each perturbed instance will be parametrized by two parameters $(c,\delta)$, which we will fix in the following.
We now describe a generic instance parametrized by $(c,\delta)\in[0,1]^2$. We have $\Gamma_{c,\delta}$, the type' distribution (that we will describe shortly), $2$ outcomes $\omega_1$ and $\omega_2$ such that $r_{\omega_1}=1$ and $r_{\omega_2}=0$, the cost of performing action $a_1$ is $c_{a_1}=1$ and the cost of performing action $a_2$ is $c_{a_2}=0$, action $a_1$ leads deterministically to $\omega_1$ and action $a_2$ leads deterministically to $\omega_2$.

Fix $t=1/5$ and any $\epsilon>0$. To each instance $\I_k$ with $k\ge 1$, we are going to associate a type distribution $\Gamma_k=\Gamma_{c_k,\delta_k}$, where $c_k=t+\bar\delta(2k-1)$, $\bar\delta:=\epsilon\frac{1-t}{t+2\epsilon}$, and finally, $\delta_k=\epsilon\frac{1-c_k}{t+\epsilon}$. The base instance $\I_0$ will be defined with $\Gamma_0=\Gamma_{0,0}$, whose definition will be clear once we give a concrete parametrized definition of $\Gamma_{c,\delta}$.

First, we characterize the shape of the principal utility for a generic instance parametrized by any $\Gamma$.
\begin{lemma}\label{lem:LB1}
    For each contract $p\in[0,1]^2$, and any type distribution $\Gamma$, the principal' utility is:
    \[
    U^P(p)=\mathbb{P}_{\Gamma}(\theta\le p_{\omega_1}-p_{\omega_2})(1-(p_{\omega_1}-p_{\omega_2}))-p_{\omega_2}.
    \]
\end{lemma}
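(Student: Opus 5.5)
The plan is to compute the utility directly from the definitions of $U^P$ and the tie-breaking rule $b(\theta,p)$, specialized to the two-outcome lower-bound instance. In this instance $F_{a_1}$ is the point mass on $\omega_1$ and $F_{a_2}$ is the point mass on $\omega_2$. The costs are $C_{a_1}=1$ and $C_{a_2}=0$, and the rewards are $r_{\omega_1}=1$ and $r_{\omega_2}=0$. Here $U^P(p)$ means the expected utility $\E_{\theta\sim\Gamma}[U^P(p,b(\theta,p))]$, since the type is drawn from $\Gamma$.

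First I will fix a type $\theta$ and identify the agent's best response. The agent's utilities are $U^A_\theta(p,a_1)=p_{\omega_1}-\theta$ and $U^A_\theta(p,a_2)=p_{\omega_2}$. Hence $a_1$ is IC exactly when $\theta\le p_{\omega_1}-p_{\omega_2}$, and $a_2$ is the unique best response otherwise. At equality both actions are IC. The principal's utilities are $1-p_{\omega_1}$ under $a_1$ and $-p_{\omega_2}$ under $a_2$.

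The tie case needs a short argument, since ties are broken in the principal's favor, with $a_1$ chosen when the principal is also indifferent. At a tie we have $p_{\omega_1}-p_{\omega_2}=\theta\le 1$, so $1-p_{\omega_1}\ge -p_{\omega_2}$. The principal therefore weakly prefers $a_1$, and $a_1$ is played. This gives $b(\theta,p)=a_1$ if and only if $\theta\le p_{\omega_1}-p_{\omega_2}$. This is the only point requiring any care, namely checking that ties resolve to $a_1$ for every contract in $[0,1]^2$ and not only for contracts in $\mathcal{P}_1$. It follows from $\theta\le 1$ as shown.

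Next I will write the per-type utility as
\[U^P(p,\theta)=\mathds{1}\{\theta\le p_{\omega_1}-p_{\omega_2}\}(1-p_{\omega_1})+\mathds{1}\{\theta> p_{\omega_1}-p_{\omega_2}\}(-p_{\omega_2}).\]
Rearranging, this equals $\mathds{1}\{\theta\le p_{\omega_1}-p_{\omega_2}\}\bigl(1-(p_{\omega_1}-p_{\omega_2})\bigr)-p_{\omega_2}$, which mirrors \Cref{lemma:utility}. Taking the expectation over $\theta\sim\Gamma$ then turns the indicator into $\mathbb{P}_\Gamma(\theta\le p_{\omega_1}-p_{\omega_2})$, which yields the claimed formula.
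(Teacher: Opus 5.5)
Your proposal is correct and matches the paper's proof: $a_1$ is played exactly when $\theta \le p_{\omega_1}-p_{\omega_2}$, and you then average the two per-action principal utilities over $\Gamma$ and rearrange. Your explicit tie-breaking check fills in a detail the paper leaves implicit: at a tie $p_{\omega_1}-p_{\omega_2}\le 1$, so the principal weakly prefers $a_1$.
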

\begin{proof}
    The agent will play action $a_1$ if and only if $p_{\omega_1}-\theta\cdot c_{a_1}=p_{\omega_1}-\theta\ge p_{\omega_2}$ so only if $\theta\le p_{\omega_1}-p_{\omega_2}$. Thus, the utility of the principal is
    \[
    \mathbb{P}_{\Gamma}(\theta\le p_{w_1}-p_{w_2})(1-p_{w_1})+\mathbb{P}_{\Gamma}(\theta> p_{w_1}-p_{w_2})(-p_{w_2})=\mathbb{P}_{\Gamma}(\theta\le p_{w_1}-p_{w_2})(1-p_{w_1}+p_{w_2})-p_{w_2},
    \]
    as claimed.
\end{proof}

\begin{remark}\label{rem:LB}
From this lemma, it is clear that we can focus on algorithms that play $p_{w_2}=0$. Indeed, for every algorithm that plays $(p_{\omega_1},p_{\omega_2})$, we can instantiate an algorithm that plays $(p_{\omega_1}-p_{\omega_2},0)$ and the feedback under any instance $\{\I_k\}_{k=0}^K$, has the same distribution and larger utility.
\end{remark}

Thus, for every instance $\{\I_0,\ldots, \I_K\}$ we can define 
\[
U^P_k(x):=\mathbb{P}_{\Gamma_k}(\theta\le x)(1-x),
\]
as the principal expected utility of proposing contract $p=(x,0)$ to the agent.

Now we can define the type distribution $\Gamma_{c,\delta}$ as follows. The type distribution is given by its probability density $\gamma_{c,\delta}$.
\begin{align}\label{eq:density}
    \gamma_{c,\delta}(x)=\begin{cases}
        \frac{1}{1-t} &\text{if }x\in[0,t)\\
        \frac{t}{(1-x)^2} &\text{if }x\in[t,c-\delta)\\
        \frac{2t}{(1-c-\delta)(1-c+\delta)} &\text{if }x\in[c-\delta,c)\\
        0 &\text{if }x\in[c,c+\delta)\\
        \frac{t}{(1-x)^2} &\text{if }x\in[c+\delta,1-t)\\
        0 &\text{if }x\in[1-t,1]\\
    \end{cases}
\end{align}

A visual representation of the density is shown in \Cref{fig:pdf}.
The pdf has two constant regions: $[0,t)$ and $[c-\delta,c)$, while on $[t,c-\delta)$ and $[c+\delta,1-t)$ it is a power-law. Everywhere else it is $0$.
From \Cref{lem:LB1} and \Cref{rem:LB}, we know that the principal utility is the function $x\mapsto (1-x)\cdot\int_{0}^x\gamma_{c,\delta}(s)ds$.
%
%
%
%
We now characterize the main features of the instances, which are 1) the reward structure and 2) the information structure. An illustration can also be found in \Cref{fig:figures}.

\begin{lemma}\label{lem:charLB}
    For each $k\in[K]$, for a contract $p=(x,0)$, the principal expected utility $U^P(p)$ in instance $\I_k$ and the feedback distribution is a Bernoulli with probability $\mathbb{P}_{\Gamma_k}(\theta\le x)$ are as follows:
    \[
\mathbb{P}_{\Gamma_k}(\theta\le x)=
\begin{cases}
 \frac{x}{1-t} & \text{if } x\in[0,t),\\
\frac{t}{1-x} & \text{if } x\in[t,c_k-\delta_k),\\
\frac{t}{1-c_k+\delta_k}+(x-c_k+\delta_k)\frac{2t}{(1-c_k-\delta_k)(1-c_k+\delta_k)}
& \text{if } x\in[c_k-\delta_k,c_k),\\
\frac{t}{1-c_k-\delta_k} & \text{if } x\in[c_k,c_k+\delta_k),\\
\frac{t}{1-x} & \text{if } x\in[c_k+\delta_k,1-t),\\
1 & \text{if } x\in[1-t,1].
\end{cases}
\]
Moreover, the base instance is characterized by:
    \[
\mathbb{P}_{\Gamma_0}(\theta\le x)=
\begin{cases}
 \frac{x}{1-t} & \text{if } x\in[0,t),\\
\frac{t}{1-x} & \text{if } x\in[t,1-t),\\
1 & \text{if } x\in[1-t,1].
\end{cases}
\]
\end{lemma}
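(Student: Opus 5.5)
The approach is direct integration of the density $\gamma_{c,\delta}$ in \Cref{eq:density}, piece by piece, computing $\mathbb{P}_{\Gamma_{c,\delta}}(\theta\le x)=\int_0^x \gamma_{c,\delta}(s)\,ds$ on each of the six intervals. The feedback claim is immediate from the construction: under contract $(x,0)$, \Cref{lem:LB1} shows the agent plays $a_1$ if and only if $\theta\le x$. Since $a_1$ deterministically yields $\omega_1$ and $a_2$ yields $\omega_2$, the observed outcome is $\mathds{1}\{\omega=\omega_1\}\sim\mathrm{Bernoulli}(\mathbb{P}_{\Gamma_k}(\theta\le x))$. So everything reduces to computing the CDF.

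\textbf{Generic CDF.} I would first do the computation for a generic $(c,\delta)$ with $t\le c-\delta$ and $c+\delta\le 1-t$, and only afterwards specialize.
\begin{itemize}
\item On $[0,t)$ the density is constant $1/(1-t)$, giving $x/(1-t)$, which equals $t/(1-t)$ at $x=t$.
\item On $[t,c-\delta)$, using $\int_t^x t(1-s)^{-2}ds = t/(1-x)-t/(1-t)$, the CDF becomes $t/(1-x)$. This telescoping is the reason for the power-law choice.
\item On $[c-\delta,c)$ the density is the constant $2t/((1-c-\delta)(1-c+\delta))$, so the CDF is linear starting from $t/(1-c+\delta)$, exactly as stated.
\item At $x=c$ the CDF equals $t/(1-c+\delta)+2t\delta/((1-c-\delta)(1-c+\delta))$. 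Since $\tfrac{1}{1-c-\delta}-\tfrac{1}{1-c+\delta}=\tfrac{2\delta}{(1-c-\delta)(1-c+\delta)}$, this equals $t/(1-c-\delta)$. This is the one nontrivial identity, and it is why the spike height was chosen as it was.
\item On $[c,c+\delta)$ the density is zero, so the CDF stays at $t/(1-c-\delta)$.
\item On $[c+\delta,1-t)$, integrating the power law again gives $t/(1-c-\delta)+t/(1-x)-t/(1-c-\delta)=t/(1-x)$.
\item At $1-t$ the CDF reaches $t/t=1$, and the density vanishes afterwards, so the CDF is $1$ on $[1-t,1]$. This also confirms that $\gamma_{c,\delta}$ is a valid probability density.
\end{itemize}

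\textbf{Base instance.} For $\Gamma_0=\Gamma_{0,0}$ I would take the interpretation in which the spike and gap intervals are empty. The density is then $1/(1-t)$ on $[0,t)$ and the power law on $[t,1-t)$, and the same integrals give the stated three-piece formula.

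\textbf{Main obstacle.} The only real check is that for every $k\in[K]$ the parameters are admissible: $t\le c_k-\delta_k$, $c_k+\delta_k\le 1-t$, and $\delta_k\ge 0$. The last holds because $c_k<1$.

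For the first two conditions I would verify $\delta_k\le\bar\delta$. Since $\delta_k=\epsilon\frac{1-c_k}{t+\epsilon}$ and $c_k\ge t+\bar\delta$, this reduces to $(1-t-\bar\delta)/(t+\epsilon)\le(1-t)/(t+2\epsilon)$, which follows from the definition of $\bar\delta$ by direct algebra. With $\delta_k\le\bar\delta$, the intervals $[c_k-\delta_k,c_k+\delta_k]$ lie inside the consecutive blocks $[t+2(k-1)\bar\delta,\,t+2k\bar\delta]$.

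The upper condition $c_K+\delta_K\le 1-t$ then follows from choosing $K$ with $2K\bar\delta\le 1-2t$, i.e.\ $K=\Theta(1/\epsilon)$, which is the constraint the later proof imposes.
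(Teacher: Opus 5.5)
Your proposal is correct and matches the paper's implicit argument. The paper states \Cref{lem:charLB} without a separate proof because it follows from integrating $\gamma_{c_k,\delta_k}$ piece by piece as you do, and the admissibility conditions you verify ($\delta_k\le\bar\delta$, with equality at $k=1$, so that $[c_k-\delta_k,c_k+\delta_k]\subseteq J_k\subseteq[t,1-t]$ once $2K\bar\delta\le 1-2t$) are precisely items (ii)--(iii) of \Cref{lem:LBtmp1}.
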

\begin{figure}[tp!]
\centering
\hspace{-1cm}
\begin{subfigure}[b]{0.47\textwidth}
\centering
    \scalebox{0.5}{\pgfplotsset{compat=1.18}
\definecolor{plotred}{HTML}{A4322A}

\pgfmathsetmacro{\tval}{0.15}   
\pgfmathsetmacro{\cval}{0.48}  
\pgfmathsetmacro{\dval}{0.1}  

\pgfmathsetmacro{\cd}{\cval - \dval}       
\pgfmathsetmacro{\cpd}{\cval + \dval}      
\pgfmathsetmacro{\omt}{1 - \tval}          

\pgfmathsetmacro{\yFlat}{1 / (1 - \tval)}
\pgfmathsetmacro{\ySpike}{\tval * 2 /
    ((1 - \cval - \dval) * (1 - \cval + \dval) )}
\pgfmathsetmacro{\yRight}{1 / \tval}          
\pgfmathsetmacro{\ymx}{max(\yFlat, max(\ySpike, \yRight)) * 1.14}

\pgfmathsetmacro{\jumpT}{\tval / (1 - \tval)^2}     
\pgfmathsetmacro{\jumpCD}{\tval / (1 - \cd)^2}      
\pgfmathsetmacro{\jumpCPD}{\tval / (1 - \cpd)^2}    

\begin{tikzpicture}
  \begin{axis}[
    width  = 13cm,
    height = 5.0cm,
    xmin   = -0.00,  xmax = 1.03,
    ymin   = 0.0,  ymax = 1.05,
    axis x line = bottom,
    axis y line = left,
    xlabel = {$x$},
    ylabel = {$\int_0^x\gamma_c(s)dx$},
    xlabel style     = {below, font=\small},
    ylabel style     = {above, font=\small},
    tick label style = {font=\small},
    xtick       = {\tval, \cd, \cval, \cpd,  1-3*\tval/2, \omt, 1},
    xticklabels = {$t$, $c{-}\delta$, $\phantom{\delta}c\phantom{\delta}$, $c{+}\delta$, $1-\tfrac32t$, $1{-}t$, $1$},
    yticklabels = {$\nicefrac t{1-t}$, $\nicefrac23$},
    y tick label style={font=\small, rotate=0, right, anchor=south west},
    extra x ticks = {0},
    extra x tick labels = {$0$},
    ytick = {\tval/(1-\tval), 2/3, 1.05},
    clip  = true,
    clip mode = individual,
  ]



  \addplot[fill=plotred, fill opacity=0.0, draw=none, domain=0:\tval, samples=80]
    {x / (1-\tval)} \closedcycle;

  \addplot[fill=plotred, fill opacity=0.0, draw=none,
           domain=\tval:\cd, samples=80]
    {\tval / (1-x)} \closedcycle;

  \addplot[fill=plotred, fill opacity=0.0, draw=none, domain=\cd:\cval, samples=80]
    {\tval / (1-\cd)+(x-\cd)*2*\tval/((1-\cpd)*(1-\cd))} \closedcycle;

  \addplot[fill=plotred, fill opacity=0.0, draw=none, domain=\cval:\cpd, samples=80]
    {\tval / (1-\cpd)} \closedcycle;

  \addplot[fill=plotred, fill opacity=0.0, draw=none,
           domain=\cpd:\omt, samples=100]
    {\tval / (1-x)} \closedcycle;

  \addplot[fill=plotred, fill opacity=0.0, draw=none,
           domain=\omt:1, samples=100]
    {1} \closedcycle;


  \addplot[color=plotred, line width=2.6pt, domain=0:\tval, samples=80]
    {x / (1-\tval)};

  \addplot[color=plotred, line width=2.6pt,
           domain=\tval:\cd, samples=80]
    {\tval / (1-x)};

  \addplot[color=plotred, line width=2.6pt, domain=\cd:\cval, samples=80]
    {\tval / (1-\cd)+(x-\cd)*2*\tval/((1-\cpd)*(1-\cd))};

  \addplot[color=plotred, line width=2.6pt, domain=\cval:\cpd, samples=80]
    {\tval / (1-\cpd)};

  \addplot[color=plotred, line width=2.6pt,
           domain=\cpd:\omt, samples=100]
    {\tval / (1-x)};

  \addplot[color=plotred, line width=2.6pt]
    coordinates {(\omt, 1) (1, 1)};

  \addplot[loosely dotted, color=plotred, line width=.5pt,
           domain=\cd:\cpd, samples=100]
    {\tval / (1-x)};


%
  \draw[dashed, color=plotred, line width=.5pt]
    (axis cs:\cd, 0) -- (axis cs:\cd, {\tval/(1-\cd)});
%
%
  \draw[dashed, color=plotred, line width=0.5pt]
    (axis cs:\cpd, 0) -- (axis cs:\cpd, {\tval/(1-\cpd)});

  \draw[dashed, color=plotred, line width=0.5pt]
    (axis cs:0, {\tval / (1-\tval)}) -- (axis cs:{\tval}, {\tval / (1-\tval)});
  \draw[dashed, color=plotred, line width=0.5pt]
    (axis cs: {\tval}, 0) -- (axis cs:{\tval}, {\tval / (1-\tval)});

  \draw[dashed, color=plotred, line width=0.5pt]
    (axis cs:0, {2/3}) -- (axis cs:{1-3*\tval/2}, {2/3});
  \draw[dashed, color=plotred, line width=0.5pt]
    (axis cs: {1-3*\tval/2}, 0) -- (axis cs:{1-3*\tval/2}, {2/3});
%

  \end{axis}
\end{tikzpicture}}
    \caption{Feedback Function}
    \label{fig:cdf}
\end{subfigure}
\hfill
\begin{subfigure}[b]{0.47\textwidth}
\centering
\hspace{-1cm}
    \scalebox{0.5}{\pgfplotsset{compat=1.18}
\definecolor{plotgreen}{HTML}{2ECC40}

\pgfmathsetmacro{\tval}{0.15}   
\pgfmathsetmacro{\cval}{0.48}  
\pgfmathsetmacro{\dval}{0.1}  

\pgfmathsetmacro{\cd}{\cval - \dval}       
\pgfmathsetmacro{\cpd}{\cval + \dval}      
\pgfmathsetmacro{\omt}{1 - \tval}          

\pgfmathsetmacro{\yFlat}{1 / (1 - \tval)}
\pgfmathsetmacro{\ySpike}{\tval * 2 /
    ((1 - \cval - \dval) * (1 - \cval + \dval) )}
\pgfmathsetmacro{\yRight}{1 / \tval}          
\pgfmathsetmacro{\ymx}{max(\yFlat, max(\ySpike, \yRight)) * 1.14}

\pgfmathsetmacro{\jumpT}{\tval / (1 - \tval)^2}     
\pgfmathsetmacro{\jumpCD}{\tval / (1 - \cd)^2}      
\pgfmathsetmacro{\jumpCPD}{\tval / (1 - \cpd)^2}    

\begin{tikzpicture}
  \begin{axis}[
    width  = 13cm,
    height = 5.0cm,
    xmin   = -0.00,  xmax = 1.03,
    ymin   = 0.0,  ymax = \tval+0.2,
    axis x line = bottom,
    axis y line = left,
    xlabel = {$x$},
    ylabel = {$U^P((x,0))$},
    xlabel style     = {below, font=\small},
    ylabel style     = {above, font=\small},
    tick label style = {font=\small},
    xtick       = {\tval, \cd, \cval, \cpd, \omt, 1},
    ytick       = {\tval, \tval*(1-\cval) / (1-\cpd)},
    xticklabels = {$t$, $c{-}\delta$, $\phantom{\delta}c\phantom{\delta}$, $c{+}\delta$, $1{-}t\phantom{\tfrac32}$, $1$},
    extra x ticks = {0},
    extra x tick labels = {$0$},
    yticklabels = {$t$,$t+\epsilon$},
    y tick label style={font=\small, rotate=0, right, anchor=west},
    clip  = true,
    clip mode = individual,
  ]



  \addplot[fill=plotgreen, fill opacity=0.0, draw=none, domain=0:\tval, samples=80]
    {x*(1-x) / (1-\tval)} \closedcycle;

  \addplot[fill=plotgreen, fill opacity=0.0, draw=none,
           domain=\tval:\cd, samples=80]
    {\tval*(1-x) / (1-x)} \closedcycle;

  \addplot[fill=plotgreen, fill opacity=0.0, draw=none, domain=\cd:\cval, samples=80]
    {\tval*(1-x) / (1-\cd)+(1-x)*(x-\cd)*2*\tval/((1-\cpd)*(1-\cd))} \closedcycle;

  \addplot[fill=plotgreen, fill opacity=0.0, draw=none, domain=\cval:\cpd, samples=80]
    {\tval*(1-x) / (1-\cpd)} \closedcycle;

  \addplot[fill=plotgreen, fill opacity=0.0, draw=none,
           domain=\cpd:\omt, samples=100]
    {\tval*(1-x) / (1-x)} \closedcycle;

  \addplot[fill=plotgreen, fill opacity=0.0, draw=none,
           domain=\omt:1, samples=100]
    {1*(1-x)} \closedcycle;


  \addplot[color=plotgreen, line width=2.6pt, domain=0:\tval, samples=80]
    {x*(1-x) / (1-\tval)};

  \addplot[color=plotgreen, line width=2.6pt,
           domain=\tval:\cd, samples=80]
    {\tval*(1-x) / (1-x)};

  \addplot[color=plotgreen, line width=2.6pt, domain=\cd:\cval, samples=80]
    {\tval*(1-x) / (1-\cd)+(1-x)*(x-\cd)*2*\tval/((1-\cpd)*(1-\cd))};

  \addplot[color=plotgreen, line width=2.6pt, domain=\cval:\cpd, samples=80]
    {\tval*(1-x) / (1-\cpd)};

  \addplot[color=plotgreen, line width=2.6pt,
           domain=\cpd:\omt, samples=100]
    {\tval*(1-x) / (1-x)};

  \addplot[color=plotgreen, line width=2.6pt,
           domain=\omt:1, samples=100]
    {(1-x)};

  \addplot[dashed, color=plotgreen, line width=.5pt,
           domain=\cd:\cpd, samples=100]
    {\tval*(1-x) / (1-x)};


%
  \draw[dashed, color=plotgreen, line width=.5pt]
    (axis cs:\cd, 0) -- (axis cs:\cd, {\tval});
%
%
  \draw[dashed, color=plotgreen, line width=0.5pt]
    (axis cs:\cpd, 0) -- (axis cs:\cpd, {\tval});
%
  \draw[dashed, color=plotgreen, line width=0.5pt]
    (axis cs:0, {\tval }) -- (axis cs:{\tval}, {\tval });
  \draw[dashed, color=plotgreen, line width=0.5pt]
    (axis cs: {\tval}, 0) -- (axis cs:{\tval}, {\tval });
  \draw[dashed, color=plotgreen, line width=0.5pt]
    (axis cs: {\cval}, 0) -- (axis cs:{\cval}, {\tval*(1-\cval) / (1-\cpd)});
  \draw[dashed, color=plotgreen, line width=0.5pt]
    (axis cs:0, {{\tval*(1-\cval) / (1-\cpd)}}) -- (axis cs:{\cval}, {\tval*(1-\cval) / (1-\cpd) });

  \end{axis}
\end{tikzpicture}}
    \caption{Reward function}
    \label{fig:util}
\end{subfigure}
        
\caption{Feedback distribution and reward function of the lower bound instances.}
\label{fig:figures}
\end{figure}

We now observe some important features of our instances. For each instance $\I_k$ for $k\ge 1$, we define the interval $J_k:=[c_k-\bar\delta,c_k+\bar \delta]$ and $B_k:=[c_k,c_k+\delta_k/2]$. In \Cref{fig:figures} we have an illustration of the reward and feedback distribution.

\begin{lemma}\label{lem:LBtmp1}
    For any $K\le \frac{1}{20\epsilon}$ and $\epsilon<1/20$, we have
    \begin{enumerate}[label=(\roman*)]
        \item $\mathbb{P}_{\Gamma_k}(\theta\le x)=\mathbb{P}_{\Gamma_{k'}}(\theta\le x)$ for all $x\not\in J_{k}\cup J_{k'}$;
        \item $\delta_k\le \bar\delta$ for all $k\in[K]$;
        \item $J_k\subseteq[t, 1-\frac32t]$ for all $k\in[K]$;
        \item $U_k^P(x)\ge t+\epsilon/2$ for all $x\in B_k:=[c_k,c_k+\delta_k/2]$;
        \item $|\mathbb{P}_{\Gamma_k}(\theta\le x)-\mathbb{P}_{\Gamma_0}(\theta\le x)|\le 4\epsilon$ for all $x\in J_k$.
    \end{enumerate}
\end{lemma}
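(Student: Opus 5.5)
The plan is to verify the five items directly from the closed-form CDFs of \Cref{lem:charLB}, together with the explicit choices $t=1/5$, $c_k=t+\bar\delta(2k-1)$, $\bar\delta=\epsilon\frac{1-t}{t+2\epsilon}$ and $\delta_k=\epsilon\frac{1-c_k}{t+\epsilon}$. I would prove item (ii) first, since (i) and (v) use it.

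\textbf{Item (ii).} We need $\delta_k\le\bar\delta$. Since $c_k\ge t$, we have $1-c_k\le 1-t$. Hence
\[
\delta_k\le \epsilon\frac{1-t}{t+\epsilon}.
\]
The denominator $t+\epsilon$ is smaller than $t+2\epsilon$, so this bound is slightly larger than $\bar\delta$ and does not close the argument. I would instead use $c_k\ge t+\bar\delta$ (the case $k=1$), which gives $1-c_k\le 1-t-\bar\delta$. It then remains to check
\[
\frac{1-t-\bar\delta}{t+\epsilon}\le\frac{1-t}{t+2\epsilon}.
\]
This is equivalent to $(1-t)\epsilon\le\bar\delta(t+2\epsilon)$, which holds with equality by the definition of $\bar\delta$. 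This calibration is exactly why $\bar\delta$ was chosen that way.

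\textbf{Item (iii).} The left endpoint of $J_k$ is $c_k-\bar\delta\ge t$ because $k\ge1$. For the right endpoint,
\[
c_k+\bar\delta = t+2k\bar\delta \le t+2K\bar\delta \le t+\frac{1-t}{10t}.
\]
Here I use $\bar\delta\le\epsilon(1-t)/t$ and $K\le 1/(20\epsilon)$. With $t=1/5$ this gives $0.2+0.4=0.6\le 1-0.3=0.7$.

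\textbf{Item (i).} By \Cref{lem:charLB}, $\Gamma_k$ differs from $\Gamma_0$ only on $[c_k-\delta_k,\,c_k+\delta_k)$. By (ii) this set is contained in $J_k$. The intervals $J_k=[c_k-\bar\delta,c_k+\bar\delta]$ are adjacent and non-overlapping, since consecutive centres are $2\bar\delta$ apart. Therefore outside $J_k\cup J_{k'}$ both CDFs coincide with $\Gamma_0$, which gives (i). I would also confirm that the CDF is continuous at $c_k+\delta_k$: the value $\frac{t}{1-c_k-\delta_k}$ matches $\frac{t}{1-x}$ there. Without this continuity the CDF would not revert to $\Gamma_0$, and the mass inserted on $[c_k-\delta_k,c_k)$ must equal the base mass on $[c_k-\delta_k,c_k+\delta_k)$.

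\textbf{Item (iv).} For $x\in B_k$ we have $x\in[c_k,c_k+\delta_k)$, so
\[
U_k^P(x)=\frac{t(1-x)}{1-c_k-\delta_k}\ge \frac{t(1-c_k-\delta_k/2)}{1-c_k-\delta_k}=t+\frac{t\delta_k/2}{1-c_k-\delta_k}.
\]
Substituting $\delta_k=\epsilon(1-c_k)/(t+\epsilon)$ gives
\[
1-c_k-\delta_k=(1-c_k)\frac{t}{t+\epsilon}.
\]
The extra term is then
\[
\frac{t\epsilon(1-c_k)/(2(t+\epsilon))}{(1-c_k)t/(t+\epsilon)}=\frac{\epsilon}{2},
\]
as required.

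\textbf{Item (v).} This is the estimate I expect to need the most care. Outside $[c_k-\delta_k,c_k+\delta_k)$ the difference is zero. Inside this window, both CDFs are monotone, and at the endpoints they take the values $\frac{t}{1-c_k+\delta_k}$ and $\frac{t}{1-c_k-\delta_k}$. So the difference is at most
\[
\frac{t}{1-c_k-\delta_k}-\frac{t}{1-c_k+\delta_k}=\frac{2t\delta_k}{(1-c_k)^2-\delta_k^2}.
\]
Using $1-c_k\ge 0.4$ from (iii), $\delta_k\le\bar\delta\le 4\epsilon$, and $\epsilon<1/20$, this is at most a small constant times $\epsilon$. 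One way to finish is to bound it by $2t\delta_k/(1-c_k-\delta_k)^2$ and plug in the expression for $1-c_k-\delta_k$, which yields $2\epsilon(t+\epsilon)/(t(1-c_k))$-type bounds. The remaining work is only to confirm that the constants come out below $4\epsilon$.
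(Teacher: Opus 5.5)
Your items (i)--(iv) are correct and follow the paper. For (ii) you use that $\delta_k$ decreases in $c_k$ and that $\delta_1=\bar\delta$. For (iii) you show $c_k+\bar\delta=t+2k\bar\delta\le 3/5$. For (iv) you evaluate $U_k^P$ at the right end of $B_k$ using $1-c_k-\delta_k=(1-c_k)\frac{t}{t+\epsilon}$. (In (i) you do not need the $J_k$ to be disjoint; consecutive ones in fact share an endpoint.)

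The gap is in (v). The step you defer as ``only confirming the constants'' fails. Bounding the difference by the total rise of the CDFs over the window gives exactly
\[
\frac{t}{1-c_k-\delta_k}-\frac{t}{1-c_k+\delta_k}=\frac{2\epsilon\,(t+\epsilon)}{(1-c_k)(t+2\epsilon)},
\]
which tends to $2\epsilon/(1-c_k)$ as $\epsilon\to 0$. For $K=\lfloor 1/(20\epsilon)\rfloor$ the last centre $c_K$ is close to $3/5$, so this bound is close to $5\epsilon$; for instance, $\epsilon=10^{-3}$ and $K=50$ give about $4.9\epsilon$. Your further relaxation $2t\delta_k/(1-c_k-\delta_k)^2=\frac{2\epsilon(t+\epsilon)}{t(1-c_k)}$ is even larger. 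The range-width bound overshoots the true maximum by a factor of about $2$, so no bookkeeping of constants will give $4\epsilon$.

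The missing idea is to locate where $D(x)=\mathbb{P}_{\Gamma_k}(\theta\le x)-\mathbb{P}_{\Gamma_0}(\theta\le x)$ peaks.
\begin{itemize}
\item On $[c_k-\delta_k,c_k)$ its derivative $\frac{2t}{(1-c_k)^2-\delta_k^2}-\frac{t}{(1-x)^2}$ is positive, because $(1-x)^2\ge(1-c_k)^2>(1-c_k)^2-\delta_k^2$.
\item On $[c_k,c_k+\delta_k)$ the derivative equals $-\frac{t}{(1-x)^2}<0$.
\item $D$ vanishes at both ends of the window.
\end{itemize}
Hence $0\le D\le D(c_k)=\frac{t+\epsilon}{1-c_k}-\frac{t}{1-c_k}=\frac{\epsilon}{1-c_k}$. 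Since $1-c_k\ge\frac32 t$, this is at most $\frac{10}{3}\epsilon\le 4\epsilon$; your bound $1-c_k\ge\frac25$ even gives $\frac52\epsilon$. This is the paper's argument. Your cruder $O(\epsilon)$ bound would still be enough for the KL estimate in the lower-bound proof after adjusting constants, but it does not prove (v) as stated.
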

\begin{proof}
    For $(i)$ it is enough to observe that all type distributions coincide with the base instance outside of the intervals $J_k$.

    For $(ii)$ we can observe that $\delta_k$ is decreasing in $c_k$ and thus it is enough to prove the statement for $k=1$, and straightforward calculations show that $\bar\delta=\delta_1$.

    For $(iii)$ it is enough to observe that $c_k-\bar\delta=t+2\bar\delta(k-1)\ge t$ and thus $J_k\subseteq [t,\infty)$. Moreover, $c_k+\bar\delta\le c_K+\bar\delta=t+2K\bar\delta$. Since $t=1/5$, $\bar\delta=\frac{4\epsilon}{1+10\epsilon}$, we get $2K\bar\delta\le 2/5$ and thus $c_k+\bar\delta\le 1-\frac32t$, proving that $J_k\subseteq [t,1-\frac32t]$.

    For $(iv)$ it is sufficient to observe that $U_k^P(x)$ is decreasing in $x$ in the interval $B_k$, and that $U_k^P(c_k+\delta_k/2)=t+\epsilon/2$.

    For $(v)$ we can observe that by $(ii)$ and 
    \Cref{lem:charLB}, for all $x\notin[c_k-\delta_k,c_k+\delta_k]$
    \[
    \mathbb P_{\Gamma_k}(\theta\le x)=\mathbb P_{\Gamma_0}(\theta\le x).
    \]
    Thus, it suffices to consider $x\in[c_k-\delta_k,c_k+\delta_k]$. On this interval, the difference
    \[
        D(x):=\mathbb P_{\Gamma_k}(\theta\le x)-\mathbb P_{\Gamma_0}(\theta\le x)
    \]
    is nonnegative, increasing on $[c_k-\delta_k,c_k)$, and decreasing on $[c_k,c_k+\delta_k]$, hence it is maximized at $x=c_k$. Therefore
    \[
    \left|\mathbb P_{\Gamma_k}(\theta\le x)-\mathbb P_{\Gamma_0}(\theta\le x)\right| \le \frac{t}{1-c_k-\delta_k}-\frac{t}{1-c_k} = \frac{\epsilon}{1-c_k}.
    \]
    By item 3, $1-c_k\ge 3t/2$, so the right-hand side is at most
    \(
    \frac{2}{3t}\epsilon\le 4\epsilon,
    \)
    concluding the proof.
\end{proof}

%

A standard change of measure argument concluded the proof.
\begin{theorem}
For any (possibly randomized) learning algorithm, there exists $k\in[K]$ such that
    \[
    \opt-\mathbb{E}_{\nu_k}\left[\sum_{t=1}^TU^P(p_t)\right]=\Omega(T^{2/3}).
    \]
\end{theorem}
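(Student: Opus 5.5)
We will run the standard change-of-measure argument directly on the continuous action space $x\in[0,1]$; by \Cref{rem:LB} we may restrict to contracts $(x,0)$. We choose $K=\lfloor 1/(20\epsilon)\rfloor$ and use the intervals $J_1,\ldots,J_K$, which are disjoint: $J_k=[c_k-\bar\delta,c_k+\bar\delta]$ and $c_{k+1}-c_k=2\bar\delta$. The feedback at round $t$ is a Bernoulli with mean $G_k(x_t):=\mathbb{P}_{\Gamma_k}(\theta\le x_t)$. For each $k$ we let $N_k:=\sum_{t}\mathds{1}\{x_t\in J_k\}$ and $M_k:=\sum_t\mathds{1}\{x_t\in B_k\}$, where $B_k\subseteq J_k$ by \Cref{lem:LBtmp1}(ii).

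\textbf{Step 1 (regret decomposition).} In instance $\I_k$ we have $\opt\ge T(t+\epsilon/2)$ by \Cref{lem:LBtmp1}(iv). Outside $J_k$, the utility coincides with the base utility $U_0^P$. From \Cref{lem:charLB}, $U_0^P(x)=x(1-x)/(1-t)\le t$ on $[0,t)$, $U_0^P(x)=t$ on $[t,1-t)$, and $U_0^P(x)=1-x\le t$ on $[1-t,1]$, so $U_0^P\le t$ everywhere. Inside $J_k\setminus B_k$ the utility is still at most $t+O(\epsilon)$, but we must make sure it is $\le t+\epsilon/4$ there, or more simply shrink the ``good'' set. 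The cleanest route is to claim that $U_k^P\le t+\epsilon$ on $J_k$ and $U_k^P\le t$ off $J_k$. We then lower bound the regret by $\tfrac{\epsilon}{2}\,\mathbb{E}_{\nu_k}[T-N_k]$, up to choosing the constants so that good rounds are essentially only those in $J_k$. If the peak near $c_k$ makes this comparison too lossy, we instead define the optimal region as $B_k$ and use $\opt\ge T(t+\epsilon/2)$ together with $U^P_k\le t+\epsilon$ on $J_k$. The gap then comes from the scale difference between $\epsilon/2$ and the loss off $B_k$; this calibration of constants is the main obstacle, and we would first try to show directly that $U_k^P\le t$ on $J_k\setminus[c_k-\delta_k,c_k+\delta_k]$ and handle the remaining middle region explicitly via the formula in \Cref{lem:charLB}.

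\textbf{Step 2 (KL bound).} By the chain rule for KL divergence with continuous actions (conditioning on the history, the algorithm's kernel cancels), and because $G_k=G_0$ off $J_k$ by \Cref{lem:LBtmp1}(i), we get
\[
\mathrm{KL}(\nu_0,\nu_k)=\mathbb{E}_{\nu_0}\Big[\sum_t \mathrm{kl}\big(G_0(x_t),G_k(x_t)\big)\mathds{1}\{x_t\in J_k\}\Big]\le C\epsilon^2\,\mathbb{E}_{\nu_0}[N_k].
\]
This uses \Cref{lem:LBtmp1}(v) and the fact that on $J_k\subseteq[t,1-\tfrac32 t]$ the means $G_0,G_k$ are bounded away from $0$ and $1$ (between roughly $1/4$ and $2/3$), so that the $\chi^2$ bound applies.

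\textbf{Step 3 (averaging).} By Pinsker's inequality, $\mathbb{E}_{\nu_k}[N_k]\le \mathbb{E}_{\nu_0}[N_k]+T\sqrt{\mathrm{KL}/2}\le \mathbb{E}_{\nu_0}[N_k]+CT\epsilon\sqrt{\mathbb{E}_{\nu_0}[N_k]}$. Summing over $k$, using disjointness so that $\sum_k\mathbb{E}_{\nu_0}[N_k]\le T$, and applying Cauchy–Schwarz, we obtain
\[
\frac1K\sum_k\mathbb{E}_{\nu_k}[N_k]\le \frac TK+CT\epsilon\sqrt{T/K}.
\]
Hence some $k$ has regret at least $\tfrac{\epsilon}{2}\big(T-T/K-CT\epsilon\sqrt{T/K}\big)$. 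With $K\asymp 1/\epsilon$ we have $\epsilon\sqrt{T/K}\asymp\epsilon^{3/2}\sqrt T$, and taking $\epsilon=cT^{-1/3}$ with $c$ small makes the bracket at least $T/2$. This gives regret $\Omega(\epsilon T)=\Omega(T^{2/3})$.
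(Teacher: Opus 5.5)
Your argument is correct, but it departs from the paper's in two places.

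\textbf{Regret decomposition.} The paper introduces $B_k=[c_k,c_k+\delta_k/2]$ and a second counter $M_k\le N_k$, and bounds the utility of every round outside $B_k$ by $t+\epsilon/2$. You use only $N_k$ and the fact that $U_k^P=U_0^P\le t$ off $J_k$. Your first route already closes, and the ``calibration'' you flag is not an obstacle. Since $\opt/T=\sup_x U_k^P(x)=U_k^P(c_k)=t+\epsilon$, the per-round regret is nonnegative everywhere and at least $\epsilon$ whenever $x_t\notin J_k$. So the regret in $\I_k$ is at least $\frac{\epsilon}{2}\mathbb{E}_{\nu_k}[T-N_k]$ (even $\epsilon\,\mathbb{E}_{\nu_k}[T-N_k]$), with no control of $U_k^P$ inside $J_k$ needed.

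Do not fall back to $B_k$ as defined. On $[c_k-\delta_k,c_k)$ the utility rises to $t+\epsilon$, so points just left of $c_k$ lie outside $B_k$ yet earn more than $t+\epsilon/2$. The paper's step to \eqref{eq:tmp1} therefore needs $B_k$ replaced by the superlevel set $\{x:U_k^P(x)\ge t+\epsilon/2\}\subseteq[c_k-\delta_k,c_k+\delta_k/2]\subseteq J_k$. That fix is harmless, since $M_k\le N_k$ still holds, and your formulation avoids the issue altogether.

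\textbf{Change of measure.} Your KL bound matches the paper's Claim~\ref{claim:1}. After that, the paper fixes one $k^*$ with $\mathbb{E}_{\nu_0}[N_{k^*}]\le T/K$ and applies Markov under $\nu_0$ to $\{M_{k^*}\ge T/2\}$. It then transfers this event to $\nu_{k^*}$ by Pinsker and converts back with a reverse-Markov step. You instead apply Pinsker to the bounded statistic $N_k\in[0,T]$ and average over $k$ with Cauchy--Schwarz. Both give $\Omega(\epsilon T)$ once $\epsilon^{3/2}\sqrt{T}$ is small, hence $\Omega(T^{2/3})$ at $\epsilon\asymp T^{-1/3}$. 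Yours is the textbook averaging argument and a bit shorter; the paper's reasons about a single instance and a single event, at the cost of the two Markov steps.
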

\begin{proof}
    Because of \Cref{rem:LB}, we can focus on algorithms that play on the single-dimensional line $x_t\in[0,1]$, and the associated contract is $p_t=(x_t,0)$.
    For any instance $k\in[K]$ let $N_k(T)=\sum_{t\in[T]}\mathbb{I}(x_t\in J_k)$ and $M_k(T)=\sum_{t\in[T]}\mathbb{I}(x_t\in B_k)$.
    \begin{claim}\label{claim:1}
        \[
        \textup{KL}(\nu_0,\nu_k)\le 10^4\epsilon^2\mathbb{E}_{\nu_0}[N_k(T)].
        \]
    \end{claim}
    \begin{proof}
    We can use the $\textup{KL}$ decomposition for continuous action spaces \citep[Lemma~15.8~(a)]{lattimore2020learning}. 
        \begin{align*}
            \textup{KL}(\nu_0,\nu_k)=\int_0^1\textup{kl}(\mathbb{P}_{\Gamma_k}(\theta\le x),\mathbb{P}_{\Gamma_0}(\theta\le x)) dG_{\nu_0}(x),
        \end{align*}
        where $G_{\nu_0}(B)=\mathbb{E}_{\nu}[\sum_{t\in[T]}\mathbb{I}(x_t\in B)]$ and $\textup{kl}(a,b)=a\log(\nicefrac ab)+(1-a)\log(\nicefrac{1-a}{1-b})$.
        By \Cref{lem:LBtmp1}, we have that
        \begin{align*}
            \textup{KL}(\nu_0,\nu_k)&=\int_{J_k}\textup{kl}(\mathbb{P}_{\Gamma_k}(\theta\le x),\mathbb{P}_{\Gamma_0}(\theta\le x))dG_{\nu_0}(x)\\
            &\le G_{\nu_0}(J_k) \sup_{x\in J_k}\textup{kl}(\mathbb{P}_{\Gamma_k}(\theta\le x),\mathbb{P}_{\Gamma_0}(\theta\le x))\\
            &=\mathbb{E}_{\nu_0}[N_k(T)]\textup{kl}(\mathbb{P}_{\Gamma_k}(\theta\le x),\mathbb{P}_{\Gamma_0}(\theta\le x)).
        \end{align*}
        Now we upper bound $\textup{kl}(a,b)\le\frac{(a-b)^2}{2\ell(1-u)}$, which holds for all $a,b\in[\ell,u]$. By using \Cref{lem:LBtmp1} we know that $\mathbb{P}_{\Gamma_0}(\theta\le x)\in[\nicefrac14,\nicefrac23]$, and from \Cref{lem:charLB} $(v)$ that $\mathbb{P}_{\Gamma_k}(\theta\le x)\in[\nicefrac14-4\epsilon,\nicefrac23+4\epsilon]$, which, for $\epsilon\le1/20$, is contained in $[1/20,13/15]=[\ell,u]$. This lets us conclude that 
        \[
        \textup{KL}(\nu_0,\nu_k)\le 10^4\epsilon^2\mathbb{E}_{\nu_0}[N_k(T)],
        \]
        concluding the proof.
    \end{proof}
    Moreover, the regret can be easily upper-bounded as follows:
    \begin{align}
        \opt-\mathbb{E}_{\nu_k}\left[\sum_{t=1}^TU^P(p_t)\right]&=T \left(t+\epsilon\right)-\mathbb{E}_{\nu_k}\left[\sum_{t\in[T]:x_t\in B_k(T)}^TU^P(p_t)+\sum_{t\in[T]:x_t\notin B_k(T)}^TU^P(p_t)\right]\notag\\
        &\ge T(t+\epsilon)-\mathbb{E}_{\nu_k}\left[M_k(T)\right](t+\epsilon)-(T-\mathbb{E}_{\nu_k}\left[M_k(T)\right])(t+\frac\epsilon2)   \notag \\
        &=\frac\epsilon2(T-\mathbb{E}_{\nu_k}\left[M_k(T)\right])\label{eq:tmp1}.
\end{align}
The proof can now be concluded as follows. Since $J_k$ are disjoint, we have $\sum_{k\in[K]}N_k(T)\le T$ and thus there eixts $k^*$ such that $\mathbb{E}_{\nu_0}[N_{k^*}(T)]\le T/K$ and, since $M_k(T)\le N_k(T)$, also $\mathbb{E}_{\nu_0}[M_{k^*}(T)]\le T/K$ and by Markov $\mathbb{P}_{\nu_0}[M_{k^*}(T)\ge T/2]\le 2/K$. Now, by Pinsker's inequality and \Cref{claim:1}, we have
\[
|\mathbb{P}_{\nu_0}(A)-\mathbb{P}_{\nu_k}(A)|\le \sqrt{\frac{1}{2}\textup{KL}(\nu_0,\nu_k)}\le 10^2\epsilon\sqrt{\mathbb{E}_{\nu_0}[N_k(T)]},
\]
for any measurable event $A$ and instance $k\in[K]$. We can apply this to the event $\{M_{k^*}(T)\ge T/2\}$, which gives
\begin{align}\label{eq:tmp2}
\mathbb{P}_{\nu_{k^*}}[M_{k^*}\le T/2]\ge 1-\frac{2}{K}-\frac{1}{2}\epsilon\sqrt{\mathbb{E}_{\nu_0}[N_{k^*}(T)]}\ge 1-\frac2K-10^2\epsilon\sqrt{\frac TK},
\end{align}
where the last inequality comes from our choice of $k^*$. From, reverse Markov's inequality it is clear that $\mathbb{E}_{\nu_{k^*}}[M_{k^*}(T)]\le T-\frac{T}{2}\mathbb{P}_{\nu_{k^*}}[M_{k^*}(T)\le T/2]$, which plugged into \Cref{eq:tmp1}, gives
\begin{align*}
    \opt-\mathbb{E}_{\nu_{k^*}}\left[\sum_{t=1}^TU^P(p_t)\right]&\ge \frac{\epsilon T}{4}\mathbb{P}_{\nu_{k^*}}[M_{k^*}(T)\le T/2]\\
    &\ge \frac{\epsilon T}{4}\left(1-\frac2K-10^2\epsilon\sqrt{\frac TK}\right)\tag{\Cref{eq:tmp2}}.
\end{align*}
Now we can take $K=\lfloor\frac1{15\epsilon}\rfloor\le \frac{1}{15\epsilon}$, which is allowed by \Cref{lem:LBtmp1}. This gives:
\[
\opt-\mathbb{E}_{\nu_{k^*}}\left[\sum_{t=1}^TU^P(p_t)\right]\ge \frac{\epsilon T}{4}\left(1-32\epsilon-10^2\epsilon\sqrt{32T\epsilon}\right),
\]
choosing $\epsilon={T}^{-1/3}/2^7$ gives
\[
\opt-\mathbb{E}_{\nu_{k^*}}\left[\sum_{t=1}^TU^P(p_t)\right]\ge \frac{T^{2/3}}{2^{7}}-\frac{T^{1/3}}{2^{11}}-\frac{T^{2/3}}{2^{10}}\ge 
\frac{T^{2/3}}{200},
\]
which proves the desired lower bound. 
\end{proof}

\section{Proof of \Cref{thm:ub2}}\label{app:proof-ub-fixed}

Now, we focus on the setting where we want to minimize the regret against a single unknown type $\bar \theta \in [0,1]$. When $F_{a_2}^\top r \ge F_{a_1}^\top r$ there is a trivial solution that yields regret $0$ (\Cref{lemma:easy-case}). Therefore, in the following we focus only on the case in which $F_{a_1}^\top r > F_{a_2}^\top r$.

The section is structured as follows:
\begin{itemize}
    \item First, we discuss the properties of contracts played by \Cref{alg:e2c}. In particular, we first formally define $\lazy$, and then we show that contracts played by \Cref{alg:e2c} are optimal contracts that can be used to approximate $\bar \theta$. More precisely, we formally show that, for each $\theta \le \lazy$, there exists optimal contracts such that for all types smaller or equal than $\theta$, the best-response is $a_1$, while for types larger than $\theta$ the best response is $a_2$. Furthermore, we also show that, for types larger than $\lazy$, it is optimal to play the null contract. Finally, we conclude these sections by showing that these contracts have instantaneous which is bounded by $\mathcal{O}(\DeltaF)$.  
    \item Then, in \Cref{app:subsec-stat-test}, we provide proofs regarding the statistical tests that we use in \Cref{alg:e2c} to approximate correctly the hidden type $\bar \theta$.
    \item Finally, in \Cref{app:subsec-regret-anal-fixed-types}, we combine these results to analyze the regret of \Cref{alg:e2c} and prove \Cref{thm:ub2}.
\end{itemize}

\subsection{Properties of contracts played by \Cref{alg:e2c}}\label{app:subsec-lazy}
We start by presenting the formal definition of $\lazy$. Recall that we introduced $\lazy$ as the a type such that, for all $\theta > \lazy$, the optimal action to be induced is $a_2$, and hence, the optimal contract is the null contract $\bar p = (0, \dots, 0)$. The rationale behind $\lazy$ is that incentivizing $a_1$ for types larger than $\lazy$ is not convenient for the principal. Specifically, we define $\lazy$ as the value of the following linear program. 

\begin{equation}
\label{eq:theta-lazy}
\lazy \coloneqq
\left\{
\begin{array}{ll}
\displaystyle \max_{\theta \in [0,1], p \in [0,1]^m} & \theta \\[0.5em]
\text{s.t.} & (F_{a_1}-F_{a_2})^\top p \ge  \theta C_{a_1} \\[0.5em]
\phantom{\text{s.t.}} & F_{a_1}^\top (r-p) \ge F_{a_2}^\top r
\end{array}
\right.
\end{equation}

First of all, we observe that \Cref{eq:theta-lazy} is always well-defined since it has a non-empty and bounded feasible region. To see that this region is non-empty it is sufficient to consider $\theta = 0$ and $p=(0, \dots, 0)$. Then, all constraints are satisfied since $F_{a_1}^\top r > F_{a_2}^\top r$ holds by assumption. Furthermore, it is also easy to see that $\lazy > 0$, as $F_{a_1}^\top r$ is strictly greater than $F_{a_2}^\top r$ and $C_{a_1}$ is strictly greater than $0$. 

Next, we observe that for all $(\theta,p)$ in the feasible region, $(F_{a_1} - F_{a_2})^\top p \ge \theta C_{a_1}$, so that $a_1 \in \mathcal{B}(\theta,p)$, and, furthermore, $F_{a_1}^\top (r-p) \ge F_{a_2}^\top r$, so that the principal utility when playing $p$ is larger than the optimal utility that can be obtained when playing contracts that induce $a_2$, \ie $F_{a_2}^\top r$. Note, moreover, that $F_{a_1}^\top (r-p) \ge F_{a_2}^\top r \ge F_{a_2}^\top (r-p)$
This ensures that, when playing $p$ against $\theta$, $b(\theta,p)=a_1$. It follows directly from this definition that, for all $\theta \le \lazy$, it is optimal for the principal to induce $a_1$, while for all types greater than $\lazy$, it is optimal to induce $a_2$ using the null contract. This result is proved in the following lemma.

\begin{lemma}[Agent behaviors and optimal contracts]\label{lemma:agent-behav}
    For all $\theta > \lazy$, $p = (0, \dots, 0)$ is an optimal contract. Furthermore, for all $\theta \le \lazy$, there exists a contract $p^\star_{\theta}$ such that:
    \begin{align*}
    & \text{(i)~} \max_{p \in [0,1]^m} U^P(p, b(\theta, p)) =  U^P(p^\star_\theta, b(\theta, p^\star_\theta)) \\
    & \text{(ii)~} a_1 = b(\tilde \theta, p^\star_{\theta}) \quad \forall \tilde \theta \le \theta  \\
    & \text{(iii)~} a_2 = b(\tilde \theta, p^\star_\theta) \quad \forall \tilde \theta > \theta. 
    \end{align*}
    In particular, $p^\star_\theta$ is the solution of \eqref{eq:f-lambda} with $\lambda$ set to $\theta C_{a_1}$.
\end{lemma}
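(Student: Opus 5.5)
We reduce everything to the single-type case $T=1$ of the one-dimensional machinery already developed. For a fixed type $\theta$, \Cref{lemma:opt-rewrite} (applied with the constant sequence $\theta_t=\theta$) gives
\[
\max_{p\in[0,1]^m} U^P(p,\theta)=\max_{\lambda\in[0,\bar\lambda]} \widetilde U^P(\lambda,\theta),
\qquad \bar\lambda=(F_{a_1}-F_{a_2})^\top r .
\]
We compare the two branches of $\widetilde U^P$. If $\lambda<\theta C_{a_1}$, the indicator vanishes and $\widetilde U^P(\lambda,\theta)=F_{a_2}^\top r+\lambda-f(\lambda)$. This is at most $F_{a_2}^\top r$, because $f(\lambda)\ge\lambda$: for $p\ge 0$ we have $F_{a_1}^\top p\ge (F_{a_1}-F_{a_2})^\top p$. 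The value $F_{a_2}^\top r$ is attained by the null contract whenever the agent plays $a_2$.

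If $\lambda\ge\theta C_{a_1}$, then $\widetilde U^P(\lambda,\theta)=F_{a_1}^\top r-f(\lambda)$. Since $f$ is strictly increasing by \Cref{lemma:f-lambda-properties}, this is maximized at the smallest admissible value $\lambda=\theta C_{a_1}$, provided $\theta C_{a_1}\le\bar\lambda$. The optimum is therefore
\[
\max\{F_{a_2}^\top r,\; F_{a_1}^\top r-f(\theta C_{a_1})\},
\]
with the second term present only when it is feasible.

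Next we link this to $\lazy$. A pair $(\theta,p)$ is feasible for \eqref{eq:theta-lazy} exactly when $\theta C_{a_1}\le (F_{a_1}-F_{a_2})^\top p$ and $F_{a_1}^\top p\le F_{a_1}^\top r-F_{a_2}^\top r$. By the monotonicity and scaling argument in \Cref{lemma:f-lambda-properties}, one can shrink $p$ so that the first constraint holds with equality. Hence $\theta\le\lazy$ if and only if $f(\theta C_{a_1})\le \bar\lambda+F_{a_2}^\top r-F_{a_2}^\top r$, that is, iff $F_{a_1}^\top r-f(\theta C_{a_1})\ge F_{a_2}^\top r$. This step implicitly requires $\theta C_{a_1}\le\bar\lambda$, which the second LP constraint forces, since $F_{a_1}^\top(r-p)\ge F_{a_2}^\top r\ge F_{a_2}^\top(r-p)$ puts $p$ in $\mathcal{P}_1$.

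This yields both claims of the lemma:
\begin{itemize}
\item For $\theta>\lazy$, the $a_1$-branch is either infeasible or strictly worse than $F_{a_2}^\top r$. The optimum is then $F_{a_2}^\top r$, which the null contract achieves because the agent plays $a_2$ (or $a_1$, which is only better since $F_{a_1}^\top r>F_{a_2}^\top r$). This settles the null-contract claim.
\item For $\theta\le\lazy$, take $p^\star_\theta$ to be the minimizer of \eqref{eq:f-lambda} at $\lambda=\theta C_{a_1}$. The minimizer exists because the problem is an LP over a compact nonempty set. Item (i) then follows from the case analysis above together with \Cref{lemma:utility}.
\end{itemize}

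For items (ii)–(iii), \Cref{lemma:utility} already shows that the agent picks $a_1$ exactly when $\tilde\theta\le\tilde\theta_{\max}(\lambda_{p^\star_\theta})=\theta$, with ties broken toward $a_1$ since $p^\star_\theta\in\mathcal{P}_1$. The delicate point is at $\tilde\theta=\theta$ and for $\tilde\theta<\theta$: we must check that $b$ selects $a_1$ and not merely that $a_1$ is IC. For $\tilde\theta<\theta$, $a_1$ is strictly preferred, so $b=a_1$. At equality, the principal-favoring tie-break picks $a_1$ because $F_{a_1}^\top(r-p^\star_\theta)\ge F_{a_2}^\top(r-p^\star_\theta)$. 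For $\tilde\theta>\theta$, $a_2$ is strictly preferred.

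We expect the main obstacle to be the careful equivalence between feasibility in \eqref{eq:theta-lazy} and the inequality $F_{a_1}^\top r-f(\theta C_{a_1})\ge F_{a_2}^\top r$. This is where the reduction from $\ge$ to $=$ in the incentive constraint, and the resulting membership in $\mathcal{P}_1$, must be justified.
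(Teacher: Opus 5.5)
Your proof is correct, but it takes a different route from the paper's.

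\textbf{How the paper argues.} It works directly in contract space.
\begin{itemize}
\item For $\theta>\lazy$ it argues by contradiction: a contract inducing $a_1$ with utility at least $F_{a_2}^\top r$ would make $(\theta,p)$ feasible for \eqref{eq:theta-lazy}.
\item For $\theta\le\lazy$ it splits the maximum into $a_1$-inducing and $a_2$-inducing contracts. It identifies the $a_1$ part with $F_{a_1}^\top(r-p^\star_\theta)$ and then asserts that $(\theta,p^\star_\theta)$ is feasible for \eqref{eq:theta-lazy}, appealing to the optimality of $p^\star_\theta$.
\end{itemize}

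\textbf{How you argue.} You pass through the one-dimensional objective $\widetilde U^P$ via \Cref{lemma:opt-rewrite} with $T=1$. You use $f(\lambda)\ge\lambda$ and the strict monotonicity of $f$ to collapse the two branches to $\max\{F_{a_2}^\top r,\,F_{a_1}^\top r-f(\theta C_{a_1})\}$. You then prove, by the scaling argument, the two-sided characterization that $\theta\le\lazy$ holds iff $\theta C_{a_1}\le\bar\lambda$ and $F_{a_1}^\top r-f(\theta C_{a_1})\ge F_{a_2}^\top r$.

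\textbf{What each route buys.} Yours makes rigorous exactly the steps the paper glosses:
\begin{itemize}
\item the replacement of the constraint $(F_{a_1}-F_{a_2})^\top p\ge\theta C_{a_1}$ by equality, which the paper uses silently when it writes the $a_1$-maximum as $F_{a_1}^\top r$ minus an equality-constrained minimum;
\item the fact that the resulting contract lies in $\mathcal{P}_1$, so $f$ is the right object;
\item the feasibility of $(\theta,p^\star_\theta)$.
\end{itemize}
The paper's route is shorter. It needs neither the monotonicity of $f$ nor the full one-dimensional reduction, only the existence of an optimal contract.

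\textbf{Two minor points.}
\begin{itemize}
\item Your threshold ``$\bar\lambda+F_{a_2}^\top r-F_{a_2}^\top r$'' is just $\bar\lambda=F_{a_1}^\top r-F_{a_2}^\top r$. This is the correct value, only written oddly.
\item The claim ``$\theta\le\lazy$ iff some $(\theta,p)$ is feasible'' uses that the feasible $\theta$'s in \eqref{eq:theta-lazy} form the downward-closed set $[0,\lazy]$. This is immediate, since weakening $\theta$ only relaxes the first constraint, but it deserves a line.
\end{itemize}
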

\begin{proof}
    First, fix any $\theta > \lazy$ and observe that the null contract yields utility $F_{a_2}^\top r$ to the principal. Furthermore, it is also easy to see that the null contract is optimal among all the contracts for which the agent plays $a_2$.  Next, focus on contracts for which the agent plays $a_1$. Specifically, suppose that there exists $p \in [0,1]^m$  such that $a_1 = b(\theta, p)$ and $F_{a_1}^\top (r-p) \ge F_{a_2}^\top r$, \ie agent plays $a_1$ when offered $p$, and $p$ yields the same (or more) utility to the principal than the null contract. This however is impossible, since $a_1 = b(\theta, p)$ implies that $(F_{a_1} - F_{a_2})^\top p\ge \theta C_{a_1}$, and, therefore, since, by assumption $F_{a_1}^\top (r-p) \ge F_{a_2}^\top r$ holds, $(\theta,p)$ would belong to the feasible region of \eqref{eq:theta-lazy}. Then, since $\theta > \lazy$, $\lazy$ cannot be an optimal solution of \eqref{eq:theta-lazy}. Thus, we proved that there is no contract which induces the agent to play $a_1$ while guaranteeing more utility for the principal. This concludes the first part of the proof.

    Second, consider $\theta \le \lazy$. We start by proving (i). Observe that, due to \Cref{lemma:opt-rewrite}, the maximum over contracts is well-defined. 
    \begin{align*}
        \max_{p \in [0,1]^m} U^P(p, b(\theta, p)) &  = \max \left\{ \max_{p: a_1 = b(\theta,p)} U^P(p, b(\theta, p)), \max_{p: a_2 = b(\theta,p)} U^P(p, b(\theta, p)) \right\} \\
        & = \max \left\{ \max_{p: a_1 = b(\theta,p)} U^P(p, b(\theta, p)), F_{a_2}^\top r\right\} \\
        & = \max \left\{ \max_{p: (F_{a_1}-F_{a_2})^\top p \ge \theta C_{a_1}} F_{a_1}^\top (r-p), F_{a_2}^\top r\right\} \\
        & = \max \left\{ F_{a_1}^\top r - \min_{p: (F_{a_1} - F_{a_2})^\top p = \theta C_{a_1}} F_{a_1}^\top p, F_{a_2}^\top r\right\} \\
        & = \max \left\{ F_{a_1}^\top (r-p^{\star}_\theta), F_{a_2}^\top r\right\}.
    \end{align*}
    where the last equality follows directly from the definition of $p^\star_\theta$. Now, observe that, since $p^\star_\theta$ is optimal for $\theta$ among all the contracts for which $a_1 \in \mathcal{B}(\theta,p)$, and since $\theta \le \lazy$, it must hold that the pair $(\theta, p^\star_\theta)$ is a feasible solution of \eqref{eq:theta-lazy}, and, therefore, $F_{a_1}^\top (r-p^{\star}_\theta) \ge F_{a_2}^\top r$. As a consequence, we have proved (i). 

    To prove (ii), it is sufficient to observe that $a_1 = b(\theta, p^\star_\theta)$. Hence, it must hold that $a_1 = b(\tilde \theta, p^\star_\theta)$ for all $\tilde \theta < \theta$ as well.

    Finally, to prove (iii), we observe that, by definition $p^\star_\theta$ satisfies $(F_{a_1} - F_{a_2})^\top p^\star_\theta = \theta C_{a_1}$. Therefore, for all $\tilde \theta > \theta$, action $a_1$ cannot belong to $\mathcal{B}(\tilde \theta, p^\star_\theta)$. 
    This concludes the proof.
\end{proof}

Next, we show that the instantaneous regret for the contracts that we designed above is upper bounded by $\DeltaF$, and, therefore, they are suitable for exploration.

\begin{lemma}[Sub-optimality gaps]\label{lemma:cost-of-exploration}
    Suppose $F_{a_1}^\top r > F_{a_2}^\top r$. 
    Consider $\theta_1, \theta_2 \in [0,1]$ and let $p_1^\star, p_2^\star$ be optimal contracts for $\theta_1$ and $\theta_2$ respectively. Then, it holds that:
    \begin{align}\label{eq:robust-eq-1}
    U^P(p^\star_1, b(\theta_1,p_1^\star)) - U^P(p_2^\star, b(\theta_1,p_2^\star)) \le 2 \DeltaF.
    \end{align}
    
    Furthermore, let $\alpha > 0$ and define $\tilde p_2 = p^\star_2 + \alpha( r- p^\star_2)$. Then, it holds that:
    \begin{align}\label{eq:robust-eq-2}
    U^P(p^\star_1, b(\theta_1,p_1^\star)) - U^P(p_2^\star, b(\theta_1,\tilde p_2^\star)) \le 2\DeltaF + \alpha.    
    \end{align}

    Finally, let $\bar p = (0, \dots, 0)$. It holds that:
    \begin{align}\label{eq:robust-eq-3}
        U^P(p^\star_1, b(\theta_1, p_1^\star)) - U^P(\bar p, b(\theta_1, \bar p)) \le \DeltaF.
    \end{align}
\end{lemma}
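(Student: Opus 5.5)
The plan rests on one elementary observation. For any contract $p\in[0,1]^m$, the principal's utilities under the two actions differ by at most $\DeltaF$:
\[
\bigl|F_{a_1}^\top(r-p)-F_{a_2}^\top(r-p)\bigr| = \bigl|(F_{a_1}-F_{a_2})^\top (r-p)\bigr| \le \DeltaF,
\]
since $r-p\in[-1,1]^m$. Whatever the tie-breaking rule, $U^P(p,b(\theta,p))$ always equals $F_{a}^\top(r-p)$ for some $a\in A$. Hence, for every contract $p$ and every pair of types $\theta,\theta'$, we get $|U^P(p,b(\theta,p))-U^P(p,b(\theta',p))|\le \DeltaF$. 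Write $V(\theta)=\max_p U^P(p,b(\theta,p))$; this maximum is attained by \Cref{lemma:opt-rewrite}.

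For \eqref{eq:robust-eq-1}, I would split the gap as
\[
\bigl[V(\theta_1)-V(\theta_2)\bigr]+\bigl[V(\theta_2)-U^P(p_2^\star,b(\theta_1,p_2^\star))\bigr].
\]
The second bracket is at most $\DeltaF$ by the observation applied to $p_2^\star$, since $V(\theta_2)=U^P(p_2^\star,b(\theta_2,p_2^\star))$. For the first bracket, I would use $p_1^\star$ as a candidate contract for $\theta_2$:
\[
V(\theta_2)\ge U^P(p_1^\star,b(\theta_2,p_1^\star))\ge U^P(p_1^\star,b(\theta_1,p_1^\star))-\DeltaF = V(\theta_1)-\DeltaF.
\]
This gives the stated $2\DeltaF$. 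No case analysis on the order of $\theta_1,\theta_2$ is needed.

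For \eqref{eq:robust-eq-2}, I read the second term as $U^P(\tilde p_2,b(\theta_1,\tilde p_2))$. Rather than any robustness argument, I would use the identity $r-\tilde p_2=(1-\alpha)(r-p_2^\star)$. It gives $F_a^\top(r-\tilde p_2)=(1-\alpha)F_a^\top(r-p_2^\star)\ge F_a^\top(r-p_2^\star)-\alpha$ for each $a$, because $|F_a^\top(r-p_2^\star)|\le 1$. So whichever action $a$ type $\theta_1$ plays under $\tilde p_2$, we have $U^P(\tilde p_2,b(\theta_1,\tilde p_2))\ge F_a^\top(r-p_2^\star)-\alpha$. The right-hand side is at least $V(\theta_2)-\DeltaF-\alpha$ by the observation, since $V(\theta_2)$ is one of the two values $F_{a'}^\top(r-p_2^\star)$. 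Combining with $V(\theta_1)-V(\theta_2)\le\DeltaF$ from the previous step yields $2\DeltaF+\alpha$. One point needs checking here: $\tilde p_2\in[0,1]^m$, which holds because it is a convex combination of $p_2^\star$ and $r$.

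For \eqref{eq:robust-eq-3}, the null contract always yields at least $F_{a_2}^\top r$, since $F_{a_1}^\top r>F_{a_2}^\top r$ and $\bar p=0$ pays nothing. If the optimal contract for $\theta_1$ induces $a_2$, its utility is $F_{a_2}^\top(r-p_1^\star)\le F_{a_2}^\top r$, and we are done. If it induces $a_1$, then, using $F_{a_2}^\top p_1^\star\ge 0$,
\[
F_{a_1}^\top(r-p_1^\star)-F_{a_2}^\top r\le F_{a_1}^\top(r-p_1^\star)-F_{a_2}^\top(r-p_1^\star)\le\DeltaF.
\]
The only delicate step is being careful that the tie-breaking in $b(\cdot,\cdot)$ never matters. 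This holds because every bound above only uses that the realized utility equals $F_a^\top(r-p)$ for some action $a$.
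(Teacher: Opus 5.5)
Your proof is correct. For \eqref{eq:robust-eq-1} and \eqref{eq:robust-eq-3} it coincides with the paper's argument. Both use the same split through $U^P(p_2^\star,b(\theta_2,p_2^\star))$, the same ``max minus max is at most the max of differences'' step, and the same two cases for the null contract. In \eqref{eq:robust-eq-3} the paper drops $F_{a_1}^\top p_1^\star$ to bound by $(F_{a_1}-F_{a_2})^\top r$, while you drop $F_{a_2}^\top p_1^\star$; the two are equivalent.

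The genuine difference is in \eqref{eq:robust-eq-2}. The paper splits at $U^P(p_2^\star,b(\theta_1,p_2^\star))$. It then applies the linearization result \Cref{lemma:robust-contract}, in the limiting case $\epsilon=0$ with $\rho(\theta)=b(\theta,p_2^\star)$, to show that for the fixed type $\theta_1$ robustifying $p_2^\star$ costs at most $\alpha$, and adds \eqref{eq:robust-eq-1}. You split at $V(\theta_2)$ instead and use only the identity $r-\tilde p_2=(1-\alpha)(r-p_2^\star)$ plus the $\DeltaF$ action-switch bound. So you never need to know which action $\theta_1$ picks under $\tilde p_2$.

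Your route is elementary and self-contained. It also covers the literal (misprinted) second term $U^P(p_2^\star,b(\theta_1,\tilde p_2))$, since you show $F_a^\top(r-p_2^\star)\ge V(\theta_2)-\DeltaF$ for every action $a$. Your reading $U^P(\tilde p_2,b(\theta_1,\tilde p_2))$ is the one the later regret lemmas actually use.

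The paper's route buys a sharper intermediate fact: robustification loses only $\alpha$ against a fixed type, with no $\DeltaF$ term. That is the same mechanism that drives the exploitation analysis in \Cref{lemma:regret-non-lazy}, where a per-round $\DeltaF$ loss would be unaffordable. Here the $\DeltaF$ slack is already in the budget, so your cruder comparison suffices.

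One small caveat: $\tilde p_2\in[0,1]^m$ as a convex combination needs $\alpha\le 1$. This is the regime the algorithm uses ($\alpha=\min\{1,\sqrt{2^{-L}}\}$), and your inequality $(1-\alpha)x\ge x-\alpha$ for $x\le 1$ holds for every $\alpha>0$ anyway.
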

\begin{proof}
    We start by proving \Cref{{eq:robust-eq-1}}.
    First, observe that $p_1^\star$ and $p_2^\star$ are well-defined due to \Cref{lemma:opt-rewrite}.

    Second, fix any contract $p \in [0,1]$ and any $\theta_1, \theta_2$. Denote by $a = b(\theta_1, p)$ and $\bar a = b(\theta_2,p)$. 
    Then, it holds that:
    
    \[
        U^P(p, b(\theta_1, p)) - U^P(p, b(\theta_2, p)) =
        \begin{cases}
        0 & \text{if } a = a \\
        F_{a}^\top (r-p) - F_{\bar a}^\top (r-p) & \text{if } a \ne \bar a ,
        \end{cases}
    \]
    from which it directly follows that:
    \begin{align}\label{eq:delta-same-contract}
    U^P(p, b(\theta_1, p)) - U^P(p, b(\theta_2, p)) \le \DeltaF   . 
    \end{align}

    At this point, we are ready to prove our result.
    \begin{align*}
        U^P(p^\star_1, b(\theta_1,p_1^\star)) - U^P(p_2^\star, b(\theta_1,p_2^\star))  =  U^P(p^\star_1,& b(\theta_1,p_1^\star))  - U^P(p^\star_2, b(\theta_2, p^\star_2)) + \\& + U^P(p^\star_2, b(\theta_2, p^\star_2)) - U^P(p_2^\star, b(\theta_1,p_2^\star))
    \end{align*}
    Focus on the second term of the r.h.s. of this equation. Due to \Cref{eq:delta-same-contract}, it is smaller than $\DeltaF$. 
    Concerning the first term of the r.h.s., instead, we observe the following.
    \begin{align*}
    \max_{p \in [0,1]^m} U^P(p, b(\theta_1,p)) - \max_{p \in [0,1]^m} U^P(p, b(\theta_2,p)) & \le \max_{p \in [0,1]^m} ( U^P(p, b(\theta_1,p)) - U^P(p, b(\theta_2, p)) )  \\
    & \le \DeltaF. \tag{by \Cref{eq:delta-same-contract}}
    \end{align*}

    Combining these arguments, yields \Cref{{eq:robust-eq-1}}.
    
    We now continue by proving \Cref{{eq:robust-eq-2}}. To this end, we note that we can decompose $U^P(p^\star_1, b(\theta_1,p_1^\star)) - U^P(p_2^\star, b(\theta_1,\tilde p_2^\star))$ as the sum of $U^P(p^\star_1, b(\theta_1,p_1^\star)) - U^P(p_2^\star, b(\theta_1, p_2^\star))$ and $U^P(p_2^\star, b(\theta_1, p_2^\star)) - U^P(p_2^\star, b(\theta_1,\tilde p_2^\star))$. The first term can be bounded using \Cref{eq:robust-eq-1}, yielding the $\DeltaF$ term in \Cref{eq:robust-eq-2}. For the second term, instead, we apply \Cref{lemma:robust-contract} with $\epsilon = 0$ and $\rho(\theta) = b(\theta,p^*_2)$. This yields:
    \[
    U^P(p_2^\star, b(\theta_1, p_2^\star)) - U^P(p_2^\star, b(\theta_1,\tilde p_2^\star)) \le \alpha,
    \]
    which concludes the proof of \Cref{eq:robust-eq-2}.

    Finally, we prove \Cref{eq:robust-eq-3}. If $b(\theta_1, p^\star_1)=a_2$, then, the null contract must be an optimal contract for $\theta_1$ as all the other contracts that induce $a_2$ cannot yield larger utility to the principal. If $b(\theta_1, p^\star_1) = a_1$, instead, there are two cases. Either $b(\theta_1, \bar p)=a_1$, and again we have that the null contract must be an optimal contract for $\theta_1$, or $b(\theta_1, \bar p) = a_2$. In this case:
    \[
    F_{a_1}^\top (r-p^\star_1) - F_{a_2}^\top r \le (F_{a_1} - F_{a_2})^\top r \le \DeltaF,
    \]
    which concludes the proof.
\end{proof}

\subsection{Statistical Test}\label{app:subsec-stat-test}

In this section, we prove results regarding the statistical test that we use in \Cref{alg:e2c}. Specifically, \Cref{lemma:stat-test} shows how the $Z$ statistic works when we apply to a stream of $n$ i.i.d. outcomes realized from either $F_{a_1}$ or $F_{a_2}$. Then, in \Cref{lemma:correct-exploration}, we apply such tests within \Cref{alg:e2c}.

\begin{lemma}[Statistical Test]\label{lemma:stat-test}
    Consider $\{\omega_i\}_{i=1}^n$ be a stream of $n$ i.i.d. samples from $F_{a_2}$. Then, it holds that:
    \begin{align}\label{eq:test-1}
        \mathbb{P}\left( \frac{1}{n} \sum_{i=1}^n \mathds{1}\{ \omega_i \in \Omega_1 \} \ge \tau \right) \le \exp\left( -\frac{1}{8} n {\DeltaF^2} \right).
    \end{align}
    Similarly, if $\{\omega_i\}_{i=1}^n$ is a stream of $n$ i.i.d. samples from $F_{a_1}$, it holds that:
    \begin{align}\label{eq:test-2}
        \mathbb{P}\left( \frac{1}{n} \sum_{i=1}^n \mathds{1}\{ \omega_i \in \Omega_1 \} \le \tau \right) \le \exp\left( -\frac{1}{8} n {\DeltaF^2}\right).
    \end{align}
\end{lemma}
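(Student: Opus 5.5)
The plan is to apply Hoeffding's inequality to the Bernoulli indicators $X_i = \mathds{1}\{\omega_i \in \Omega_1\}$, after computing the gap between their mean and the threshold $\tau$.

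\textbf{Step 1: compute the gap.} Define $q_j = \sum_{\omega \in \Omega_1} F_{a_j,\omega}$ for $j \in \{1,2\}$. Then $\tau = (q_1+q_2)/2$, so
\[
\tau - q_2 = q_1 - \tau = \tfrac12 (q_1 - q_2).
\]
Since $F_{a_1}$ and $F_{a_2}$ both sum to one, we have $\sum_{\omega}(F_{a_1,\omega}-F_{a_2,\omega}) = 0$. Hence the positive part of $F_{a_1}-F_{a_2}$ equals its negative part, and each is half the $\ell_1$ norm. By the definition of $\Omega_1$, the positive part is exactly $q_1 - q_2 = \sum_{\omega\in\Omega_1}(F_{a_1,\omega}-F_{a_2,\omega})$. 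Therefore
\[
q_1 - q_2 = \tfrac12 \DeltaF,
\qquad
\tau - q_2 = q_1 - \tau = \tfrac14 \DeltaF.
\]

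\textbf{Step 2: apply Hoeffding.} Under samples from $F_{a_2}$, the $X_i$ are i.i.d. Bernoulli$(q_2)$ variables taking values in $[0,1]$. Hoeffding's inequality gives
\[
\mathbb{P}\Big(\tfrac1n\textstyle\sum_i X_i - q_2 \ge \tfrac14\DeltaF\Big) \le \exp\big(-2n(\DeltaF/4)^2\big) = \exp\big(-\tfrac18 n\DeltaF^2\big).
\]
This is exactly \Cref{eq:test-1}. The lower-tail version of the same inequality, applied to Bernoulli$(q_1)$ samples under $F_{a_1}$, gives \Cref{eq:test-2}.

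The only non-routine point is the identity in Step 1 that the mass difference on $\Omega_1$ equals half the $\ell_1$ distance. This is the standard total-variation identity for probability vectors, so I expect no real obstacle.
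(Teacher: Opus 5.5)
Your proposal is correct and follows the paper's proof: you compute the gap $\tau - \sum_{\omega\in\Omega_1}F_{a_2,\omega} = \tfrac14\DeltaF$ and apply Hoeffding with $\delta = \tfrac14\DeltaF$ to get $\exp(-\tfrac18 n\DeltaF^2)$. The paper uses the total-variation identity $\sum_{\omega\in\Omega_1}(F_{a_1,\omega}-F_{a_2,\omega}) = \tfrac12\DeltaF$ without comment, and you justify it explicitly. The lower-tail case for \Cref{eq:test-2}, which you handle symmetrically, is left by the paper to ``similar arguments.''
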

\begin{proof}
    We first prove \Cref{{eq:test-1}}. For brevity, we use  $Z$ to denote $\frac{1}{n} \sum_{i=1}^n \mathds{1}\{ \omega_i \in \Omega_1 \}$. Observe that 
    \[
    \mathbb{E}[Z] = \frac{1}{n} \sum_{i=1}^n \E\left[ \mathds{1}  \{ \omega \in \Omega_1 \}  \right] = \mathbb{P}(\omega \in \Omega_1) = \sum_{\omega \in \Omega_1} F_{a_2, \omega}. 
    \]
    Then, we have that:
    \begin{align*}
        \mathbb{P}\left( Z \ge \tau \right) & = \mathbb{P}\left( Z -\sum_{\omega \in \Omega_1} F_{a_2, \omega} \ge \tau-\sum_{\omega \in \Omega_1} F_{a_2, \omega} \right) \\
        & = \mathbb{P}\left( Z -\sum_{\omega \in \Omega_1} F_{a_2, \omega} \ge \frac{\sum_{\omega \in \Omega_1} (F_{a_1, \omega} - F_{a_2, \omega})}{2} \right) \tag{Def. of $\tau$}\\
        & = \mathbb{P}\left( Z -\sum_{\omega \in \Omega_1} F_{a_2, \omega} \ge \frac{\DeltaF}{4} \right)  \\
        & \le \exp \left( - \frac{1}{8} n {\DeltaF^2} \right) \tag{\Cref{lemma:hoeffding}}
    \end{align*}
    The proof of \Cref{eq:test-2} follows from similar arguments.
\end{proof}

\begin{lemma}[Correct Exploration]\label{lemma:correct-exploration}
    Consider \Cref{alg:e2c}. Suppose $\DeltaF > \beta$ and  $\bar \theta > \lazy$. It holds that:
    \begin{align}\label{eq:exp-algo-eq1}
        \Prob\left( Z_0 \ge \tau \right) \le \exp\left(-\frac{1}{8} n \DeltaF^2 \right).
    \end{align}
    Suppose that $\bar \theta \le \lazy$. It holds that:
    \begin{align}\label{eq:exp-algo-eq2}
        \Prob \left( \bar \theta \notin [s_{L}, e_L] \right) \le 2L \exp \left(-\frac{1}{8}n \DeltaF^2 \right).
    \end{align}
\end{lemma}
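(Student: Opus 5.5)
Both bounds follow from \Cref{lemma:stat-test} once we identify which distribution generates the samples in each test. The tool is \Cref{lemma:agent-behav}, applied with the fixed hidden type $\bar\theta$.

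\textbf{First bound.} For \eqref{eq:exp-algo-eq1} the algorithm plays $p^\star_{\lazy}$ in the initial test. By \Cref{lemma:agent-behav}(iii) with $\theta=\lazy$ and $\tilde\theta=\bar\theta>\lazy$, the agent best responds with $a_2$ in every one of the $n$ rounds. Hence $\omega^{(0)}_1,\dots,\omega^{(0)}_n$ are i.i.d.\ from $F_{a_2}$. Here we use that the type is fixed and the contract is fixed, so outcomes across rounds are independent draws. Now \eqref{eq:test-1} gives $\Prob(Z_0\ge\tau)\le\exp(-\tfrac18 n\DeltaF^2)$ directly. The hypothesis $\DeltaF>\beta$ only ensures we are in the branch where the test is run.

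\textbf{Second bound.} For \eqref{eq:exp-algo-eq2} I would argue by induction along the binary search. The invariant is $\bar\theta\in[s_l,e_l]$, up to the endpoint convention. At the start $s_1=0$, $e_1=\lazy$, and $\bar\theta\le\lazy$, so the invariant holds. At phase $l$ the midpoint $\hat\theta_l\le\lazy$ is tested with $p^\star_{\hat\theta_l}$, which is well defined by \Cref{lemma:agent-behav}. Call the test \emph{correct} if:
\begin{itemize}
\item $Z_l\le\tau$ when $\bar\theta>\hat\theta_l$;
\item $Z_l>\tau$ when $\bar\theta\le\hat\theta_l$.
\end{itemize}
In the first case, \Cref{lemma:agent-behav}(iii) says the samples come from $F_{a_2}$, so by \eqref{eq:test-1} the failure probability is at most $\exp(-\tfrac18 n\DeltaF^2)$. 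In the second case, (ii) says the samples come from $F_{a_1}$, and \eqref{eq:test-2} gives the same bound.

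On a correct test the update keeps the invariant. If $Z_l\le\tau$ we have $\bar\theta>\hat\theta_l=s_{l+1}$, and otherwise $\bar\theta\le\hat\theta_l=e_{l+1}$. So $\bar\theta\notin[s_L,e_L]$ implies that some test among the at most $L$ performed was incorrect.

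We bound that event by a union bound. One subtlety is that each test's samples are conditionally i.i.d.\ given the past, because the tested contract is determined by earlier outcomes. So the per-phase bound should be applied conditionally on the history and then integrated out. Summing over phases gives at most $L\exp(-\tfrac18 n\DeltaF^2)$, which is within the claimed $2L\exp(\cdot)$. The extra factor $2$ gives slack to also absorb the initial $\lazy$ test: on $\bar\theta\le\lazy$ it is misread with probability at most $\exp(-\tfrac18 n\DeltaF^2)$ by \eqref{eq:test-2} and (ii).

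\textbf{Main obstacle.} The delicate point is the boundary and tie convention, which has two parts.
\begin{itemize}
\item The case $\bar\theta=\hat\theta_l$. Here (ii) gives $a_1$, and the resulting update $e_{l+1}=\hat\theta_l$ keeps $\bar\theta$ in the closed interval.
\item Strictness of the $\tau$ comparison. The concentration statements in \Cref{lemma:stat-test} are stated with non-strict inequalities, so they cover the $Z=\tau$ borderline in both directions.
\end{itemize}
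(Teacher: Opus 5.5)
Your proposal is correct and follows essentially the same route as the paper. You identify the sampling distribution of each test via \Cref{lemma:agent-behav}, apply \Cref{lemma:stat-test}, propagate the invariant $\bar\theta\in[s_l,e_l]$ through the binary search, and union-bound over the initial $\lazy$ test and the search phases. Your version is slightly tighter than the paper's: you note that, given the history, only one error type is possible per phase, whereas the paper charges $2\exp(-\tfrac18 n\DeltaF^2)$ per phase. You also handle the adaptive choice of $\hat\theta_l$ explicitly by conditioning on the history, which the paper leaves informal.
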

\begin{proof}
    The lemma is a combination of \Cref{lemma:agent-behav} and \Cref{lemma:stat-test}.

    We start by proving \Cref{eq:exp-algo-eq1}. If $\bar \theta > \lazy$, we know from \Cref{lemma:agent-behav}, that $\{ \omega^{(0)}_i \}_{i=1}^n$ is a stream of $n$ i.i.d. samples from $F_{a_2}$. Thus, applying \Cref{eq:test-1} in \Cref{lemma:stat-test} yields \Cref{eq:exp-algo-eq1}. 

    Second, we continue by proving \Cref{eq:exp-algo-eq2}. For each phase $l \in \{0, \dots, L-1 \}$, we denote by $\mathcal{E}_l$ the following event:
    \begin{align*}
    \mathcal{E}_l & = \left\{ \left( \{ \omega_i^{(l)}\}_{i=1}^n \sim F_{a_1} \land Z_l > \tau \}  \right) \lor \left( \{ \omega_i^{(l)}\}_{i=1}^n \sim F_{a_2} \land Z_l \le \tau \} \right)  \right\}.
    \end{align*}
    Now, observe that, under $\bigcap_{l=0}^{L-1} \mathcal{E}_l$, we are guaranteed that $\bar \theta \in [s_{L}, e_L]$. Indeed, if $\mathcal{E}_0$ holds, \Cref{alg:e2c} enters the exploration phase. Indeed, due to \Cref{lemma:agent-behav}, $\bar \theta \le \lazy$ responds to $p^\star_{\lazy}$ by playing $a_1$ and $Z_0 > \tau$ holds from $\mathcal{E}_0$. Then, the exploration phase runs a binary search sub-routine. Let us analyze the first phase $l=1$. Suppose that $\bar \theta \in \left(\tfrac{1}{2} \lazy , \lazy\right]$. Then, by \Cref{lemma:agent-behav}, $\bar \theta$ responds to $p^\star_{\tfrac{1}{2}\lazy}$ by playing $a_2$, and since $Z_1 \le \tau$ holds, we have that the search continues by inspecting the interval $[s_2, b_2] = \left[\tfrac{1}{2} \lazy , \lazy\right]$ which is the one that contains $\bar \theta$. Similarly, if $\bar \theta \in [0, \lazy]$, from \Cref{lemma:agent-behav}, we have that $\bar \theta$ plays $a_1$, and since $Z_1 > \tau$, we have that  the search continues by inspecting the interval $[s_2, b_2] = \left[0, \tfrac{1}{2} \lazy \right]$ which is the one that contains $\bar \theta$. Then, by an inductive argument, we obtain that $\bar \theta \in [s_L, e_L]$ under $\bigcap_{l=0}^{L-1} \mathcal{E}_l$. Then, once we reach the last phase we have
    \begin{align*}
        \Prob \left( \bar \theta \notin [s_{L}, e_L] \right) 
        & \le \sum_{i=0}^{L-1} \Prob \left( \mathcal{E}_l^\complement \right) \\
        & = \sum_{i=0}^{L-1} \Prob\left( \left\{ \{ \omega_i^{(l)} \}_{i=1}^n \sim F_{a_1} \land Z_l \le \tau \} \right\} \lor  \left\{\{ \omega_i^{(l)} \}_{i=1}^n \sim F_{a_2} \land Z_l > \tau \right\} \right)   \\
        & \le \sum_{i=0}^{L-1} \Prob\left( \left\{ \{ \omega_i^{(l)} \}_{i=1}^n \sim F_{a_1} \land Z_l \le \tau \} \right\} \right) + \Prob\left( \left\{ \{ \omega_i^{(l)} \}_{i=1}^n \sim F_{a_2} \land Z_l > \tau \} \right\} \right)  \\
        & \le \sum_{i=0}^{L-1} \Prob\left(  Z_l \le \tau  \Big| \{ \omega_i^{(l)} \}_{i=1}^n \sim F_{a_1} \right) +  \Prob\left(  Z_l \ge \tau  \Big| \{ \omega_i^{(l)} \}_{i=1}^n \sim F_{a_2} \right) \\
        & \le 2L \exp \left( -\frac{1}{8} n \DeltaF^2 \right),\tag{\Cref{lemma:stat-test}}
    \end{align*}
    which concludes the proof.
\end{proof}

\subsection{Regret Analysis}\label{app:subsec-regret-anal-fixed-types}

Finally, we analyze the regret of \Cref{alg:e2c}. We split the analysis in three cases. First, in \Cref{subsubsec:lazy} and \Cref{subsubsec:non-lazy} we analyze the regret for $\bar \theta > \lazy$ and $\bar \theta \le \lazy$ respectively. Then, in \Cref{subsubsec:sim-action} we analyze the regret for the case where $\DeltaF \le \beta$. Finally, in \Cref{subsubsec:combine} we combine these results and prove \Cref{thm:ub2} by properly choosing $n, L, \alpha$ and $\beta$. 

\subsubsection{Regret Analysis for Lazy Types}\label{subsubsec:lazy}

\begin{lemma}[Regret lazy types]\label{lemma:regret-lazy-types}
    Suppose that $\bar \theta > \lazy$ and $\DeltaF > \beta$. Then, \Cref{alg:e2c} guarantees that:
    \begin{align*}
        R_T \le 2 \DeltaF n + 2T (\DeltaF + \alpha) \exp\left( -\frac{1}{8}n\DeltaF^2\right).
    \end{align*}
\end{lemma}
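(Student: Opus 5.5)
Since $\DeltaF > \beta$, \Cref{alg:e2c} skips \Cref{line:indistinguishable}. Moreover $\bar\theta > \lazy$ forces $\lazy < 1$, so the algorithm runs the initial test: it plays $p^\star_{\lazy}$ for $n$ rounds and then branches on $Z_0$. I will split the regret into the $n$ testing rounds plus the remaining $T-n$ rounds, and condition the latter on the test outcome.

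\textbf{Testing rounds.} The contract $p^\star_{\lazy}$ is an optimal contract for type $\lazy$ (\Cref{lemma:agent-behav}). Applying \Cref{eq:robust-eq-1} of \Cref{lemma:cost-of-exploration} with $\theta_1 = \bar\theta$ and $\theta_2 = \lazy$ bounds its per-round regret against $\bar\theta$ by $2\DeltaF$. These rounds therefore contribute at most $2\DeltaF n$.

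\textbf{Post-test rounds, correct test.} On the event $\{Z_0 \le \tau\}$ the algorithm commits to $\bar p = (0,\dots,0)$ via \Cref{line:null}. By the first part of \Cref{lemma:agent-behav}, $\bar p$ is optimal for every $\theta > \lazy$, so these rounds incur zero regret.

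\textbf{Post-test rounds, failed test.} On the complementary event $\{Z_0 > \tau\}$ the algorithm performs the binary search and then exploits. Every contract played here is either some $p^\star_{\hat\theta_l}$, optimal for the type $\hat\theta_l \le \lazy$, or the robustified contract $\tilde p = p^\star_{\hat\theta_L} + \alpha(r - p^\star_{\hat\theta_L})$. By \Cref{eq:robust-eq-1} and \Cref{eq:robust-eq-2} of \Cref{lemma:cost-of-exploration}, each round costs at most $2\DeltaF + \alpha \le 2(\DeltaF + \alpha)$, so these rounds cost at most $2T(\DeltaF + \alpha)$ in total. By \Cref{eq:exp-algo-eq1} of \Cref{lemma:correct-exploration}, $\Prob(Z_0 \ge \tau) \le \exp(-\tfrac18 n\DeltaF^2)$, which also bounds $\Prob(Z_0 > \tau)$.

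Summing the three contributions, $R_T \le 2\DeltaF n + 2T(\DeltaF+\alpha)\exp(-\tfrac18 n\DeltaF^2)$, as claimed.

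The main subtlety is checking that \Cref{eq:robust-eq-2} really applies to $\tilde p$ when the type is $\bar\theta$. That bound is stated with $\tilde p_2$ as the contract played; the paper's displayed expression with $p_2^\star$ in the utility looks like a typo. It follows from the robustification lemma (\Cref{lemma:robust-contract}) with $\epsilon=0$. If one worries about $\tilde p$ leaving $[0,1]^m$, this does not happen: $\tilde p$ is a convex combination of $p^\star$ and $r$, both in $[0,1]^m$.
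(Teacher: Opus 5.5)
Your proposal is correct and follows the paper's proof essentially step by step. You charge the $n$ testing rounds $2\DeltaF n$ via \Cref{lemma:cost-of-exploration}, and you split the remaining rounds on $\{Z_0 \le \tau\}$: there the null contract is optimal and costs nothing, while on the complement each round costs at most $2\DeltaF+\alpha$, weighted by the failure probability from \Cref{lemma:correct-exploration}. Your reading of \Cref{eq:robust-eq-2} as a bound on the robustified contract's own utility is exactly how the paper applies it.
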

\begin{proof}
    Denote by $p^\star$ the null contract. Recall that, from \Cref{lemma:agent-behav} the null contract is optimal for types greater than $\lazy$. Furthermore, since $\theta > \lazy$, it must be that $\lazy < 1$. Hence, it holds that \Cref{alg:e2c} tests whether $\bar \theta$ is lazy. It follows that:
    \begin{align*}
        R_T & = \sum_{t=1}^n \left( U^P(p^\star, \bar \theta) - U^P(p^\star_{\lazy}, \bar \theta) \right) + \sum_{t={n+1}}^T  \left( U^P(p^\star, \bar \theta) - \E [ U^P(p_t, \bar \theta)] \right) \\
        & \le 2 \DeltaF n + \sum_{t={n+1}}^T  \left( U^P(p^\star, \bar \theta) - \E [ U^P(p_t, \bar \theta)] \right) \tag{\Cref{lemma:cost-of-exploration}} \\
    \end{align*}
    Next, fix any $t \ge n+1$ and analyze $U^P(p^\star, \bar \theta) - \E [ U^P(p_t, \bar \theta)]$ using the law of total expectation. Denote by $\mathcal{E}_0 = \{ Z_0 \le \tau \}$. Observe that, under $\mathcal{E}_0$, it holds
    \begin{align*}
        U^P(p^\star, \bar \theta) - \E [ U^P(p_t, \bar \theta)\,\, | \,\,\mathcal{E}_0] = 0,
    \end{align*}
    as \Cref{alg:e2c} plays the optimal contract until the end of the horizon. Instead, when $\mathcal{E}_0^\complement$ holds, we split the analysis in two cases. First, consider the case when $t \le \{n+1, \dots, nL \}$. In this case, \Cref{alg:e2c} is playing $p^\star_{\theta}$ for some $\theta \in [0, \lazy]$ where $p^\star_{\theta}$ is as in \Cref{lemma:agent-behav}. Thus, it holds that:
    \begin{align*}
        U^P(p^\star, \bar \theta) - \E [ U^P(p_t, \bar \theta) | \mathcal{E}_0^\complement] & \le \sup_{\theta \in [0, \lazy]} U^P(p^\star, \bar \theta) - U^P(p^\star_\theta, \bar \theta)\\
        & \le 2 \DeltaF \tag{\Cref{lemma:cost-of-exploration}}
    \end{align*}
    If, instead, $t \ge nL + 1$, \Cref{alg:e2c} plays $\tilde p = p^\star_{\theta} + \alpha (r-p^\star_{\theta})$ for some $\theta \in [0, \lazy]$.
    It follows that:
    \begin{align*}
        U^P(p^\star, \bar \theta) - \E [ U^P(p_t, \bar \theta) | \mathcal{E}_0^\complement] & \le \sup_{\theta \in [0, \lazy]} U^P(p^\star, \bar \theta) - U^P(p^\star_\theta + \alpha(r-p^\star_\theta), \bar \theta) \\
        & \le 2 \DeltaF + \alpha. \tag{\Cref{lemma:cost-of-exploration}}
    \end{align*}

    Combining these results, we have that:
    \[
    \sum_{t={n+1}}^T  \left( U^P(p^\star, \bar \theta) - \E [ U^P(p_t, \bar \theta)] \right) \le 2T (\DeltaF + \alpha) \Prob\left(\mathcal{E}_0^\complement \right).
    \]
    Using \Cref{lemma:correct-exploration} we can upper bound $\Prob(\mathcal{E}_0^\complement)$, obtaining:
    \[
    R_T \le 2 \DeltaF n + 2T (\DeltaF + \alpha) \exp\left( -\frac{1}{8}n\DeltaF^2\right),
    \]
    which concludes the proof.
\end{proof}

\subsubsection{Regret Analysis for Non-Lazy Types}\label{subsubsec:non-lazy}

\begin{lemma}[Regret for non-lazy types]\label{lemma:regret-non-lazy}
Suppose that $\DeltaF > \beta$ and $\bar \theta \le \lazy$. It holds that:
    \[
    R_T \le 2nL \DeltaF + 2T \left(  \left( \frac{2^{-L}}{\alpha} + \alpha \right) + 2(\DeltaF + \alpha) L\exp\left( - \frac{n \DeltaF^2}{8} \right)   \right).
    \]
\end{lemma}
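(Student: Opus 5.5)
We follow the structure of the proof of \Cref{lemma:regret-lazy-types}: the horizon splits into an exploration part (the initial test of $\lazy$ and the $L-1$ binary-search phases, at most $nL$ rounds) and an exploitation part (the remaining rounds, where the robustified contract $\tilde p$ is played). Let $p^\star = p^\star_{\bar\theta}$ be an optimal contract for $\bar\theta$, which exists by \Cref{lemma:agent-behav} since $\bar\theta\le\lazy$.

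\emph{Exploration.} Every contract played here is either $p^\star_{\lazy}$ or $p^\star_{\hat\theta_l}$ with $\hat\theta_l\in[0,\lazy]$, each optimal for some type. Alternatively, if the first test wrongly fires, the null contract $\bar p$ is played for the rest of the horizon. By \Cref{eq:robust-eq-1} the per-round regret of the former is at most $2\DeltaF$, so these rounds contribute at most $2nL\DeltaF$. If the first test fails (probability at most $\exp(-n\DeltaF^2/8)$ by \Cref{lemma:stat-test}), \Cref{eq:robust-eq-3} bounds the per-round regret of $\bar p$ by $\DeltaF$. This is absorbed by the failure term below, since $2L\ge 1$.

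\emph{Exploitation on the good event.} Let $\mathcal{E}=\{\bar\theta\in[s_L,e_L]\}$. On $\mathcal{E}$, $|\bar\theta-\hat\theta_L|\le 2^{-L}\lazy\le 2^{-L}$, and the committed contract is $\tilde p=p^\star_{\hat\theta_L}+\alpha(r-p^\star_{\hat\theta_L})$. We must show $U^P(p^\star,\bar\theta)-U^P(\tilde p,\bar\theta)\le 2(2^{-L}/\alpha+\alpha)$. This is the \textbf{main obstacle}: the types are close, but the action induced by $p^\star_{\hat\theta_L}$ at $\bar\theta$ may differ. The plan is to invoke the robustification \Cref{lemma:robust-contract}. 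The key fact is that $p^\star_{\hat\theta_L}$ implements $a_1$ for type $\hat\theta_L$, so for $\bar\theta$ it is $\epsilon$-IC with $\epsilon=|\bar\theta-\hat\theta_L|C_{a_1}\le 2^{-L}$. Hence $p^\star_{\hat\theta_L}$ yields, under $\epsilon$-IC behaviour at $\bar\theta$, utility $F_{a_1}^\top(r-p^\star_{\hat\theta_L})$. The robustification lemma then guarantees $U^P(\tilde p,\bar\theta)\ge F_{a_1}^\top(r-p^\star_{\hat\theta_L}) - \epsilon/\alpha-\alpha$. It then remains to compare $F_{a_1}^\top(r-p^\star_{\hat\theta_L})$ with $\opt$ for $\bar\theta$, i.e.\ with $F_{a_1}^\top r-f(\bar\theta C_{a_1})$. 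If $\hat\theta_L\le\bar\theta$ this holds by monotonicity of $f$ (\Cref{lemma:f-lambda-properties}). Otherwise, the loss is $f(\hat\theta_L C_{a_1})-f(\bar\theta C_{a_1})$, which does not vanish cleanly because $f$ is not Lipschitz. Here we would first try to run the robustification argument directly from the optimal contract $p^\star_{\bar\theta}$, viewing $\tilde p$ as a perturbation relative to it. Failing that, we would use the tolerance built into \Cref{lemma:robust-contract} (with $\epsilon$-IC sets $\mathcal{B}_\epsilon$) to absorb an extra $\epsilon/\alpha$, which explains the factor $2$ in $2(2^{-L}/\alpha+\alpha)$. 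Summing over at most $T$ rounds gives $2T(2^{-L}/\alpha+\alpha)$.

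\emph{Exploitation on the bad event.} On $\mathcal{E}^\complement$, $\tilde p$ is still a robustified version of some $p^\star_\theta$ with $\theta\in[0,\lazy]$, so \Cref{eq:robust-eq-2} gives per-round regret at most $2\DeltaF+\alpha\le 2(\DeltaF+\alpha)$. By \Cref{lemma:correct-exploration}, $\Prob(\mathcal{E}^\complement)\le 2L\exp(-n\DeltaF^2/8)$, so the contribution over at most $T$ rounds is at most $4T(\DeltaF+\alpha)L\exp(-n\DeltaF^2/8)$. Adding the three pieces via the law of total expectation yields the stated bound.
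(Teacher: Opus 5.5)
Your exploration accounting and your handling of the bad event match the paper. On $\mathcal{E}^\complement$ the per-round regret is at most $2\DeltaF+\alpha$ whether $\bar p$ or a robustified $p^\star_\theta$ is committed to, and $\Prob(\mathcal{E}^\complement)\le 2L\exp(-n\DeltaF^2/8)$ already covers a wrong first test, so nothing extra needs to be absorbed.

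The gap is at the step you flag as the main obstacle. On $\mathcal{E}$ with $\hat\theta_L>\bar\theta$ you need $F_{a_1}^\top(r-p^\star_{\hat\theta_L})\ge U^P(p^\star_{\bar\theta},\bar\theta)-(2^{-L}C_{a_1}/\alpha+\alpha)$, that is, control of $f(\hat\theta_L C_{a_1})-f(\bar\theta C_{a_1})$. Neither fallback provides it. \Cref{lemma:robust-contract} only compares a contract with its own robustified version, under an $\epsilon$-IC action for that same contract. It never relates two different base contracts $p^\star_{\hat\theta_L}$ and $p^\star_{\bar\theta}$, so there is no tolerance in it that can absorb this difference. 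Also, $\tilde p$ is the robustification of $p^\star_{\hat\theta_L}$, not a perturbation of $p^\star_{\bar\theta}$. Saying an extra $\epsilon/\alpha$ explains the factor $2$ matches the constant but is not an argument.

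The missing idea is a comparison at type $\hat\theta_L$. Use the optimality of $p^\star_{\hat\theta_L}$ for type $\hat\theta_L$ against the never-played competitor $p^\star_{\bar\theta}+\alpha(r-p^\star_{\bar\theta})$. Then apply the robustification lemma a second time, now at type $\hat\theta_L$ with base contract $p^\star_{\bar\theta}$:
\[
U^P(p^\star_{\hat\theta_L},\hat\theta_L)\ \ge\ U^P\bigl(p^\star_{\bar\theta}+\alpha(r-p^\star_{\bar\theta}),\hat\theta_L\bigr)\ \ge\ U^P(p^\star_{\bar\theta},\bar\theta)-\Bigl(\tfrac{2^{-L}C_{a_1}}{\alpha}+\alpha\Bigr).
\]
The second inequality holds because $b(\bar\theta,p^\star_{\bar\theta})$ is $2^{-L}C_{a_1}$-IC for $\hat\theta_L$ under $p^\star_{\bar\theta}$.

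Chain this with your first robustification step, which gives $U^P(\tilde p,\bar\theta)\ge U^P(p^\star_{\hat\theta_L},\hat\theta_L)-(2^{-L}C_{a_1}/\alpha+\alpha)$. This yields the per-round bound $2(2^{-L}/\alpha+\alpha)$, and this is where the factor $2$ actually comes from. It also removes the need for the case split on the sign of $\hat\theta_L-\bar\theta$ and for any regularity of $f$. In fact, since both types lie in $[0,\lazy]$, the display shows $f(\hat\theta_L C_{a_1})-f(\bar\theta C_{a_1})\le 2^{-L}C_{a_1}/\alpha+\alpha$. That is exactly the control you believed was unavailable because $f$ is not Lipschitz.
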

\begin{proof}
    Denote by $p^\star$ the optimal contract for the unknown true type $\bar \theta$. It holds that:
    \[
    R_T = \sum_{t=1}^{nL} \left( U^P(p^\star, \bar \theta) - \E [ U^P(p_t, \bar \theta)] \right) + \sum_{t={nL+1}}^T  \left( U^P(p^\star, \bar \theta) - \E [ U^P(p_t, \bar \theta)] \right).
    \]
    Focus on the first $nL$ steps and observes that, by definition, \Cref{alg:e2c} plays an optimal contract for some type $\theta \in [0,1]$. Thus, we can apply \Cref{lemma:cost-of-exploration} to obtain:
    \[
    \sum_{t=1}^{nL} \left( U^P(p^\star, \bar \theta) - \E [ U^P(p_t, \bar \theta)] \right) \le 2nL \DeltaF.
    \]
    Next, fix any $t \ge n+1$ and analyze $U^P(p^\star, \bar \theta) - \E [ U^P(p_t, \bar \theta)]$ using the law of total expectation. Precisely, we will use as event $\mathcal{E} = \{ \bar \theta \in [s_L, e_L] \}$. 
    
    We first analyze the case when $\mathcal{E}$ holds. In this case, we first observe that $|\bar \theta - \hat \theta_L| \le \frac{e_L - s_L}{2} \le 2^{-L}$. 
    
    Furthermore, \Cref{alg:e2c} plays $\tilde p = p^\star_{\hat \theta_L} + \alpha (r-p^\star_{\hat \theta_L})$ and we have that:
    \begin{align*}
    U^P(\tilde p, \bar \theta) &  \ge U^P(p^\star_{\hat \theta_L}, b(\hat \theta_L, p^\star_{\hat \theta_L}) - \left( \frac{2^{-L} C_{a_1}}{\alpha} + \alpha \right) \tag{\Cref{lemma:robust-contract}}
    \\ 
    & = U^P(p^\star_{\hat \theta_L}, \hat \theta_L) - \left( \frac{2^{-L} C_{a_1}}{\alpha} + \alpha \right) \\ & 
    \ge U^P(p^\star + \alpha( r-p^\star) , \hat \theta_L)- \left( \frac{2^{-L} C_{a_1}}{\alpha} + \alpha \right) \tag{$p^\star_{\hat \theta_L}$ is opt. for $\hat \theta_L$} \\
    & \ge U^P(p^\star, b(\bar \theta, p^\star)) - 2 \left( \frac{2^{-L} C_{a_1}}{\alpha} + \alpha \right) \tag{\Cref{lemma:robust-contract}}
    \end{align*}
    In particular, when we used \Cref{lemma:robust-contract}, we have used that types $\theta_1, \theta_2$ for which $|\theta_1 - \theta_2| \le \epsilon$ satisfy $b(\theta_1, p) \in \mathcal{B}_{C_{a_1} \cdot \epsilon}(\theta_2, p)$ for any contract $p$. Thus, under $\mathcal{E}$, we obtain that:
    \[
    U^P(p^\star, \bar \theta) - \E [ U^P(p_t, \bar \theta) | \mathcal{E}]  \le 2 \left( \frac{2^{-L} C_{a_1}}{\alpha} + \alpha \right) \le 2 \left(  \frac{2^{-L}}{\alpha} + \alpha\right).
    \]
    
    Next, suppose that $\mathcal{E}$ does not hold. Then, we split the analysis under two events $\mathcal{E}_1^\complement$ and $\mathcal{E}_2^\complement$ which are disjoint and that are such that $\mathcal{E}^\complement = \mathcal{E}_1^\complement \cup \mathcal{E}_2^\complement$. The first one, $\mathcal{E}_1^\complement$ is when \Cref{alg:e2c} plays the null contract as the first test on lazy types failed. In this case, \Cref{lemma:cost-of-exploration} yields:
    \[
    U^P(p^\star, \bar \theta) - \E [ U^P(p_t, \bar \theta) | \mathcal{E}_1^\complement]  \le 2 \DeltaF.
    \]
    In the second case, \ie on $\mathcal{E}_2^\complement$, instead, we only know that \Cref{alg:e2c} plays $\tilde p = p^\star_{\theta} + \alpha (r-p^\star_{\theta})$ for some $\theta \in [0, \lazy]$.
    Hence, we have that:
    \begin{align*}
        U^P(p^\star, \bar \theta) - \E [ U^P(p_t, \bar \theta) | \mathcal{E}_2^\complement] & \le \sup_{\theta \in [0, \lazy]} U^P(p^\star, \bar \theta) - U^P(p^\star_\theta + \alpha(r-p^\star_\theta), \bar \theta) \\
        & \le 2 \DeltaF + \alpha. \tag{\Cref{lemma:cost-of-exploration}}
    \end{align*}

    Combining these results, we have that, for any $t \ge nL+1$:
    \begin{align*}
    \left( U^P(p^\star, \bar \theta) - \E [ U^P(p_t, \bar \theta)] \right) & \le  2 \left( \frac{2^{-L}}{\alpha} + \alpha \right) + (2\DeltaF + \alpha) \left( \Prob(\mathcal{E}^\complement_1) + \Prob(\mathcal{E}^\complement_2) \right) \\
    & = 2 \left( \frac{2^{-L} }{\alpha} + \alpha \right) + (2\DeltaF + \alpha)  \Prob(\mathcal{E}^\complement)
    \end{align*}

    Combining these results with \Cref{lemma:correct-exploration}, we have that:
    \[
    R_T \le 2nL \DeltaF + 2T \left(  \left( \frac{2^{-L}}{\alpha} + \alpha \right) + 2(\DeltaF + \alpha) L\exp\left( - \frac{n \DeltaF^2}{8} \right)   \right),
    \]
    which concludes the proof.
\end{proof}

\subsubsection{Regret when Actions are Similar}\label{subsubsec:sim-action}

\begin{lemma}[Regret when $F$'s are close]\label{lemma:regret-close}
Suppose $\DeltaF \le \beta$. It holds that $R_T \le 2 T \beta$.
\end{lemma}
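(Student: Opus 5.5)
When $\DeltaF \le \beta$, \Cref{alg:e2c} enters Line~\ref{line:indistinguishable} and plays the null contract $\bar p = (0,\dots,0)$ in every round. The regret is therefore deterministic:
\[
R_T = \sum_{t=1}^T \bigl( U^P(p^\star, b(\bar\theta,p^\star)) - U^P(\bar p, b(\bar\theta,\bar p)) \bigr),
\]
where $p^\star$ is an optimal contract for the hidden type $\bar\theta$. Such a $p^\star$ exists because the maximum is attained, by \Cref{lemma:opt-rewrite} applied to the constant sequence. The plan is to bound each summand by $\DeltaF$ (or $2\DeltaF$), which is at most $2\beta$, and then sum over the $T$ rounds.

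For the per-round bound, I would invoke \Cref{eq:robust-eq-3} of \Cref{lemma:cost-of-exploration} with $\theta_1 = \bar\theta$ and $p_1^\star = p^\star$. This gives directly
\[
U^P(p^\star, b(\bar\theta,p^\star)) - U^P(\bar p, b(\bar\theta,\bar p)) \le \DeltaF \le \beta.
\]
The lemma requires $F_{a_1}^\top r > F_{a_2}^\top r$, which is the standing assumption of this section; the complementary case has zero regret by \Cref{lemma:easy-case}.

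To be safe, I would also reprove the per-round bound directly by cases on the action $p^\star$ induces. If $p^\star$ induces $a_2$, its utility is at most $F_{a_2}^\top r$, and $\bar p$ attains at least this. If $p^\star$ induces $a_1$, its utility is at most $F_{a_1}^\top r$, while $\bar p$ yields at least $F_{a_2}^\top r$. The gap is then at most $(F_{a_1}-F_{a_2})^\top r \le \|F_{a_1}-F_{a_2}\|_1 \|r\|_\infty \le \DeltaF$, using $r\in[0,1]^m$.

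Summing over $t\in[T]$ gives $R_T \le T\beta \le 2T\beta$. There is no real obstacle here; the only care needed is to note that the algorithm's choice is deterministic in this branch, so no expectation or concentration argument is required.
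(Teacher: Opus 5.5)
Your proposal is correct and follows the paper's proof. In this branch the algorithm deterministically plays $\bar p=(0,\dots,0)$, and the per-round gap is bounded by \Cref{eq:robust-eq-3} of \Cref{lemma:cost-of-exploration}, whose proof is essentially your case analysis on the action induced by the comparator; this yields the slightly sharper $R_T \le T\DeltaF \le T\beta$, and since your direct argument works for every comparator $p$, you do not actually need attainment of the maximum.
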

\begin{proof}
    When $\DeltaF \le \beta$, \Cref{alg:e2c} simply plays the null contract until the end of the horizon. The result is then a straightforward application of \Cref{lemma:cost-of-exploration}.
\end{proof}

\subsubsection{Combining the Results}\label{subsubsec:combine}

We are now ready to combine all the results to prove \Cref{thm:ub2}.

\begin{proof}[Proof of \Cref{thm:ub2}]
    We start by combining \Cref{lemma:regret-lazy-types}, \Cref{lemma:regret-non-lazy} and \Cref{lemma:regret-close}. Observe that the upper bound on the regret in \Cref{lemma:regret-lazy-types} is always larger than the one of \Cref{lemma:regret-non-lazy} for any choice of the parameters $\alpha, L$ and $n$. Hence, it is sufficient to optimize the parameters to minimize the upper bound in \Cref{lemma:regret-non-lazy} and \Cref{lemma:regret-close}.

    We choose $\beta$, $\alpha, L$ and $n$ as follows:
    \begin{align}
        & \beta = \frac{1}{\sqrt{T}} \\
        & L = \min\left\{ T , \lceil \log_2(4T) \rceil \right\} \label{eq:l-choiche} \\
        & n = \min \left\{ \Big\lfloor \frac{T}{L} \Big\rfloor, \max \left\{1,   \Big\lfloor \frac{8}{\DeltaF^2} \log\left( T \DeltaF^2 \right) \Big\rfloor  \right\} \right\} \label{eq:n-choiche} \\
        & \alpha = \min\left\{1,  \sqrt{2^{-L}} \right\} \label{eq:alpha-choiche} 
    \end{align}

    With this choice, we first observe that when $\DeltaF \le \frac{1}{\sqrt{T}}$, \Cref{lemma:regret-close} directly leads to $R_T \le \sqrt{T}$. Thus, we now focus on $\DeltaF > \frac{1}{\sqrt{T}}$ and we prove that \Cref{lemma:regret-non-lazy} yields a $\widetilde{\mathcal{O}}(\sqrt{T})$ bound for the regret, thus concluding the proof. We report here the result of \Cref{lemma:regret-non-lazy} for completeness. We have that $ R_T \le h_1 + h_2 + h_3 + h_4 $, where the different terms are as follows
    \begin{align*}
        & h_1 = 2nL \DeltaF, \quad  h_2 = 2T \left( \frac{2^{-L}}{\alpha} + \alpha \right) \\
        & h_3 = 4T \DeltaF L\exp\left( - \frac{n \DeltaF^2}{8} \right) \\
        & h_4 = 4T \alpha L\exp\left( - \frac{n \DeltaF^2}{8} \right)
    \end{align*}

    We proceed by showing that each $h_i$ belongs to $\widetilde{\mathcal{O}}(\sqrt{T})$.

    \paragraph{Upper bound on $h_1$}

    Suppose that $n = \lfloor T/L \rfloor$, which implies that 
    \[
    T \le L + L\max \left\{ {1, \Big\lfloor \frac{8}{\DeltaF^2} \log\left( T \DeltaF^2 \right) \Big\rfloor } \right\}.
    \]
    Furthermore, observe that this choice of $n$ ensures that $h_1 \le 2 T \DeltaF$.  Since $L \in \mathcal{O}(\log(T)$) and $\DeltaF \in [T^{-1/2}, 1]$, we have that $h_1 \in \widetilde{\mathcal{O}}(\sqrt{T})$.
    Next, suppose that $n = 1$. Here, from $L \in \mathcal{O}(\log(T))$ we obtain $h_1 \le L \in \widetilde{O}(\sqrt{T})$.
    Finally, suppose that $n = \Big\lfloor \frac{8}{\DeltaF^2} \log\left( T \DeltaF^2 \right) \Big\rfloor$. In this case,
    \[
    h_1  \in \mathcal{O}\left(L \frac{1}{\DeltaF} \log(T)\right) \le \mathcal{O}\left(L \sqrt{T} \log(T)\right) 
    \]
    Since $L \in \mathcal{O}(\log(T))$, $h_1 \in \widetilde{O}(\sqrt{T})$.

    \paragraph{Upper bound on $h_2$}
    We proceed by cases. Suppose that $\alpha = 1$ and $L = T$. In this case, $\alpha \le  \sqrt{2^{-L}}$, and we have that $h_2 \le 4 T 2^{-T} \in \widetilde{\mathcal{O}}(\sqrt{T})$.
    Next, suppose that $\alpha = 1$ and $L = \lceil \log_2(4T)\rceil$. Here, we have that $h_2 \le 4T  2^{-\lceil \log_2(4T)\rceil} \in \widetilde{\mathcal{O}}(\sqrt{T})$.
    Next, when $\alpha = \sqrt{2^{-L}}$ and $L = T$, $h_2 = 4T \sqrt{2^{-T}} \in \widetilde{\mathcal{O}}(\sqrt{T})$. 
    Finally, when $\alpha = \sqrt{2^{-L}}$ and $L = \lceil \log_2(4T)\rceil$, $h_2 = 4T \sqrt{2^{-\lceil \log_2(4T)\rceil} } \in \widetilde{\mathcal{O}}(\sqrt{T})$.

    \paragraph{Upper bound on $h_3$}
    First, suppose that  $n = \lfloor T/L \rfloor$, which implies that 
    \[
    T \le L + L\max \left\{ {1, \Big\lfloor \frac{8}{\DeltaF^2} \log\left( T \DeltaF^2 \right) \Big\rfloor } \right\}.
    \]
    We observe that $h_3$ reduces to:
    \begin{align*}
    h_3 & = 2  T \DeltaF L \exp \left( -\frac{T \DeltaF^2}{8L} \right) & \\
    & \le 2  T \DeltaF L \exp \left( -\frac{1}{8L} \right)      \tag{$\DeltaF \ge 1/\sqrt{T}$} \\
    & \le 2T \DeltaF L \\
    & \le 2 \DeltaF L \left( L + L\max \left\{ {1, \Big\lfloor \frac{8}{\DeltaF^2} \log\left( T \DeltaF^2 \right) \Big\rfloor } \right\} \right) \\
    & \le 2 L^2 + 2L^2\max \left\{ {1, \Big\lfloor \frac{8}{\DeltaF} \log\left( T \DeltaF^2 \right) \Big\rfloor } \right\} \\
    & \le 2 L^2 + 2L^2\max \left\{ {1, \Big\lfloor 8 \sqrt{T} \log\left( T  \right) \Big\rfloor } \right\} \tag{$\DeltaF \ge 1 / \sqrt{T}$}
    \end{align*}
    Since $L \in \mathcal{O}(\log(T))$, we obtained that $h_3 \in \widetilde{\mathcal{O}}(\sqrt{T})$.

    Next, suppose that $n=1$. In this case, $1 \ge \frac{8}{\DeltaF^2} \log\left( T\DeltaF^2 \right)$, which implies that 
    \[
    T \DeltaF \le \frac{\exp(\DeltaF^2 / 8)}{\DeltaF} \le \frac{1}{\DeltaF}.
    \]
     Thus, we have that:
    \begin{align*}
        h_3 & = 2T \DeltaF L \exp \left( -\frac{n \DeltaF^2}{8} \right) \\
        & \le  2 T \DeltaF L  \\
        & \le \frac{2L}{\DeltaF} \\
        & \le 2L \sqrt{T} \tag{$\DeltaF \ge  1 / \sqrt{T}$}
    \end{align*}
    Since $L \in \mathcal{O}(\log(T))$, we have $h_3 \in \widetilde{\mathcal{O}}(\sqrt{T})$.

    Finally, consider the last case where $n$  is attained at $\Big\lfloor \frac{8}{\DeltaF^2} \log\left( T \DeltaF^2 \right) \Big\rfloor$. In this case:
    \begin{align*}
        h_3 & = 2 T \DeltaF L \exp \left( - \log(T\DeltaF^2) \right) \\
        & = 2 T \DeltaF L \frac{1}{T \DeltaF^2} \\
        & = \frac{2L}{\DeltaF} \\
        & \le 2L \sqrt{T}.\tag{$\DeltaF \ge  1 / \sqrt{T}$}
    \end{align*}
    Since $L \in \mathcal{O}(\log(T))$, we have $h_3 \in \widetilde{\mathcal{O}}(\sqrt{T})$.

    \paragraph{Upper bound on $h_4$}
    Here, we aim to upper bound $\alpha$ with $\DeltaF$. This allows to directly exploit the result for $h_3$ to show that this term belongs to $\widetilde{\mathcal{O}}(\sqrt{T})$.
    To this end, observe that $\alpha^2 \le 2^{-L} \le \DeltaF$ holds if and only if $L \ge \log_2(1 / \DeltaF^2)$. Since $\DeltaF \ge 1/\sqrt{T}$, however, we have that this is equivalent to $L \ge \log_2(T)$, which is always true for our choice of $L$.

     This concludes the proof.
\end{proof}

\section{Proof of \Cref{th:LB_single}}\label{app:lbsingle}

\begin{proof}[Proof of \Cref{th:LB_single}]
    We build two instances $\I_0$ and $\I_1$ as follows. Both instances have two actions with $c_{a_1}=1$ and $c_{a_2}=0$. Moreover, in both instances, we have two outcomes with rewards $r_{\omega_1}=1$ and $r_{\omega_2}=0$. The outcome distribution for $a_1$ is $F_{a_1,\omega_1}=\frac12+4\epsilon$ and $F_{a_1,\omega_2}=\frac12-4\epsilon$, while for $a_2$ are $F_{a_2,\omega_1}=F_{a_2,\omega_2}=\frac12$. The two instances only differ by the type $\theta$ and in particular, in the first instance $\theta_0=\frac{40\epsilon}{\frac1\epsilon+8}$ and $\theta_1=\frac{24\epsilon}{\frac1\epsilon+8}$. Note that in both instances it is optimal to put a zero payment on the second outcome $\omega_2$. Let $x=p_{\omega_1}$ be the payment on the first outcome.

    Note that an agent of type $\theta$ plays action $a_1$ if and only if $(\frac12+4\epsilon)x-\theta\ge \frac x2$, and thus if and only if $x\ge \frac{\theta}{4\epsilon}$. Thus, in instance $\I_0$, the principal utility and mean of the Bernoulli describing the feedback distribution $\tau(x)$ (of observing outcome $\omega_1$) are:
    
    \begin{minipage}{.47\textwidth}
    \begin{align*}
        U_0^P(x)=\begin{cases}
            \frac12(1-x)&\text{if}\quad x< \frac{10}{\frac1\epsilon+8}\\
            (\frac12+4\epsilon)(1-x)&\text{otherwise}
        \end{cases}
    \end{align*}
    \end{minipage}\hfill\text{and}\hfill
    \begin{minipage}{.47\textwidth}
    \begin{align*}\tau_0(x)=
        \begin{cases}
            \frac12&\text{if}\quad x< \frac{10}{\frac1\epsilon+8}\\
            \frac12+4\epsilon&\text{otherwise}
        \end{cases},
    \end{align*}
    \end{minipage}

    while for instance $\I_1$ are:

    \begin{minipage}{.47\textwidth}
    \begin{align*}
        U_1^P(x)=\begin{cases}
            \frac12(1-x)&\text{if}\quad x<\frac{6}{\frac1\epsilon+8}\\
            (\frac12+4\epsilon)(1-x)&\text{otherwise}
        \end{cases}
    \end{align*}
    \end{minipage}\hfill\text{and}\hfill
    \begin{minipage}{.47\textwidth}
    \begin{align*}\tau_1(x)=
        \begin{cases}
            \frac12&\text{if}\quad x< \frac{6}{\frac1\epsilon+8}\\
            \frac12+4\epsilon&\text{otherwise}
        \end{cases}.
    \end{align*}
    \end{minipage}

    It is clear that the $x\in\{0,\frac{6}{\frac1\epsilon+8},\frac{10}{\frac1\epsilon+8}\}$ are dominating actions in all instances in terms of both feedback and rewards. Then, a standard change-of-measure argument allows us to conclude the lower bound. Define $N_0(T)$ the number of times $x_0=0$ was played up to time $T$, and $N_1(T)$ and $N_2(T)$ the number of times $x_1=\frac{6}{\frac1\epsilon+8}$ and $x_2=\frac{10}{\frac1\epsilon+8}$ was played. Then, there exists an absolute constant $C>0$ such that:
    \[
    \textnormal{KL}(\nu_0,\nu_1)
    \le C\mathbb{E}_{\nu_0}[N_1(T)]\epsilon^2,
    \]
    moreover, direct computations show that for some constant $c>0$
    \[
    U_0^P(x_0)-U_0^P(x_1)\ge c\epsilon, \quad U_1^P(x_1)-U_1^P(x_0)\ge c\epsilon, \quad\text{and}\quad U_1^P(x_1)-U_1^P(x_2)\ge c\epsilon.
    \]
    Hence
    \[
    \E_{\nu_0}[R_T]\ge c\epsilon\E_{\nu_0}[N_1(T)],
    \qquad
    \E_{\nu_1}[R_T]\ge c\epsilon\E_{\nu_1}[T-N_1(T)].
    \]
    Now, if $\mathbb{E}_{\nu_0}[N_1(T)]\ge \tfrac T4$ then $\E_{\nu_0}[R_T]\ge \Omega(\epsilon T)$. On the other hand, if $\mathbb{E}_{\nu_0}[N_1(T)]< \tfrac T4$ then Markov's inequality gives:
    \[
    \mathbb{P}_{\nu_0}\left(N_1(T)>\frac T2\right)\le \frac12,
    \]
    and (with $\epsilon=\Theta(1/\sqrt{T})$), Pinsker's inequality gives \begin{align*}
        \mathbb{P}_{\nu_1}\left(N_1(T)>\frac T2\right)&\le \mathbb{P}_{\nu_0}\left(N_1(T)>\frac T2\right)+TV(\nu_0,\nu_1)\\
        &\le \frac34
    \end{align*} 
    which can be rewritten as $\mathbb{P}_{\nu_1}(N_1(T)\le T/2) \ge 1/4$. Finally, again by Markov's inequality:
    \begin{align*}
    \mathbb{E}_{\nu_1}[R_T]&\ge c\epsilon\mathbb{E}_{\nu_1}[T-N_1(T)]\\
    &\ge c\epsilon\frac{T}{2} \mathbb{P}_{\nu_1}\Big[T-N_1(T)\ge \frac{T}2\Big] \\
    &=\Omega\Big(\epsilon T \mathbb{P}_{\nu_1}\Big[N_1(T)\le \frac{T}2\Big]\Big)\\
    &\ge \Omega(\epsilon T),
    \end{align*}
    concluding the proof.
%
\end{proof}

\section{Useful Results}\label{app:others}

\subsection{The $F_{a_2}^\top r \ge F_{a_1}^\top r$ case}

In this section, we prove that when $F_{a_2}^\top r \ge F_{a_1}^\top r$ holds, there exists a trivial algorithm that yields $R_T = 0$ for any sequence $\{ \theta_t \}_{t=1}^T$.

\begin{lemma}[The $F_{a_2}^\top r \ge F_{a_1}^\top r$ case]\label{lemma:easy-case}
    Suppose that $F_{a_2}^\top r \ge F_{a_1}^\top r$. Then, there exists an algorithm for which $R_T = 0$. 
\end{lemma}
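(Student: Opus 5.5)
The plan is to show that the null contract $\bar p = (0,\dots,0)$ is optimal for every type $\theta \in [0,1]$ simultaneously. Then the trivial algorithm that plays $p_t = \bar p$ at every round achieves $\sum_t U^P(\bar p,\theta_t) = \opt$ for any sequence, and so $R_T = 0$. This mirrors the argument in \Cref{lemma:adv-opt-p1}, but with the roles of the two actions reversed.

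\textbf{Step 1: an upper bound for every contract.} Fix any contract $p \in [0,1]^m$ and any type $\theta$.
\begin{itemize}
\item If $b(\theta,p) = a_2$, then $U^P(p,\theta) = F_{a_2}^\top(r-p) \le F_{a_2}^\top r$, since $p \ge 0$.
\item If $b(\theta,p) = a_1$, then $U^P(p,\theta) = F_{a_1}^\top r - F_{a_1}^\top p \le F_{a_1}^\top r \le F_{a_2}^\top r$, using the hypothesis.
\end{itemize}
Hence $U^P(p,\theta) \le F_{a_2}^\top r$ for every $p$ and $\theta$. Note that this bound does not depend on the tie-breaking rule.

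\textbf{Step 2: the null contract attains the bound.} Under $\bar p$, the agent's utilities are $U^A_\theta(\bar p,a_1) = -\theta C_{a_1}$ and $U^A_\theta(\bar p,a_2) = -\theta C_{a_2}$.
\begin{itemize}
\item If $a_2 \in \mathcal{B}(\theta,\bar p)$, the principal-favoring tie-break picks whichever IC action gives the principal more. Either way the utility is $\max$ over the IC actions of $F_a^\top r$, which is at least $F_{a_2}^\top r$.
\item If $a_2$ is not IC, then $a_1$ is the only IC action and the utility is $F_{a_1}^\top r$. We need this to be at least $F_{a_2}^\top r$, which is where this case needs care.
\end{itemize}

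\textbf{Main obstacle: showing $a_2$ is always IC under $\bar p$.} This is where the convention that an action with zero cost exists (used to fold IR into IC) comes in. Since $C_{a_1} \ge C_{a_2}$, the low-effort action $a_2$ is the cheap one, and we have $-\theta C_{a_2} \ge -\theta C_{a_1}$ for all $\theta \ge 0$. So $a_2 \in \mathcal{B}(\theta,\bar p)$ always, and the second case above never occurs.

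Combining both steps gives $U^P(\bar p,\theta) \ge F_{a_2}^\top r \ge U^P(p,\theta)$ for all $p$ and $\theta$. Summing over $t$ shows that $\bar p$ achieves $\opt$ for any type sequence, so $R_T = 0$.
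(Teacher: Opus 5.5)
Your proof is correct and follows essentially the same route as the paper: it also shows $a_2 \in \mathcal{B}(\theta,\bar p)$ for every $\theta$ via $C_{a_2}\le C_{a_1}$, deduces $U^P(\bar p,\theta)=F_{a_2}^\top r$, and bounds every contract's utility by $F_{a_2}^\top r$ using $p\ge 0$ and the hypothesis. The zero-cost-action convention you mention is not actually needed; $C_{a_1}\ge C_{a_2}$ and $\theta\ge 0$ suffice, and that is exactly what your argument then uses.
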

\begin{proof}
    Let $\bar p = (0, \dots, 0)$ and observe that, for any $\theta \in [0,1]$, we have that 
    \[
    \argmax_{a \in \{ 1,2\}} F_{a}^\top \bar p - \theta \cdot C_a = -\argmin_{a \in \{ 1,2\}} \theta \cdot C_a.
    \]
    Hence, since $C_{a_2} \le C_{a_1}$, $a_2 \in \mathcal{B}(\theta, \bar p)$ for all $\theta \in [0,1]$. Furthermore, since $F_{a_2}^\top r \ge F_{a_1}^\top r$, it also holds that $a_2 \in b(\theta, \bar p)$ for all types $\theta$. As a consequence, we have that $U^P(\bar p, b(\theta,\bar p )) = F_{a_2}^\top r$ for all $\theta \in [0,1]$. Next, we observe that:
    \begin{align*}
        \opt & = \sup_{p \in [0,1]^m} \left( \sum_{t=1}^T \mathds{1} \{ a_1 \in b(\theta_t, p)   \} F_{a_1}^\top (r-p) + \mathds{1} \{ b(\theta_t, p) = a_2  \} F_{a_2}^\top (r-p) \right) \\
        & \le \sup_{p \in [0,1]^m} \left( \sum_{t=1}^T \mathds{1} \{ a_1 \in b(\theta_t, p)   \} F_{a_1}^\top r + \mathds{1} \{ b(\theta_t, p) = a_2  \} F_{a_2}^\top r  \right) \tag{$p \ge 0$}\\
        &  \le  \sum_{t=1}^T F_{a_2}^\top r  \tag{$F_{a_2}^\top r \ge F_{a_1}^\top r$} \\
        & = \sum_{t=1}^T U^P(\theta_t, b(\theta_t, \bar p)) \tag{$U^P(\bar p, b(\theta,\bar p )) = F_{a_2}^\top r$ for all $\theta \in [0,1]$}.
    \end{align*}
    Hence, the constant strategy $\bar p$ is optimal for any sequence $\{ \theta_t \}_{t=1}^T$ thus concluding the proof.  
\end{proof}

\subsection{Concentration Inequality}

We report for completeness a statement of the Hoeffding's inequality for the concentration of i.i.d. random variables with values in $[0,1]$. 

\begin{lemma}[Hoeffding's inequality \citep{boucheron2003concentration}]\label{lemma:hoeffding}
    Consider $X_1, \dots, X_n$ be $n$ i.i.d. random variables in $[0,1]$ whose expected value is $\mu$. Then, for $\delta > 0$, it holds that:
    \begin{align*}
        \mathbb{P}\left( \frac{1}{n} \sum_{i=1}^n X_i - \mu \ge \delta \right) \le \exp\left( - {2n \delta^2} \right).
    \end{align*}
\end{lemma}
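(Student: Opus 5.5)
The statement to prove is Hoeffding's inequality for i.i.d. $[0,1]$-valued variables. I would use the standard Chernoff--Cramér route and bound the moment generating function via Hoeffding's lemma.

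First, fix $s>0$ and apply Markov's inequality to $e^{s\sum_i (X_i-\mu)}$. This gives
\[
\mathbb{P}\Big(\tfrac1n\textstyle\sum_{i=1}^n X_i-\mu\ge\delta\Big)\le e^{-sn\delta}\,\mathbb{E}\big[e^{s\sum_i (X_i-\mu)}\big]=e^{-sn\delta}\big(\mathbb{E}[e^{s(X_1-\mu)}]\big)^n,
\]
where the factorization uses independence and identical distribution.

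Second, prove Hoeffding's lemma: for $Y=X_1-\mu$, which has mean zero and takes values in $[-\mu,1-\mu]$ (an interval of length $1$), we have $\mathbb{E}[e^{sY}]\le e^{s^2/8}$.
\begin{itemize}
\item By convexity of $y\mapsto e^{sy}$ on the interval, bound $e^{sY}$ by the chord through the endpoints. Taking expectations and using $\mathbb{E}Y=0$ gives $\mathbb{E}e^{sY}\le (1-\mu)e^{-s\mu}+\mu e^{s(1-\mu)}=e^{\phi(s)}$, where $\phi(s)=-s\mu+\log(1-\mu+\mu e^{s})$.
\item Check that $\phi(0)=\phi'(0)=0$.
\item Write $\phi''(s)=q(1-q)$ with $q=\mu e^s/(1-\mu+\mu e^s)\in[0,1]$, so $\phi''(s)\le 1/4$.
\item Taylor's theorem with remainder then yields $\phi(s)\le s^2/8$.
\end{itemize}
This is the only step with real content, and I expect it to be the main obstacle. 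The delicate part is the clean variance-type bound $q(1-q)\le1/4$.

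Third, combine the two steps: the probability is at most $\exp(-sn\delta+ns^2/8)$. Optimizing at $s=4\delta$ gives $\exp(-2n\delta^2)$, as claimed. Since the result is classical (\citet{boucheron2003concentration}), one could alternatively just cite it. The plan above gives a self-contained argument.
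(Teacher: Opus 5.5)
Your proposal is correct and complete. The Markov--Chernoff step, the chord bound $\mathbb{E}e^{sY}\le e^{\phi(s)}$, the facts $\phi(0)=\phi'(0)=0$ and $\phi''=q(1-q)\le 1/4$, and the choice $s=4\delta$ all check out; this is the standard proof in the cited reference, and the paper itself gives no proof, only the citation (minor remarks: $q(1-q)\le 1/4$ is immediate rather than delicate, and the factorization needs only independence, not identical distribution).
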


\subsection{Contract Linearization}

\begin{lemma}[Proposition 2.4. in \cite{dutting2021complexity}]\label{lemma:robust-contract}
    Let $\epsilon > 0$ and  $\rho: [0,1] \to \{a_1, a_2 \}$ be a generic mappings from types to actions, and let $p \in [0,1]^m$. Suppose that $\rho(\theta) \in \mathcal{B}_\epsilon(\theta,p)$ holds for all $\theta$. Then, let $\tilde p = p + \alpha(r-p)$ for some $\alpha > 0$. It holds that:
    \[
    U^P(\tilde p, b(\theta, \tilde p)) \ge U^P(p, \rho(\theta)) - \left( \frac{\epsilon}{\alpha} + \alpha \right).
    \]
\end{lemma}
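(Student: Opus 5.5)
Since $\tilde p=(1-\alpha)p+\alpha r$, the agent's utility under $\tilde p$ is a perturbation that rewards actions with high expected reward, and this is what the plan exploits. Fix a type $\theta$ and write $a=\rho(\theta)$ and $a'=b(\theta,\tilde p)$. The argument combines two inequalities: the $\epsilon$-IC property of $a$ under $p$, and exact IC of $a'$ under $\tilde p$.

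First I record the identity
\[
U^A_\theta(\tilde p,x)=(1-\alpha)F_x^\top p+\alpha F_x^\top r-\theta C_x .
\]
By optimality of $a'$ under $\tilde p$, we have $U^A_\theta(\tilde p,a')\ge U^A_\theta(\tilde p,a)$. Rearranging this gives
\[
\alpha\,(F_{a'}-F_a)^\top r \;\ge\; (1-\alpha)(F_a-F_{a'})^\top p+\theta(C_{a'}-C_a).
\]
Next I use $a\in\mathcal{B}_\epsilon(\theta,p)$, which gives $F_a^\top p-\theta C_a\ge F_{a'}^\top p-\theta C_{a'}-\epsilon$, that is, $\theta(C_{a'}-C_a)\ge (F_{a'}-F_a)^\top p-\epsilon$. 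Substituting this yields
\[
\alpha (F_{a'}-F_a)^\top r\ge -\alpha (F_a-F_{a'})^\top p-\epsilon,
\]
or equivalently $(F_{a'}-F_a)^\top (r-p)\ge -\epsilon/\alpha$. Thus the principal's value of $a'$ under the original payments is at least that of $a$, minus $\epsilon/\alpha$.

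Finally I convert this into the principal's utility under $\tilde p$:
\[
U^P(\tilde p,a')=F_{a'}^\top(r-\tilde p)=(1-\alpha)F_{a'}^\top(r-p)\ge F_{a'}^\top(r-p)-\alpha,
\]
where the last step uses $F_{a'}^\top(r-p)\le 1$. Combining with the previous step gives $U^P(\tilde p,a')\ge F_a^\top(r-p)-\epsilon/\alpha-\alpha=U^P(p,\rho(\theta))-(\epsilon/\alpha+\alpha)$.

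The main obstacle is the bookkeeping of the $(1-\alpha)$ factors in the rearrangement, so that the $\epsilon$-slack ends up divided by exactly $\alpha$. Two further points need checking. First, $\tilde p\in[0,1]^m$, which holds because $\tilde p$ is a convex combination of $p,r\in[0,1]^m$ when $\alpha\le1$; the case $\alpha>1$ is trivial, since the claimed bound is then negative. Second, tie-breaking is harmless, because the argument applies to any maximizer $a'$, and in particular to $b(\theta,\tilde p)$.
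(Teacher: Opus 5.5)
Your proof is correct and is essentially the standard argument behind the cited Proposition~2.4 (the paper itself only cites it and gives no proof): exact IC of $a'$ under $\tilde p$ combined with $\epsilon$-IC of $a$ under $p$ yields $(F_{a'}-F_a)^\top(r-p)\ge-\epsilon/\alpha$, and the factor $(1-\alpha)$ in $r-\tilde p=(1-\alpha)(r-p)$ costs at most $\alpha$. One minor point: your side remark that $\alpha>1$ is trivial ``because the bound is negative'' is not a valid justification, since $U^P(\tilde p,a')=(1-\alpha)F_{a'}^\top(r-p)$ can also be negative. It is, however, unnecessary, because your main algebra never uses $\alpha\le 1$; the same algebra also works for $\epsilon=0$, which is how the paper applies the lemma.
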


\subsection{Optimization of Upper Semicontinuous Functions}

We recall the definition of upper semicontinuity. Assume that $(X, d)$ is a metric space and consider $f: X \to \Reals$. Then, $f$ is upper semicontinuous at $x_0 \in X$ if for every $\epsilon > 0$ there exists $\delta > 0$ such that $f(x) < f(x_0) + \epsilon$ for all $x$ such that $d(x, x_0) < \delta$. We now state a well-known result for optimization of upper semicontinuous functions.

\begin{theorem}\label{thm:opt-usc}
    If $C$ is a compact set, and $f: C \to \Reals$ is upper semicontinuous, then $f$ attains a maximum on $C$.
\end{theorem}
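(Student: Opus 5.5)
This is the classical extreme-value theorem for upper semicontinuous functions. I would prove it in two steps: first show $f$ is bounded above on $C$, then show the supremum is attained, using sequential compactness (available since $C$ lives in a metric space $(X,d)$) together with the $\epsilon$--$\delta$ definition of upper semicontinuity recalled just above.

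\textbf{Boundedness.} Suppose, for contradiction, that $f$ is not bounded above on $C$. Then there are points $x_n \in C$ with $f(x_n) > n$. By compactness of $C$ in a metric space, some subsequence $x_{n_k}$ converges to a point $x_0 \in C$. Apply upper semicontinuity at $x_0$ with $\epsilon = 1$: there is $\delta>0$ such that $f(x) < f(x_0)+1$ whenever $d(x,x_0)<\delta$. For all large $k$ we have $d(x_{n_k},x_0)<\delta$, so $f(x_{n_k}) < f(x_0)+1$. This contradicts $f(x_{n_k}) > n_k \to \infty$. Hence $s := \sup_{C} f$ is a finite real number, using that $C$ is nonempty, which is implicit in the statement.

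\textbf{Attainment.} Choose $y_n \in C$ with $f(y_n) > s - 1/n$. Extract a subsequence $y_{n_k}$ converging to some $y \in C$. Fix $\epsilon>0$. Upper semicontinuity at $y$ gives $f(y_{n_k}) < f(y)+\epsilon$ for all large $k$. Combining,
\[
s - \frac{1}{n_k} < f(y_{n_k}) < f(y) + \epsilon .
\]
Letting $k\to\infty$ yields $s \le f(y)+\epsilon$, and since $\epsilon$ is arbitrary, $f(y)\ge s$. Together with $f(y)\le s$, this shows $f(y)=s$, so the maximum is attained at $y$.

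The only point needing care is the equivalence of compactness with sequential compactness, which holds because $(X,d)$ is a metric space. In the paper's application the domain is the interval $[0,(F_{a_1}-F_{a_2})^\top r]$, where Bolzano--Weierstrass suffices directly. No other step should be an obstacle.
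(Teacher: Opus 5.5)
Your proof is correct. The boundedness and attainment steps each apply the paper's $\epsilon$--$\delta$ definition of upper semicontinuity exactly where it is needed. You also rightly flag the two implicit hypotheses: $C$ is nonempty, and $C$ sits in a metric space, so compactness gives convergent subsequences.

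The paper does not prove this statement. It records it as a well-known fact and invokes it in the proof of \Cref{lemma:opt-rewrite} on the interval $[0,(F_{a_1}-F_{a_2})^\top r]$. So your sequential-compactness argument is simply the standard proof one would supply.

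For comparison, the other textbook route uses open covers and needs no metric. Upper semicontinuity makes every set $\{x\in C: f(x)<c\}$ open. The sets $\{f<n\}$, $n\in\Naturals$, cover $C$, so a finite subcover gives boundedness. If $s=\sup_C f$ were not attained, the increasing open sets $\{f<s-1/n\}$ would cover $C$. One of them would then equal $C$, giving $\sup_C f\le s-1/N<s$, a contradiction. That version works in any topological space. Yours is more concrete and is all that the paper's one-dimensional application requires.
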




\end{document}